\newif\ifshowcomments
\newcommand{\mynote}[2]{\fbox{\bfseries\sffamily\scriptsize{#1}}
 {\small$\blacktriangleright$\textsf{\emph{#2}}$\blacktriangleleft$}}
\newcommand{\mynote}[2]{}
\algnewcommand{\LineComment}[1]{\State \(\triangleright\) #1}
\newtheorem{observation}{Observation}
\newtheorem{definition}{Definition}
\newtheorem{Lemma}{Lemma}
\newtheorem{theorem}{Theorem}
\def\BibTeX{{\rm B\kern-.05em{\sc i\kern-.025em b}\kern-.08em
    T\kern-.1667em\lower.7ex\hbox{E}\kern-.125emX}}
\begin{document}

\title{Byzantine Generalized Lattice Agreement
%\thanks{This work has been partially funded by...}
}

\author{Giuseppe Antonio Di Luna$^{*}$ , Emmanuelle Anceaume$^\dag$, Leonardo Querzoni$^{*}$ \\
\small
*: DIAG, Sapienza University, Roma, Italia.  \texttt{\{diluna,querzoni\}@diag.uniroma1.it}\\
\small
$\dag$: CNRS, Univ Rennes, Inria, IRISA, France. \texttt{emmanuelle.anceaume@irisa.fr}\\
}
%
%\author{Giuseppe Antonio Di Luna \footnote{ Diag, Sapienza University, g.a.diluna@gmail.com}, Emmanuelle Anceaume\footnote{CNRS, Univ Rennes, Inria, IRISA, emmanuelle.anceaume@irisa.fr}, Leonardo Querzoni  \footnote{ Diag, Sapienza University, querzoni@diag.uniroma1.it}}
%\input{preamble}
\date{}
\maketitle
\begin{abstract}

%This paper presents Wait Till Safe, a wait-free algorithm to solve the Lattice Agreement problem in asynchronous distributed systems where processes may exhibit Byzantine behaviors. It then introduces a Generalized version of the algorithm that allows processes to propose and decide an unbounded sequence of values. The paper further proposes the construction of a Replicated State Machine for objects with commuting update operations that guarantees both linearizability and progress in asynchronous environments. The main achievement of this construction is to tolerate Byzantine failures, which to the best of our knowledge has never been achieved before.

%\color{red}

The paper investigates the Lattice Agreement (LA) problem in asynchronous systems. In LA each process proposes an element $e$ from a predetermined lattice, and has to decide on an element $e'$ of the lattice such that $e \leq e'$.
Moreover, decisions of different processes have to be comparable (no two processes can decide two elements $e$ and $e'$ such that $(e \not\leq e') \land (e' \not\leq e)$).

It has been shown that Generalized LA (i.e., a version of LA proposing and deciding on sequences of values) can be used to build a Replicated State Machine (RSM) with commutative update operations.
The key advantage of LA and Generalized LA is that they can be solved in asynchronous systems prone to crash-failures (this is not the case with standard Consensus).  

In this paper we assume Byzantine failures.
We propose the Wait Till Safe (WTS) algorithm for LA, and we show that its resilience to $f \leq (n-1)/3$ Byzantines is  optimal.
We then generalize WTS obtaining a Generalized LA algorithm, namely GWTS. We use GWTS to build a RSM with commutative updates. Our RSM works in asynchronous systems and tolerates $f \leq (n-1)/3$ malicious entities.
All our algorithms use the minimal assumption of authenticated channels.
When the more powerful public signatures are available, we discuss how to improve the message complexity of our results (from quadratic to linear, when $f={\cal O}(1)$).
At the best of our knowledge this is the first paper proposing a solution for Byzantine LA that works on any possible lattice, and it is the first work proposing a Byzantine tolerant RSM built on it.  
\color{black}
\end{abstract}

\section{Introduction} % (fold)

State machine replication (RSM) is today the foundation of many cloud-based highly-available products: it allows some service to be deployed such to guarantee its correct functioning despite possible faults. In RSM, clients issue operation requests to a set of distributed processes implementing the replicated service, that, in turn, run a protocol to decide the order of execution of incoming operations and provide clients with outputs. Faults can be accidental (e.g. a computer crashing due to a loss of power) or have a malicious intent (e.g. a compromised server). Whichever is the chosen fault model, RSM has proven to be a reliable and effective solution for the deployment of dependable services. RSM is usually built on top of a distributed Consensus primitive that is used by processes to agree on the order of execution of requests concurrently issued by clients. The main problem with this approach is that Consensus is impossible to achieve deterministically in a distributed settings if the system is asynchronous and even just a single process may fail by crashing \cite{FLP}. This led the research community to study and develop alternative solutions based on the relaxation of some of the constraints, to allow agreement to be reached in partially synchronous systems with faulty processes by trading off consistency with availability.

An alternative approach consists in imposing constraints on the set of operations that can be issued by clients, i.e. imposing updates that commute. In particular, in 2012 Faleiro et al.~\cite{Faleiro:2012} introduced a RSM approach based on a generalized version of the well known Lattice Agreement (LA) problem, that restricts the set of allowed update operations to commuting  ones \cite{Shapiro:2018}. 
They have shown that commuting replicated data types (CRDTs) can be implemented with an RSM approach in asynchronous settings using the monotonic growth of a join semilattice, i.e., a partially ordered set that defines a join (least upper bound) for all element pairs (see Figure~\ref{slattice} for an example). A typical example is the implementation of a dependable counter with {\sf add} and {\sf read} operations, where updates (i.e. {\sf add}s) are commutative.

In the Lattice Agreement problem, introduced by Attiya  et al.~\cite{Attiya:1995}, each process $p_i$ has  an input value $x_i$ drawn from the join semilattice  and must decide an output value $y_i$, such that \emph{(i)} $y_i$ is the join of $x_i$  and some set of input values and \emph{(ii)} all output values are comparable to each other in the lattice, that is form a chain in the lattice (see Figure~\ref{slattice}). Lattice Agreement describes situations in which processes need to obtain some knowledge on the global execution of the system, for example a global photography of the system. In particular Attiya  et al.~\cite{Attiya:1995} have shown that in the asynchronous shared memory computational model, implementing a snapshot object is equivalent to solving the Lattice Agreement problem. 

Differently from Consensus, Lattice Agreement can be deterministically solved in an asynchronous setting in presence of crash failures. Faleiro  et al.~\cite{Faleiro:2012} have shown that a majority of correct processes and reliable communication channels are sufficient to solve Lattice Agreement, while Garg   et al.~\cite{Garg:2018} proposed a solution that requires $\mathcal O(\log n)$ message delays, where $n$ is the number of processes participating to the algorithm.  The very recent solution of Skrzypczak  et al.~\cite{Skrzypczak}  considerably  improves  Faleiro's construction in terms of memory consumption, at the expense of progress. 

In the Generalized Lattice Agreement (GLA) problem processes propose an infinite number of input values (drawn from an infinite semilattice) and decide an infinite sequence of output values, such that, all output values are comparable to each other in the lattice i.e. form a chain (as for Lattice Agreement); the sequence of decision values are non-decreasing, and every input values eventually appears in some decision values. Solving GLA in asynchronous distributed systems reveals to be very powerful as it allows to built a  linearizable RSM of commutative update operations~\cite{Faleiro:2012}. 

Despite recent advancements in this field, to the best of our knowledge no general solution exists that solves Lattice Agreement  problems in an asynchronous setting with Byzantine faults. In the present paper we continue the line of research on Lattice Agreement in asynchronous message-passing systems by considering a Byzantine fault model, that is a model where processes may exhibit arbitrary behaviors. 

%Clearly the challenge to be faced by a Generalized Lattice Agreement algorithm is to prevent that the receipt of an infinite number of input values at each process will lead processes in an endless execution: each time a process proposes a new input value, all  (correct) processes need to refine their current estimation of the decision value to guarantee that any two decision values are comparable by containment and are monotonically increasing. Since this refinement may be needed for each new value progress is not guaranteed. Faleiro  et al.~\cite{Faleiro:2012} circumvent this issue by relying on a batching mechanism: new values are processed only when the current batch of values has been decided. Their solution tolerates up to a minority of crash failures. 

%\subsection{Our Contribution}

We first introduce the LA specifications  that takes into account Byzantine faults. Then we propose an algorithm, namely \emph{Wait Till Safe} (WTS), which, in presence of less than $(n-1)/3$ Byzantine processes, guarantees that any correct process decides in no more than $5+2f$ message delays with a global message complexity in $\mathcal O(n^2)$ per process. We show that $(n-1)/3$ is an upper bound. Our algorithm makes use of a Byzantine reliable broadcast primitive to circumvent adversarial runs where a Byzantine process may induce correct processes to deliver different input  values.
%Its message complexity is in $O(n^2)$ due to the Byzantine reliable broadcast which  guarantees that for any message broadcast by Byzantine processes if one correct process receives it then all of them will receive it. This communication primitive combined with a ``wait until safe'' strategy ensures that adversarial scenarios in which Byzantine processes pretend to have a non comparable state  with correct processes is circumvented. 
The algorithm is wait-free, i.e., every process completes its execution of the algorithm within a bounded number of steps, regardless of the execution of other processes. Note that Nowak and Rybicki~\cite{Nowak:2019} recently proposed a solution for a more constrained specification of LA in a Byzantine setting where correct processes are not allowed to decide values that contain inputs from Byzantine processes; %\ea{they prove that solving this constrained  LA is impossible unless the total number of processes is larger than a function of both the number of faulty processes and the number of input values.} 
 Similarly to previous contributions in the area of Byzantine fault-tolerant agreement~\cite{kotla2007zyzzyva}, our specification allows that input values proposed by Byzantine processes may be included in some decisions from some correct processes; this choice allows us to sidestep the impossibility results of~\cite{kotla2007zyzzyva} and makes our solution suitable in important  application use cases where the set of input values is infinite (e.g. natural numbers).

We then go a step further by proposing an algorithm, namely \emph{Generalized Wait Till Safe} (GWTS), to solve GLA in a Byzantine fault model. Here the challenge is twofold: first, we need to guarantee that,  despite the fact that input values are proposed in tumbling batches, Byzantine processes cannot keep rejecting all new proposals under the pretext that they are not comparable with the current ones. Second, we must ensure that adversarial processes cannot progress much faster than all the other correct processes (i.e., output decision values faster than correct processes) which would allow them to prevent all correct processing from proposing their own values.
Our ``wait until safe'' strategy guarantees that each correct process performs an infinite sequence of decisions, and for each input received at a correct process, its value is eventually included in a decision. Our algorithm is wait-free and is resilient to $f \leq  (n-1)/3$  Byzantine processes.

We present the construction of a RSM for objects with commuting update operations that guarantees both linearizability and progress in asynchronous environments with Byzantine failures.%, which to the best of our knowledge has never been achieved before. 

Finally, we  sketch the main lines of a signature-based version of our algorithms which takes advantage of digital signatures to reduce the message complexity to $\mathcal O(n)$ per process, when $f = \mathcal O(1)$.

To the best of our knowledge this is the first paper proposing a solution for Byzantine LA that works on any possible lattice, and it is the first work proposing a Byzantine tolerant RSM built on it.

%\ea{Do we remove this to save space: To summarize:
%\begin{itemize}
%\item We formalize Lattice Agreement problems in a Byzantine setting;
%\item We propose a wait-free LA algorithm tolerant to $f$ Byzantine failures with $n \geq 3f+1$, in asynchronous environments;
%\item We propose a wait-free GLA algorithm  tolerant to $f$ Byzantine failures with $n \geq 3f+1$, in asynchronous environments that builds upon the previous one;
%\item We propose a construction of a RSM for conflict-free objects that is resilient to $f$ Byzantine failures with $n \geq 3f+1$, is wait-free and guarantees both linearizability and progress;
%\item Finally, we sketch a signature-based version of our solutions that improve performance with respect to message complexity.
%\end{itemize}}

The rest of this paper is organized as follows: Section \ref{sec:related_work} discussed the related works; Section \ref{sec:preliminary} illustrates the necessity of at least $(3f+1)$ processes, while Section \ref{sec:model} describes the system model; Section \ref{sec:wts} and Section \ref{sec:gwts} introduce the algorithms for Byzantine LA and Byzantine GLA, respectively; Section \ref{sec:byzantine_rsm} describe the construction of a byzantine tolerant RSM; Section \ref{sec:sign} sketches a signature-based variant of our solutions and, finally, Section \ref{sec:conclusions} concludes the paper.

\section{Related Work} % (fold)
\label{sec:related_work}

Lattice Agreement has been introduced by Attiya et al~\cite{Attiya:1995} to efficiently implement an atomic snapshot object~\cite{Anderson:1993,Attiya:1993}. Their construction is such that each \emph{scan} or \emph{collect} operation requires $\mathcal O(1)$ execution of LA and uses $\mathcal O(n)$ read/write registers. Then Faleiro~\cite{Faleiro:2012} have shown that GLA is a very interesting  abstraction to build RSMs with strong consistency properties, i.e., linearizability of its operations, and liveness guarantees in asynchronous systems. Very recently Nowak and Rybicki~\cite{Nowak:2019} have studied LA in presence of Byzantine faults. As previously introduced, their specifications of LA is more restrictive than the one we propose since it does not allow decisions to contain values proposed by Byzantine processes. 
We argue that our Lattice Agreement   specification is more adapted to build RSM on top of the LA algorithm. 
Removing a value proposed by a Byzantine process might not be desirable: think about an RSM that implements an object shared by different organizations, it could be a breach of contract to selectively 
avoid certain updates even when the sender misbehaved. 
A second reason is more technical and it stems from the interaction between the impossibility results introduced by the specifications of~\cite{Nowak:2019} and how an RSM is implemented using GLA~\cite{Faleiro:2012}.
Following~\cite{Faleiro:2012} to implement the RSM we take the power set of all possible updates and we construct a lattice on it using as join the union operation. As an example, let us suppose that we want to build a set counter data type,
and let us assume that clients issue four update operations {\sf add($1$)}, {\sf add($2$)}, {\sf add($3$)}, {\sf add($4$)} interleaved with reads. In this case our semilattice is the one in Figure \ref{slattice}, and a Lattice Agreement algorithm
will ensure that each read will see values on a single chain, the red one in the figure. Thanks to this, different reads will see ``growing" versions of the counter that are consistent snapshots (e.g., if someone reads $\{1\}$, the other could read $\{1,4\}$, but it can not read $\{4\}$). 
\begin{figure}
  \centering
    \includegraphics[width=0.5\textwidth]{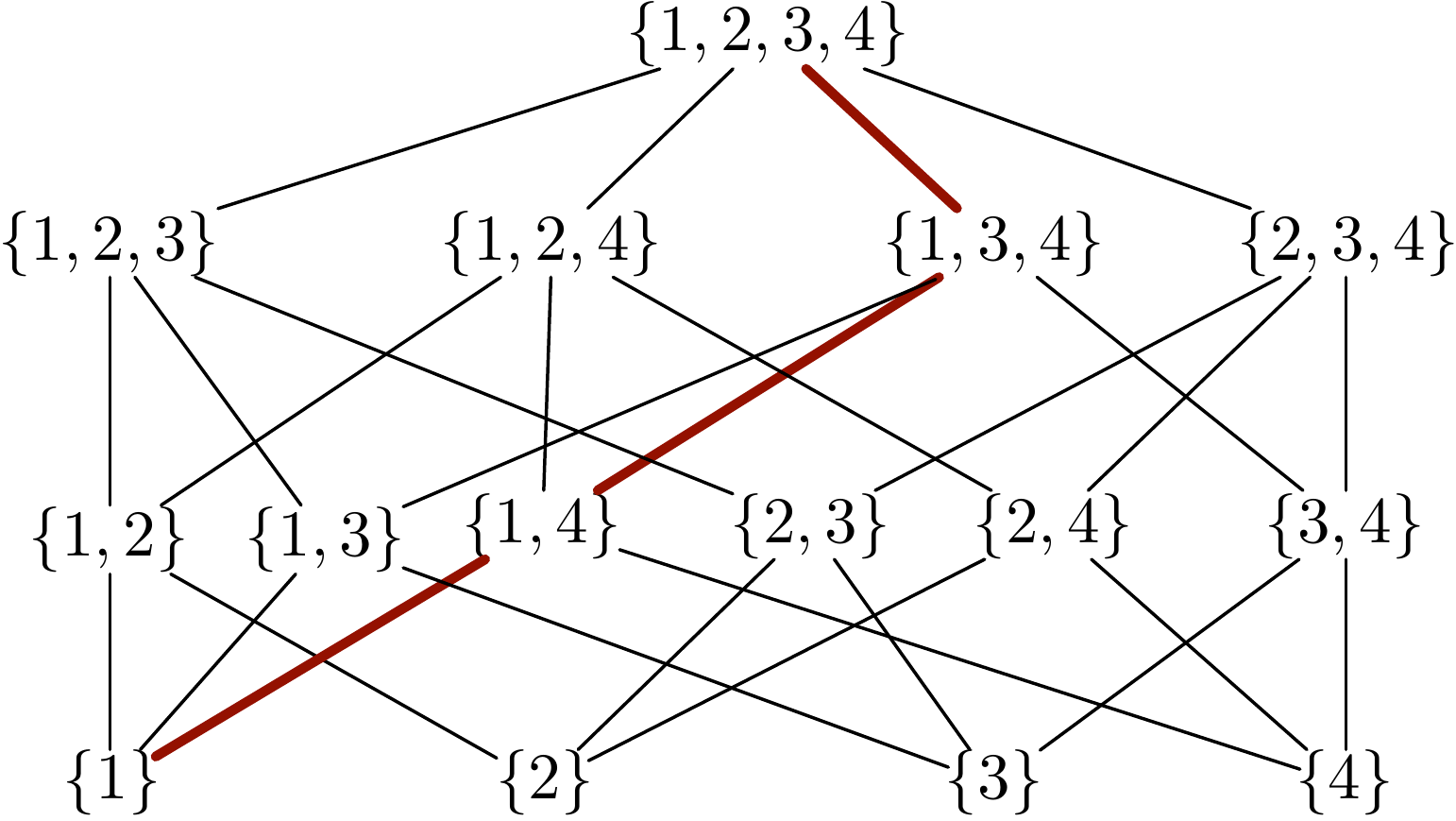}
      \caption{Hasse diagram of the semilattice induced over the power set of $\{1,2,3,4\}$ using the union operation as join. Taken two elements $e,e'$ of the lattice if $e < e'$, then $e$ appears lower in the diagram than $e'$ and there is an ``upward" path, going from lower points to upper points, connecting $e$ to $e'$ (e.g., $\{1\} \leq \{1,3,4\}$, but $\{2\} \not\leq \{3\}$). Any two elements $e,e'$ of the semilattice have a join $e \oplus e'= e \cup e'$ and $e \oplus e' \geq e,e'$ (e.g., $\{1\} \oplus \{2,3\}=\{1,2,3\}$ ). The red edges indicate the chain (i.e., sequence of increasing values) selected by the Lattice Agreement protocol.}\label{slattice}
\end{figure}
Such a semilattice has a breadth\footnote{Formally, the breadth of a semilattice $L=(V, \oplus)$ is the largest $n$, such taken any set of $U \subseteq V$ of size $n+1$ we have $\bigoplus U=\bigoplus K$ where $K$ is a proper subset of $U$.} of $4$, actually, each semilattice obtained using as join operation the union over the power set of a set of $k$ different values has breadth $k$. Therefore to satisfy
the specifications of \cite{Nowak:2019} using the semilattice in Figure \ref{slattice} we should have at least $5$ processes participating to Lattice Agreement. 
Unfortunately, it is often the case that the number of possible update operations is larger than the processes running the LA. In the set counter data type, we may have an {\sf add($x$)} for any $x \in \mathbb{N}$, in such setting the specifications of~\cite{Nowak:2019} is impossible to implement. 
Our specifications circumvent such impossibility.

\section{Model and Problem Definition} % (fold)
\label{sec:model}

We have a set $P:\{p_0,p_1,\ldots,p_{n-1}\}$ of processes. They communicate by exchanging messages over asynchronous authenticated reliable point-to-point communication links (messages are never lost on links, but delays are unbounded). 
The communication graph is complete: there is a communication link between each pair of processes. 

  We have a set $F \subset P$ of Byzantine processes, with $|F| \leq f$. Byzantine processes deviate arbitrarily from the algorithm.   We assume $|P| \geq 3f+1$ (we show that such assumption is needed in Section~\ref{sec:preliminary}).
  
  In Section~\ref{sec:sign}, we assume that there exists a public-key infrastructure, and that each process is able to sign a message, in such a way that each other process is able to unambiguously verify such signature. 
   In this case Byzantine processes are not able to forge a valid signature for 
  a process in $C=P \setminus F$ (that is the computational capability of processes in $F$ is polynomially bounded). We remark that signatures are only used in the algorithms presented in Section~\ref{sec:sign}.

\subsection{The Byzantine Lattice Agreement Problem}  
  Each process $p_i \in C$ starts with an initial input value $pro_i \in E \subseteq V$. 
Values in $V$ form a join semi-lattice $L=(V,\oplus)$ for some commutative join operation $\oplus$: for each $u,v \in V$ we have $u \leq v$ if and only if $v= u \oplus v$. The set $E$ is a set of lattice values that can be proposed by a process. 
 Given  $V'=\{v_1,v_2,\ldots,v_k\} \subseteq V$ we have $\bigoplus V'=v_1 \oplus v_2 \oplus \ldots \oplus v_k$.

  The task that processes in $C$ want to solve is the one of  Lattice Agreement, and it is formalised by the following properties:
  \begin{itemize}
\item {\bf Liveness:} Each process $p_i \in C$ eventually outputs a decision value $dec_i \in V$;
\item {\bf Stability:} Each process $p_i \in C$  outputs a unique decision value $dec_i \in V$;
\item {\bf Comparability:} Given any two pair $p_i,p_j \in C$ we have that either $dec_i \leq dec_j$ or $dec_j \leq dec_i$;
\item {\bf Inclusivity:} Given any correct process $p_i \in C$ we have that $pro_i \leq dec_i $;
\item {\bf Non-Triviality:}  Given any correct process $p_i \in C$ we have that $dec_i \leq \bigoplus (X \cup B)$, where $X$ is the set of proposed values of all correct processes ($X:\{pro_i | \text{ with } p_i \in C \}$), and $B \subseteq E$ is $|B| \leq f$.

  \end{itemize}
  
   In the rest of the paper we will assume that $L$ is a semi-lattice over sets ($V$ is a set of sets) and $\oplus$ is the set union operation. This is not restrictive, it is well known \cite{Nation} that any join semi-lattice is isomorphic to a semi-lattice of sets with  set union  as join operation.

\section{Necessity of at least $(3f+1)$ processes} % (fold)
\label{sec:preliminary}

We first show that our specification can only be satisfied when there are at least $(3f+1)$ processes. Interestingly, this bound holds even if we weaken our specification by removing Inclusivity property.
\begin{theorem}
Let ${\cal A}$ be any asynchronous distributed, and Byzantine tolerant algorithm that solves Byzantine Lattice Agreement when $f$ processes are Byzantine. 
We have that ${\cal A}$ needs at least $(3f+1)$ processes. This holds even if we drop the Inclusivity property from the specification.
\end{theorem}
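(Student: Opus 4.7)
The plan is to argue by contradiction, using the classical three-way partition/indistinguishability argument used for Byzantine Agreement (Pease-Shostak-Lamport) and Byzantine reliable broadcast (Bracha-Toueg), adapted to the Lattice Agreement specification. Assume that $\mathcal{A}$ solves Byzantine LA with $n \leq 3f$, partition the processes into three disjoint groups $G_1, G_2, G_3$ each of size at most $f$, and fix three pairwise incomparable values $v_1, v_2, v_3 \in E$ (for instance, the singletons $\{1\}, \{2\}, \{3\}$ in the set-union semi-lattice). Without loss of generality assume $\mathcal{A}$ is deterministic, by fixing any randomness as part of the input.

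I first build three base executions $\sigma_1, \sigma_2, \sigma_3$: in $\sigma_k$ the group $G_k$ is entirely Byzantine and silent, while each of the other two groups $G_i$ is correct and proposes $v_i$. By Liveness and Stability the correct processes in $\sigma_k$ decide specific values $d_i^{(k)}$, determined by their local views, and by Comparability these two decisions lie on a single chain. I then assemble a glue execution $\sigma^{\star}$ in which no process is Byzantine, each $G_i$ proposes $v_i$, and the asynchronous scheduler delays the link-sets $G_3 \to G_1$, $G_1 \to G_2$, and $G_2 \to G_3$ past the decision events of the respective recipient groups. These three delayed link-sets are pairwise disjoint, so the schedule is a legal asynchronous run; by construction, each $G_k$'s local history in $\sigma^{\star}$ up to its decision event matches its history in exactly one of the base executions, so by Stability $G_1$ decides $d_1^{(3)}$, $G_2$ decides $d_2^{(1)}$, and $G_3$ decides $d_3^{(2)}$ in $\sigma^{\star}$.

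I then combine Comparability inside $\sigma^{\star}$ (where all three groups are correct, so the three decisions must be pairwise comparable) with Non-Triviality inside each $\sigma_k$ (where the correct proposals are only $\{v_j : j \neq k\}$, so each $d_i^{(k)}$ is upper-bounded by $\bigoplus_{j \neq k} v_j$ joined with the $f$-value Byzantine allowance). By selecting the lattice and the set $E$ carefully, for instance a non-trivial instance without a bottom element and with $E$ chosen so that the $f$ Byzantine slots in $\sigma_k$ cannot reach the missing value $v_k$, the three decisions are forced into three pairwise incomparable regions of the lattice, which contradicts the chain condition imposed by Comparability in $\sigma^{\star}$.

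The main obstacle is precisely this last step: the $|B| \leq f$ slack in Non-Triviality a-priori lets a decision absorb missing proposal values, so one must instantiate the lattice and $E$ so that the Byzantine budget in each scenario is insufficient to bridge the incomparability gap. Since Inclusivity is unavailable, the contradiction cannot be extracted by forcing decisions to contain the proposers' own values, and must instead come entirely from the interplay between Comparability in the glue execution $\sigma^{\star}$ and the scenario-specific Non-Triviality upper bounds of the three base executions $\sigma_k$; this is the delicate part of the argument and what distinguishes it from the more symmetric impossibility of Byzantine Agreement.
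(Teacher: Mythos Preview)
Your construction has two concrete gaps.

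The indistinguishability you claim between $\sigma^{\star}$ and the base runs does not hold with \emph{silent} Byzantines. In $\sigma^{\star}$ you delay $G_1 \to G_2$, so $G_2$ hears only from $G_2$ and $G_3$; but in $\sigma_3$ (where $G_3$ is silent) $G_2$ hears only from $G_1$ and $G_2$. Hence $G_2$'s internal state --- and therefore the stream of messages $G_2$ sends to $G_1$ over the undelayed link $G_2 \to G_1$ --- differs between the two runs, so $G_1$'s local history in $\sigma^{\star}$ is \emph{not} identical to its history in $\sigma_3$, and you cannot conclude that $G_1$ outputs $d_1^{(3)}$ in $\sigma^{\star}$. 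The standard repair is to make the faulty group \emph{active}, playing a different correct-looking role toward each neighbour; this is exactly what the paper's sketch does for $f=1$: a single Byzantine process speaks to $p_1$ as if proposing $v_{biz_1}$ and to $p_2$ as if proposing $v_{biz_2}$, with the $p_1\leftrightarrow p_2$ link delayed --- a two-way partition with an equivocating adversary, not the three-silent-groups decomposition you attempt.

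The contradiction step fails for a deeper reason. You propose to pick $E$ so that the $f$-element set $B$ in Non-Triviality cannot ``reach'' the missing value $v_k$ in scenario $\sigma_k$; but each $v_k$ is proposed by correct processes in $\sigma^{\star}$ and in the other $\sigma_j$, so necessarily $v_k \in E$, and nothing prevents $v_k \in B$. Worse, once Inclusivity is removed the remaining four properties are satisfied by the trivial protocol ``every process immediately outputs some fixed $e_0 \in E$'' (for Non-Triviality take $B=\{e_0\}$), and this protocol runs correctly on any $n \geq 1$. So no indistinguishability argument using only Liveness, Stability, Comparability and Non-Triviality can yield a contradiction; the paper's own case analysis avoids this only by tacitly restricting each $p_i$'s decision to joins of values $p_i$ has actually seen, which in effect appeals to Inclusivity (or to a sharper Non-Triviality than the one stated). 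The ``even if we drop Inclusivity'' clause is therefore suspect, and you should not expect to rescue it by a more careful choice of lattice.
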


\begin{proof}({\em Sketch})
We first show the result for $f=1$. Let ${\cal A}$ be a correct algorithm using $3$ processes and tolerating a Byzantine process. 
The processes are $\{p_1,p_2,p_{biz}\}$, and $p_{biz}$ is the Byzantine.
Let us consider a run where we delay the messages between $p_1$ and $p_2$. 
Still ${\cal A}$  has to make $p_1$ and $p_2$ decide before they exchange any message: each one of them could not exclude that the other is the Byzantine process. 
Suppose also that the byzantine pretend to have proposed value $v_{biz_1}$ while talking with $p_1$, and $v_{biz_2}$ while talking with $p_2$.
Therefore $p_1$ has to make a decision seeing only its value $v_1$ and $v_{biz_1}$. 
Also $p_2$ sees only values  $v_2$ and $v_{biz_2}$. 
Now we have the following possibilities:
\begin{itemize}
\item Each of them outputs a decision that contains both values, but then decisions are not comparable.
\item Each of them outputs a decision that contains only its proposed value, but then the decisions are not comparable.
\item Each of them outputs a decision that contains only the Byzantine value, also in this case the decision are not comparable.
\item One of them, let us say $p_1$ outputs $v_1$, and the other outputs $v_2,v_{biz_2}$. It is clear that also this case does not solve the problem. 
\end{itemize}
For $f>1$ we use the standard simulation argument: given an algorithm that solves the problem with $3f$ processes we could simulate it using only $3$ processes (each of the three
internally runs $f$ copies). 
\end{proof}

\section{Algorithm {\em  Wait Till Safe} (WTS)} % (fold)
\label{sec:wts}

%The main idea is to make any process to disclose its proposal by performing a reliable broadcast. The reliable broadcast prevents Byzantine processes from sending different messages to processes~\cite{Bracha:1987,Srikanth:1987}. The exact specification of this broadcast are in~\cite{gargdistcomp}.

The \emph{Wait Till Safe} algorithm (Algorithms \ref{wts:algorithm} and \ref{wts:acceptoralgorithm}) is divided in two phases: an initial \emph{Values Disclosure Phase} where processes are asked to declare to the whole system values they intend to propose, and then a \emph{Deciding Phase} where processes agree on which elements of the lattice can be decided on the basis of the proposed values. For the sake of clarity, processes are divided in \emph{proposers} that propose an initial value, and then decide one decision value, and \emph{acceptors} which help proposers decide. This distinction does not need to be enforced during deployment as each process can play both roles at the same time.

The main idea in the  \emph{Values Disclosure Phase} is to make any proposer disclose its proposed value by performing a Byzantine reliable broadcast. The reliable broadcast prevents Byzantine processes from sending different messages to processes~\cite{Bracha:1987,Srikanth:1987}. The exact specification of this broadcast is in~\cite{gargdistcomp}. In the pseudocode the broadcast primitive is represented by the {\sc ReliableBrodcast} (used for reliably broadcast messages) function and {\sc RBcastDelivery} event (that indicates the delivery of a message sent with the reliable broadcast).

%We report it in the following for completeness. Each process broadcasts a message $m$ that its decorated with an epoch number $ep$ and its identity $p_i$.
% \begin{itemize}
% \item {\bf Integrity:} If a correct process $p_i$ never broadcasts $<p_i, m, ep>$, no other correct process will ever deliver $<p_i, m, ep>$;
% \item {\bf Liveness:} If a correct process $p_i$ broadcasts $<p_i, m, ep>$, all other correct  processes eventually deliver $<p_i, m, ep>$;
% \item {\bf Global liveness:} For any two correct  processes $p_i$ and $p_j$, if $p_i$ delivers $<p_k, m, ep>$, $p_j$ also delivers
% $<p_k, m, ep>$, even when the sender $p_k$  is Byzantine.
% \item {\bf Global uniqueness:} If a correct process $p_i$ delivers $<p_s,m,ep>$ then for any other correct process $p_j$ that delivers $<p_s,m',ep>$  we have $m=m'$ irrespectively of the correctness of $p_s$.
%
% \end{itemize}

Values delivered at each process are saved in a $SvS$ (Safe-values Set). A process moves to the next phase as soon as he receives values from at least $(n-f)$ proposers. Waiting for $(n-f)$ messages is not strictly necessary, but allows us to show a bound of ${\cal O}(f)$ on the message delays of our algorithm. Note that, from this point on, some operations of Phase 1 could run in parallel with Phase 2. 
Thanks to \emph{Value Disclosure Phase} a process is {\em committed} to its value and cannot change its proposal or introduce a new one during the \emph{Deciding Phase}. During this latter phase, correct processes only handle messages that contain values in $SvS$, i.e. messages for which the $SAFE()$ predicate is true. Messages that do not satisfy this condition are buffered for later use, i.e. if and when all the values they contain will be in $SvS$.

The \emph{Deciding Phase} is an extension of the algorithm described in~\cite{Faleiro:2012} with a Byzantine quorum and additional checks used to thwart Byzantine attacks.
Each correct proposer $p$ sends its $Proposed$ value to acceptors in a request message (Line \ref{firstproposal}). An acceptor receiving a request, sends an ack if the previously $Accepted\_set$ is a subset of the value contained in the request, and updates its $Accepted\_set$ using the $Proposed$ set in the request (initially, the $Accepted\_set$ of an acceptor is the empty set). 
Otherwise, the acceptor sends a nack containing the $Accepted\_set$, and it updates its $Accepted\_set$ with the union of the value contained in  $p$'s request and its $Accepted\_set$. 
The proposer $p$ decides if it receives $ \lfloor (n+f)/2 \rfloor +1$ acks. In case $p$ receives a nack, then $p$ updates $Proposed$ set by taking the union of it and the value contained in the nack. 
Each time a proposer updates its $Proposed\_set$ set it issues a new request.

\begin{algorithm*}
\caption{{\em WTS (Wait Till Safe)}  -Algorithm for Proposer process $p_i$ }\label{wts:algorithm}
\footnotesize

\begin{algorithmic}[1]
\State $V$ lattice values, $E$ subset of $V$ of values that can be proposed by processes. 
\State $proposed\_value=pro_i$
\State $init\_counter=ts=0$
\State $Proposed\_set=Ack\_set=SvS=Waiting\_msgs=\emptyset$
\State $state = disclosing$

\smallskip

\algrenewcommand\algorithmicprocedure{\textbf{upon event}}

\LineComment{Values Disclosure Phase}
\Procedure{$proposed\_value \neq \bot$}{}
\State  $Proposed\_set=Proposed\_set \cup proposed\_value$  \label{insertmine}
\State  {\sc ReliableBroadcast}$(<disclosure\_phase, proposed\_value>)$ to all  \label{rb1}
\EndProcedure
\smallskip

%\Comment{A process rejects any message in which value is not well-formed (i.e. is not a point of the lattice)}
\Procedure{RBcastDelivery from sender}{ $<disclosure\_phase,value>$}
\If{$value \in E$}\label{wts:checke}
\If{$state = disclosing$}
\State  $Proposed\_set=Proposed\_set \cup value$ 
\EndIf
\State $SvS = SvS \cup  \{value\}$\label{wts:updatesvs}
\State $init\_counter=init\_counter+1$
\EndIf

\EndProcedure
\smallskip

\LineComment{Deciding Phase}
\Procedure{$init\_counter \geq (n-f)$ when $state = disclosing$ }{}
\State $state = proposing$
\State  {\sc Broadcast}$(<ack\_req, Proposed\_set,ts>)$ to all Acceptors \label{firstproposal}
\EndProcedure
\smallskip

\Procedure{Delivery from sender}{$m$}
\State  $Waiting\_msgs=Waiting\_msgs \cup \{m\}$
\EndProcedure
\smallskip

%\Procedure{$\exists m \in Waiting\_msgs$ such that all values in $m$ are in $SvS$}{}
%\State  $Waiting\_msgs=Waiting\_msgs \setminus \{m\}$
%\State route message $m$ to the appropriate internal delivery function. 
%\EndProcedure
%\smallskip

\Procedure{ $\exists m \in Waiting\_msgs \,| \, SAFE(m) \land state=proposing \land m = <ack, \cdot, tstamp> \land tstamp=ts$ from sender}{}
\State  $Waiting\_msgs=Waiting\_msgs \setminus \{m\}$
\State  $Ack\_set=Ack\_set \cup \{<ack,sender>\}$
\EndProcedure
\smallskip

\Procedure{ $\exists m \in Waiting\_msgs \,| \, SAFE(m) \land state=proposing \land m = <nack,Rcvd\_set, tstamp> \land tstamp=ts$ from sender}{}
\State  $Waiting\_msgs=Waiting\_msgs \setminus \{m\}$
\If{$Rcvd\_set \cup Proposed\_set \neq Proposed\_set$} \label{ifref1}
\State $Proposed\_set=Rcvd\_set \cup Proposed\_set$
\State $Ack\_set=\emptyset$
\State $ts=ts+1$ %\leoq{update with $ts$}
\State  {\sc Broadcast}$(<ack\_req, Proposed\_set,ts>)$ to all Acceptors \label{pref1}
\EndIf
\EndProcedure
\smallskip

\Procedure{$|Ack\_set| \geq  \lfloor (n+f)/2 \rfloor +1  $ when $state=proposing$}{}
\State $state=decided$ \label{oneshot}
\State $decision_i=Proposed\_set$
\State {\sc DECIDE}($decision_i$)
\EndProcedure
\smallskip

\algrenewcommand\algorithmicprocedure{\textbf{function}}

\Procedure{SAFE}{$m$}  \label{w1}
\If{the lattice element contained in $m$ is a subset of $SvS$}
\State {\bf return} True
\Else
\State {\bf return} False
\EndIf  \label{w2}

\EndProcedure

\end{algorithmic}
\end{algorithm*}

\begin{algorithm*}
\caption{{\em WTS (Wait Till Safe)}  - Algorithm for Acceptor process $p_i$} \label{wts:acceptoralgorithm}
\footnotesize

\begin{algorithmic}[1]
\algrenewcommand\algorithmicprocedure{\textbf{upon event}}
\State $Accepted\_set = Waiting\_msgs=\emptyset$
\State $SvS$ \Comment{Reference to SvS in the corresponding Proposer}

\Procedure{Delivery from sender}{$m$}
\State  $Waiting\_msgs=Waiting\_msgs \cup \{m\}$
\EndProcedure
\smallskip

\Procedure{ $\exists m \in Waiting\_msgs \,| \, SAFE(m) \land  m = <ack\_req,Rcvd\_set,x>$ from sender}{}
\State  $Waiting\_msgs=Waiting\_msgs \setminus \{m\}$
\If{ $Accepted\_set \subseteq
 Rcvd\_set$} \label{ifack}
\State  $Accepted\_set=Rcvd\_set$ \label{awinsert}
\State {\sc Send to sender} $(<ack, Accepted\_set,x>)$
\Else
\State  {\sc Send to sender} $(<nack, Accepted\_set,x>)$
\State $Accepted\_set=Accepted\_set \cup Rcvd\_set$
\EndIf
\EndProcedure

\end{algorithmic}
\end{algorithm*}

\subsection{WTS properties}

\subsubsection{Safety properties}\label{safetyoneshot}

\begin{observation} Given any correct process $p_j$ its $SvS$ contains at most one value for each process in $P$.  \label{obssvs}
\end{observation}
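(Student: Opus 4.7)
The plan is to reduce the claim to a property of the underlying reliable broadcast primitive. First I would observe that, across both the proposer code in Algorithm~\ref{wts:algorithm} and the acceptor code in Algorithm~\ref{wts:acceptoralgorithm}, the only statement that writes into $SvS$ is Line~\ref{wts:updatesvs}, inside the handler for an RBcastDelivery event carrying a $disclosure\_phase$ tag. Hence any two entries in $SvS$ attributed to the same source $p_k$ would have to originate from two distinct RBcastDelivery events at $p_j$ in which $p_k$ appears as the reliable-broadcast sender.

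Next I would appeal to the standard Bracha/Srikanth reliable broadcast guarantees cited in the paper, specifically the no-duplication and consistency clauses: for each sender, at most one message from that sender is ever delivered to a correct recipient. For correct senders this is additionally reinforced by the pseudocode itself, since the guard at Line~\ref{rb1} can fire only once (it uses $proposed\_value = pro_i$, the single input value of $p_i$, which is not re-assigned anywhere). For Byzantine senders, the substantive ingredient is precisely the no-duplication clause of the primitive: even if the adversary injects conflicting initial, echo, or ready messages in an attempt to fork its broadcast, the quorum thresholds of Bracha-style reliable broadcast guarantee that at most one value ever accumulates the threshold required to be delivered, so $p_j$'s RBcastDelivery handler fires at most once on behalf of that Byzantine sender.

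Combining the two observations, $SvS$ at any correct $p_j$ is updated at most once for each process $p_k \in P$, which is exactly the claim. The only delicate point I anticipate is being explicit about which variant of reliable broadcast is assumed: if the cited specification phrases no-duplication per broadcast instance rather than per sender, one should note that in WTS each process is treated as running a single broadcast instance keyed by its identity, so per-instance and per-sender coincide here; the filtering at Line~\ref{wts:checke} (requiring $value \in E$) does not interfere with this reduction, it only restricts which deliveries are accepted at all.
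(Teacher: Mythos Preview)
Your proposal is correct and follows essentially the same approach as the paper: the paper's justification is the one-line remark that the observation ``derives from the specification of reliable broadcast, and the fact that in the disclosure phase each participating process broadcasts a single value,'' which is exactly the reduction you spell out in more detail. Your additional remarks about where $SvS$ is updated, the role of Line~\ref{wts:checke}, and the per-instance versus per-sender distinction are sound elaborations but do not depart from the paper's reasoning.
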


The above observation derives from the specification of reliable broadcast, and the fact that in the disclosure phase each participating process broadcasts a single value. 
We say that a message $m$ containing a set of proposed values is ``safe" for a process $p_i$ if such set of values is contained in $SvS$. 
It is immediate from function at Lines~\ref{w1}-\ref{w2} that proposers (in state $proposing$)  change their $Proposed\_set$ only
when they receive safe messages.
The analogous holds for the $Accepted\_set$ of acceptors.

We say that a value $v$ receives $m$ acks if it is contained in a $Proposed\_set$, that is in turn contained in ack messages in the form $<ack,\cdot,ts>$ sent by $m$ acceptors.
The same meaning is intended when we say that $Proposed\_set$ receives acks.

\begin{definition} (Commited value) A value $v$ is commited if it received $ \lfloor (n+f)/2 \rfloor +1$ acks. 
\end{definition}
\begin{definition} (Commited proposal) A $Proposed\_set$ is commited if it received $\lfloor (n+f)/2\rfloor +1$ acks. \label{def:commitedproposal}
\end{definition}

\begin{Lemma}\label{Lemmacomm}
Let $t$ be the first time at which a value $v$ is commited, we have that any  $Proposed\_set$ committed after $t$ contains $v$. 
\end{Lemma}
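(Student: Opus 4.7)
The plan is to exploit a standard Byzantine quorum intersection argument, combined with the monotonicity of the $Accepted\_set$ at correct acceptors.

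First, I would unpack the definition: at time $t$, a value $v$ is contained in some committed $Proposed\_set$, which means a set $Q_v$ of at least $\lfloor(n+f)/2\rfloor+1$ acceptors returned an $\langle ack,\cdot,ts\rangle$ for a request carrying some $Rcvd\_set \supseteq \{v\}$. By inspection of Algorithm~\ref{wts:acceptoralgorithm}, an acceptor only acks when the test $Accepted\_set \subseteq Rcvd\_set$ succeeds, and in that case it sets $Accepted\_set \leftarrow Rcvd\_set$. So immediately after acking, every acceptor in $Q_v$ has $v$ in its $Accepted\_set$. I would then observe that the $Accepted\_set$ of a correct acceptor is monotonically non-decreasing with respect to set inclusion: on the ack branch it is overwritten with a superset, and on the nack branch it is unioned with the incoming set. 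Hence every correct acceptor in $Q_v$ permanently stores $v$ in its $Accepted\_set$ from its ack onwards.

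Next, I would invoke quorum intersection. Let $P'$ be any $Proposed\_set$ that is committed after time $t$, with supporting quorum $Q'$ of size $\lfloor(n+f)/2\rfloor+1$. Then
\[
|Q_v \cap Q'| \;\geq\; 2\bigl(\lfloor(n+f)/2\rfloor+1\bigr) - n \;\geq\; f+1,
\]
so $Q_v \cap Q'$ contains at least one correct acceptor $p_c$. By the previous paragraph, from the moment $p_c$ acked for $v$ it had $v \in Accepted\_set$, and it still did at the (later) time when it acked for $P'$. But the ack branch requires $Accepted\_set \subseteq P'$ at that moment, and therefore $v \in P'$, which is exactly the claim.

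The main obstacle I expect is just being careful about the ``after $t$'' timing. The lemma quantifies over $Proposed\_set$s committed after the first commit of $v$, but a correct acceptor in $Q'$ may have processed the request for $P'$ before it processed the request that caused it to ack $v$; I need to argue about the order in which that acceptor processed the two requests, not the global time order of the commits. This is fine: pick a correct $p_c \in Q_v \cap Q'$ and consider the two ack events it performed; both are ack events for $P_v$ (the set containing $v$) and for $P'$. If $p_c$ acked $P'$ before acking $P_v$, then the commit of $P'$ produces an ack from $p_c$ that occurs strictly before the commit of $v$, contradicting the assumption that $P'$ is committed after $t$ (here I use that a proposer collects a fixed quorum of acks and that an acceptor's ack cannot be retroactively withdrawn). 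Thus $p_c$ acked $P_v$ first, has $v \in Accepted\_set$ from that point on, and the inclusion test $Accepted\_set \subseteq P'$ at the later ack forces $v \in P'$.
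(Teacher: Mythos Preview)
Your core argument---Byzantine quorum intersection, monotonicity of $Accepted\_set$ at correct acceptors, and the inclusion test on the ack branch---is exactly the paper's approach; your $|Q_v\cap Q'|\geq f+1$ counting is just a reformulation of the paper's $\lfloor(n-f)/2\rfloor+1$ correct-acceptor overlap.

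You go further than the paper in flagging the timing subtlety, and that instinct is right; but your resolution does not go through. From ``$p_c$ acked $P'$ before acking $P_v$'' you infer a contradiction with ``$P'$ is committed after $t$''. This inference is invalid: $p_c$'s ack for $P'$ preceding $t$ only tells you that \emph{one} ack in the quorum for $P'$ was early; the commit time of $P'$ is set by the \emph{last} of its $\lfloor(n+f)/2\rfloor+1$ acks, and a Byzantine acceptor in $Q'$ may send its ack arbitrarily late. Concretely, with $n=4$, $f=1$: let correct $a_1,a_2$ first ack $P'=\{w\}$, then ack $P_v=\{v,w\}$ (together with correct $a_3$), committing $v$ at time $t$; only afterwards does Byzantine $a_4$ emit an ack for $\{w\}$, so $P'$ is committed after $t$ yet $v\notin P'$. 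Your parenthetical about fixed quorums and non-withdrawable acks does not block this run. The paper's proof simply does not confront this point; it implicitly treats the common correct acceptor's ack for $P'$ as occurring after $t$. For the use that is actually made of the lemma (Comparability of two decided $Proposed\_set$s), the clean argument avoids global commit-time ordering altogether: the common correct acceptor acked both sets, and whichever it processed second must, by the inclusion test, be a superset of the first---so the two decisions are comparable regardless of which commit happened ``first''.
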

\begin{proof}
Value $v$ received at least $\lfloor (n-f)/2\rfloor +1$ acks from acceptors in $C$ (see Alg.~\ref{wts:algorithm} Line~\ref{livenessRSM-b}). These acceptors have inserted $v$ in their $Accepted\_set$ (see  Alg.~\ref{wts:acceptoralgorithm} Line~\ref{awinsert}). 
Thus by time $t$ a set $Q_{1} \subseteq C$ has $v$ in their  $Accepted\_set$.
Let $Proposed\_set$ be a value committed after $t$, then, by the same above reasoning, we have that $Proposed\_set$ received acks from a set  of correct acceptors $Q_2 \subseteq C$, with $|Q_2| \geq \lfloor (n-f)/2\rfloor +1$. 
Since $\exists p \in Q_1 \cap Q_2$  (recall that $|C|=(n-f)$) we have that $Proposed\_set$ contains $v$: $p$ sent an ack, thus has passed the if at Line~\ref{ifack} of Alg~\ref{wts:acceptoralgorithm}.
\end{proof}

\begin{observation} Given any correct process $p_j$ its $decision_j$ has been committed. \label{obsobv}
\end{observation}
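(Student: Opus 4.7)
The plan is to read the decision rule in Algorithm~\ref{wts:algorithm} and match it directly against Definition~\ref{def:commitedproposal}. A correct proposer $p_j$ sets $decision_j = Proposed\_set$ and leaves state $proposing$ exactly when the guard $|Ack\_set| \geq \lfloor(n+f)/2\rfloor + 1$ fires. Hence it suffices to argue that, at that moment, every element of $Ack\_set$ is an ack for the current $Proposed\_set$, because then $Proposed\_set$ has accumulated the number of acks required by Definition~\ref{def:commitedproposal} and is therefore committed.

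To establish that alignment I would inspect the two places where $Proposed\_set$ can change inside the deciding phase. The first is the transition from $disclosing$ to $proposing$, where $Proposed\_set$ is used to broadcast the initial $<ack\_req, Proposed\_set, ts>$ with $ts=0$ and $Ack\_set=\emptyset$. The second is the nack handler: when the guard at Line~\ref{ifref1} fires, the algorithm resets $Ack\_set$ to $\emptyset$, increments $ts$, and rebroadcasts a fresh $ack\_req$ carrying the enlarged $Proposed\_set$ tagged with the new $ts$. Combined with the ack handler, which only inserts into $Ack\_set$ messages whose $tstamp$ equals the current $ts$, this invariant follows: at every moment during state $proposing$, each element of $Ack\_set$ is an ack whose timestamp matches the timestamp under which the current $Proposed\_set$ was broadcast.

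Given this invariant, at the time the decision guard fires there are $\lfloor(n+f)/2\rfloor+1$ acceptors that have sent $<ack,\cdot,ts>$ messages with the current $ts$, and by the acceptor pseudocode (Algorithm~\ref{wts:acceptoralgorithm}, Line~\ref{ifack}) each such ack was emitted only after the acceptor checked that the proposer's $Rcvd\_set$ contained its $Accepted\_set$ and then copied $Rcvd\_set$ into $Accepted\_set$. In particular each such ack is an ack for the very set that becomes $decision_j$. Thus $decision_j$ has received $\lfloor(n+f)/2\rfloor+1$ acks and is a committed proposal in the sense of Definition~\ref{def:commitedproposal}.

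The only subtle point, and the one I would state explicitly, is the timestamp bookkeeping: without the $ts=ts+1$ increment and the filter $tstamp=ts$ in the ack handler, stale acks from a previously proposed (smaller) $Proposed\_set$ could be counted towards the current one, breaking the claim. So the heart of the argument is really a one-line invariant on $(ts, Proposed\_set, Ack\_set)$ maintained by the nack and ack handlers; once that is in place the observation is immediate.
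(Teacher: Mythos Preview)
Your argument is correct and considerably more detailed than the paper's own treatment: the paper states this as a bare observation with no proof, relying on the reader to see that the decision guard $|Ack\_set|\geq\lfloor(n+f)/2\rfloor+1$ directly instantiates Definition~\ref{def:commitedproposal}. Your timestamp invariant on $(ts,\,Proposed\_set,\,Ack\_set)$ is exactly the right thing to make explicit, and the paper leaves it entirely implicit.

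One small caveat: when you invoke the acceptor pseudocode (Line~\ref{ifack} of Algorithm~\ref{wts:acceptoralgorithm}) to conclude that ``each such ack is an ack for the very set that becomes $decision_j$'', this is guaranteed only for \emph{correct} acceptors; a Byzantine acceptor may send $\langle ack,S,ts\rangle$ for an arbitrary safe $S$ and still pass the proposer's filter, since the ack handler does not check the payload against $Proposed\_set$. The paper's informal phrasing of ``receives $m$ acks'' glosses over this too, and it is harmless for the downstream use of the observation (Lemma~\ref{Lemmacomm} only exploits the $\lfloor(n-f)/2\rfloor+1$ acks that must come from correct acceptors, and your argument does establish that those contain $Proposed\_set$). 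But if you want the claim to match Definition~\ref{def:commitedproposal} literally, you should either note that Byzantine acks are immaterial for how the observation is consumed, or restrict the appeal to the acceptor code to the correct acceptors among the $\lfloor(n+f)/2\rfloor+1$.
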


\begin{theorem}
Let us consider a set of processes, of size at least $(3f+1)$, executing WTS algorithm. 
Algorithm WTS enforces: (1) {Comparability}; (2) { Inclusivity}; (3) { Non-Triviality}; (4) { Stability}. 
\end{theorem}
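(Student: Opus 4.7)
The plan is to dispatch the four properties in order of increasing difficulty, since Stability and Inclusivity are almost mechanical from the pseudocode while Non-Triviality and Comparability each hinge on one invariant. For Stability, the only handler that assigns $decision_i$ is guarded by $state = proposing$ and as its first action sets $state = decided$ at Line~\ref{oneshot}; since every other handler that could update $Proposed\_set$ and re-trigger this guard is itself gated by $state = proposing$, the handler fires at most once and $decision_i$ is set exactly once. Inclusivity is equally direct: $pro_i$ is inserted into $Proposed\_set$ at Line~\ref{insertmine} before any network event, and every subsequent assignment to $Proposed\_set$ is a union with its current value, so the invariant $pro_i \subseteq Proposed\_set$ is preserved; at the deciding step $dec_i = Proposed\_set$, hence $pro_i \leq dec_i$.

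For Non-Triviality I will establish the invariant that, at a correct $p_i$, $Proposed\_set \subseteq SvS \cup \{pro_i\}$ holds at every step. The base case is Line~\ref{insertmine}. The only other place where $Proposed\_set$ grows is Line~\ref{pref1}, which only fires on a nack satisfying $SAFE$; by the function at Lines~\ref{w1}--\ref{w2} the accompanying $Rcvd\_set$ is a subset of $SvS$, so, since $SvS$ is monotone non-decreasing, the invariant is preserved. By Observation~\ref{obssvs} and the check at Line~\ref{wts:checke}, $SvS$ contains at most one $E$-value per process; I then partition these into values contributed by correct processes (which are exactly their $pro_j \in X$, as a correct process only performs the disclosure broadcast once with its own input) and values contributed by the at most $f$ Byzantine processes (a set $B \subseteq E$ with $|B| \leq f$). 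Combining, $dec_i \subseteq X \cup B$, which yields $dec_i \leq \bigoplus (X \cup B)$.

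Comparability is the substantive step, and it reduces cleanly to Lemma~\ref{Lemmacomm}. Let $p_i, p_j \in C$ both decide; by Observation~\ref{obsobv} both $dec_i$ and $dec_j$ are committed proposals, with some commit times $t_i \leq t_j$ (WLOG). The key intermediate claim is that when a proposal $P$ becomes committed, every $v \in P$ is individually committed no later than that moment: the $\lfloor(n+f)/2\rfloor + 1$ acks witnessing the commitment of $P$ were sent at Line~\ref{awinsert} by acceptors that had just set $Accepted\_set = P$, so the very same acks witness that each $v \in P$ received $\lfloor(n+f)/2\rfloor + 1$ acks. Applying Lemma~\ref{Lemmacomm} to each $v \in dec_i$, using $t_j \geq t_i$, gives $v \in dec_j$, hence $dec_i \subseteq dec_j$, i.e., $dec_i \leq dec_j$. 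The main obstacle I anticipate is precisely this ``promotion'' step from a committed set to its committed elements, since Lemma~\ref{Lemmacomm} is stated for a single value while decisions are sets; the quorum intersection enabled by $n \geq 3f+1$ is what lets the argument go through, and any gap in the bookkeeping there would compromise Comparability.
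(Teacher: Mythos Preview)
Your proof is correct and follows essentially the same route as the paper: Stability via Line~\ref{oneshot}, Inclusivity via Line~\ref{insertmine} plus monotonicity of $Proposed\_set$, Non-Triviality via the $SvS$ bound of Observation~\ref{obssvs} together with the $SAFE$ guard, and Comparability via Lemma~\ref{Lemmacomm} and Observation~\ref{obsobv}. Your explicit ``promotion'' argument for Comparability (that the acks witnessing a committed $Proposed\_set$ simultaneously witness each $v$ in it being committed) is exactly the implicit step the paper leaves to the reader, and it is correct almost by definition of ``a value receives $m$ acks.''

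One minor oversight: in your Non-Triviality argument you write that ``the only other place where $Proposed\_set$ grows is Line~\ref{pref1},'' but $Proposed\_set$ also grows inside the \textsc{RBcastDelivery} handler while $state = disclosing$. This does not break your invariant $Proposed\_set \subseteq SvS \cup \{pro_i\}$, since that handler adds the same value to $SvS$ at Line~\ref{wts:updatesvs} in the same atomic step; you should just list this case explicitly.
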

\begin{proof}
We prove each property separately. 
\begin{enumerate}
 \item is implied by Lemma~\ref{Lemmacomm} and Observation~\ref{obsobv}.
 \item derives from the fact that a proposer never removes a value from $Proposed\_set$ and from Line~\ref{insertmine}. 
\item the bound on $B$ derives from the safety of messages and Observation~\ref{obssvs}, the fact that $dec_i \leq \bigoplus (X \cup B)$ derives from the fact that a correct process insert in its proposal 
only values received by messages and its initial proposed value. The fact that $B \subseteq E$ derives from Line \ref{wts:checke} of Algorithm \ref{wts:algorithm}.
\item is ensured by Line~\ref{oneshot} in proposers.  
\end{enumerate}\end{proof}

Note that the {Inclusivity} and the {Comparability} imply that, when all correct proposers decide than each value proposed by some correct will be in a decision and that there exists a proposer
whose decision includes all values proposed by correct proposers. 

\subsubsection{Liveness properties}

\begin{Lemma}\label{Lemmasafe}
Each message sent by a correct process is eventually safe for any other correct process.
\end{Lemma}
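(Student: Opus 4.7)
The plan is to trace every value that can ever appear inside a message sent by a correct process back to a reliably-broadcast disclosure message, and then invoke the totality property of reliable broadcast to move that value into every other correct process's $SvS$.

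First I would establish the following invariant: for any correct process $p$ and at any time, every value $v$ that $p$ has placed into its $Proposed\_set$ (if $p$ is a proposer) or its $Accepted\_set$ (if $p$ is an acceptor) is either (i) $p$'s own initial value $pro_p$, which $p$ has reliably broadcast at Line~\ref{rb1}, or (ii) a value that was in $p$'s $SvS$ at the moment of insertion. Indeed, aside from the self-insertion at Line~\ref{insertmine}, every handler that grows $Proposed\_set$ or $Accepted\_set$ (namely the handlers for incoming nacks in Algorithm~\ref{wts:algorithm} and for incoming $ack\_req$s in Algorithm~\ref{wts:acceptoralgorithm}) is guarded by the $SAFE$ predicate defined at Lines~\ref{w1}--\ref{w2}, which by construction requires every element of the incoming set to already belong to the receiver's $SvS$.

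Next I would observe that $SvS$ is only ever grown through the $RBcastDelivery$ handler (Line~\ref{wts:updatesvs}), so every element of any correct process's $SvS$ is the payload of some reliably-delivered disclosure message. By the totality/uniform-agreement property of reliable broadcast, if any correct process delivers such a message then every correct process eventually delivers it and inserts the same value into its $SvS$. Combined with the invariant above, this yields the key claim: every value a correct process ever places inside an outgoing message (whether the disclosure broadcast itself, or an $ack\_req$ carrying a $Proposed\_set$, or an $ack$/$nack$ carrying an $Accepted\_set$) will eventually belong to the $SvS$ of every correct receiver, at which point the $SAFE$ predicate evaluates to true on that message.

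The main obstacle I anticipate is the subtle corner case of a proposer's own value at Line~\ref{insertmine}: at the instant $p$ inserts $pro_p$ into its $Proposed\_set$ the value is not necessarily yet in $p$'s own $SvS$, so clause (i) of the invariant rather than clause (ii) must be invoked. This case is handled by the validity property of reliable broadcast applied to the disclosure issued at Line~\ref{rb1}, which guarantees that $pro_p$ eventually reaches the $SvS$ of every correct process. The remainder of the argument is a routine case analysis on the four handlers that can emit or forward values, and does not require any additional hypothesis beyond the standard semantics of reliable broadcast.
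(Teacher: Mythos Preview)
Your argument is correct and follows the same line as the paper's proof: trace every value inside a correct process's outgoing message back to a reliably-broadcast disclosure, then invoke the agreement property of reliable broadcast so that every correct receiver eventually has it in its $SvS$. The paper compresses this into the claim ``the set of values contained in $m$ is a subset of $SvS$ of $p_i$'' and then applies broadcast agreement; you unpack that claim into an explicit invariant on $Proposed\_set$ and $Accepted\_set$, and you correctly flag the one spot where the paper's compressed claim is technically imprecise---the self-insertion of $pro_p$ at Line~\ref{insertmine}, which may precede $p$'s own \textsc{RBcastDelivery}---and patch it via the validity property of reliable broadcast. So your proof is the same approach, just more careful on that corner case.
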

\begin{proof}
If a correct process $p_i$ sent a message $m$ then the set of values contained in $m$ is a subset of $SvS$ of $p_i$. 
Note that $SvS$ is only updated as result of the reception of a message reliably broadcast in the disclosure phase (Line~\ref{wts:updatesvs}). From the properties of the broadcast eventually each other correct
process will obtain a $SvS$ that contains the set of values in $m$, making $m$ safe. \end{proof}

\begin{Lemma}\label{Lemmaref}
A correct proposer refines its proposal (executing Line~\ref{pref1}) at most $f$ times. 
\end{Lemma}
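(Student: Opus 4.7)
The plan is to run a simple counting argument on $|Proposed\_set|$: each refinement strictly enlarges this set, while its size is bounded above and, by the time refinements can happen at all, bounded below.

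First I would note that the broadcast at Line~\ref{pref1} is nested inside the test at Line~\ref{ifref1}, which fires only if $Rcvd\_set \cup Proposed\_set \neq Proposed\_set$. Combined with the assignment $Proposed\_set=Rcvd\_set \cup Proposed\_set$ immediately before, this means every execution of Line~\ref{pref1} strictly increases $|Proposed\_set|$ by at least one. So it suffices to bound the total growth of $Proposed\_set$ during the $proposing$ state.

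Next I would bound the eventual size of $Proposed\_set$. After the state becomes $proposing$, the RBcastDelivery handler no longer modifies $Proposed\_set$, and $Proposed\_set$ only grows through the refinement branch, whose incoming nack must be safe and hence have its $Rcvd\_set$ contained in $SvS$. Together with the single insertion of $pro_i$ at Line~\ref{insertmine}, this keeps $Proposed\_set \subseteq SvS \cup \{pro_i\}$, and Observation~\ref{obssvs} gives $|SvS| \leq n$, so $|Proposed\_set| \leq n$ throughout.

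Then I would establish the lower bound at the moment of the state transition: the guard $init\_counter \geq (n-f)$ holds, and every increment of $init\_counter$ corresponds to an RBcastDelivery from a distinct process (Observation~\ref{obssvs}). While the state is still $disclosing$, each such delivery also inserts its value into $Proposed\_set$. Hence, as refinements begin, $Proposed\_set$ already contains values from at least $(n-f)$ distinct processes.

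Combining the three facts, the number of successful refinements is at most $n - (n-f) = f$. The argument is essentially mechanical; the only point worth double-checking is that the $init\_counter$ increments while $state = disclosing$ really do propagate into $Proposed\_set$, but this is immediate from the placement of the \textbf{if}-branches in the RBcastDelivery handler, so no real obstacle arises.
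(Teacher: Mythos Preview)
Your proposal is correct and follows essentially the same counting argument as the paper: each pass through Line~\ref{ifref1} strictly enlarges $Proposed\_set$, the first proposal already contains all but at most $f$ of the values that can ever enter it, hence at most $f$ refinements. The paper phrases the two bounds as $|X\cup B|-f$ and $|X\cup B|$ (via Observation~\ref{obssvs}) rather than $n-f$ and $n$, which is marginally tighter in that it sidesteps the edge case where two processes disclose the same value; otherwise your more detailed write-up matches the paper's three-line proof.
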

\begin{proof}
Each time the proposer executes Line~\ref{pref1} it passes the if at Line~\ref{ifref1}, thus increasing its proposed set. However, its first proposal, in Line~\ref{firstproposal}, contains at least $|X \cup B|-f$ values. Since
there are at most $|X \cup B|$ safe values (from Oservation~\ref{obssvs}), the claim follows.
\end{proof}

\begin{Lemma}\label{Lemmaeve}
If there is a time $t$ after which a correct proposer $p_i$ in state $proposing$  cannot execute Line~\ref{pref1}, then $p_i$ eventually decides. 
\end{Lemma}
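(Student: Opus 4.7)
The plan is to show that once the proposer $p_i$ stops refining, it is forced to collect acks from every correct acceptor, and that this quorum is already enough to trigger the decision.

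First I would let $P^*$ and $ts^*$ denote the (frozen) values of $Proposed\_set$ and $ts$ at $p_i$ from time $t$ onwards, and note that $p_i$ has already broadcast $\langle ack\_req, P^*, ts^*\rangle$ to every acceptor, either at the last execution of Line~\ref{pref1} strictly before $t$ or at Line~\ref{firstproposal} if $p_i$ never refined. By Lemma~\ref{Lemmasafe} this ack\_req eventually becomes safe at every correct acceptor $p_j$, so $p_j$ eventually processes it.

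The central step is to show that no correct $p_j$ can ever reply to this request with a nack tagged $ts^*$. A correct $p_j$ emits such a nack only when, at the moment of processing the request, its $Accepted\_set$ is not a subset of $P^*$, and the nack carries exactly this $Accepted\_set$. By Lemma~\ref{Lemmasafe} the nack itself eventually becomes safe at $p_i$; since $p_i$ is still in state $proposing$ (it has not decided) and its current $ts$ equals $ts^*$, the nack handler eventually fires. The if-test at Line~\ref{ifref1} then asks whether $Rcvd\_set \not\subseteq Proposed\_set = P^*$, which is exactly the condition under which the nack was produced, so Line~\ref{pref1} would be executed at some time strictly after $t$ --- contradicting the hypothesis. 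Hence every correct acceptor replies with an ack carrying $ts^*$, and each such ack, by Lemma~\ref{Lemmasafe}, eventually becomes safe at $p_i$ and is inserted into $Ack\_set$.

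To close, I would invoke a simple counting argument: there are at least $n-f$ correct acceptors, and for $n \geq 3f+1$ the arithmetic inequality $n-f \geq \lfloor (n+f)/2\rfloor + 1$ holds, so $|Ack\_set|$ eventually crosses the decision threshold and $p_i$ decides. The main obstacle, in my view, is the timestamp/safety bookkeeping: one must justify that a ``bad'' nack really triggers Line~\ref{pref1} \emph{after} $t$, which needs the combination of (i) the stability of $(P^*, ts^*)$ past $t$, (ii) the fact that $p_i$ has not yet decided so the event handler remains enabled, and (iii) Lemma~\ref{Lemmasafe} applied both to the outgoing ack\_req and to the incoming reply.
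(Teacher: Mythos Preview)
Your proof is correct and follows essentially the same route as the paper: fix the last $\langle ack\_req, P^*, ts^*\rangle$, argue by contradiction that every correct acceptor must reply with an ack (otherwise a nack would trigger Line~\ref{pref1} after $t$), and conclude that the $n-f$ correct acks suffice for the decision threshold. Your version is more explicit about the timestamp/state bookkeeping and spells out the quorum inequality $n-f \geq \lfloor (n+f)/2\rfloor + 1$, which the paper leaves implicit; the only cosmetic difference is that the paper argues the returning acks are safe directly (their payload is exactly $P^*$, already in $p_i$'s $SvS$) rather than via Lemma~\ref{Lemmasafe}.
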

\begin{proof}
Let $<ack\_req,Proposed\_set,ts>$ be the last ack request message sent by $p_i$. Since $p_i$ does not execute Line~\ref{pref1} it means that either $p_i$ does not receive any nack, or that any nack $p_i$ receives 
does not allow him to pass the if Line~\ref{ifref1}. Since $p_i$ is correct its message  $<ack\_req,Proposed\_set,ts>$ will reach each correct acceptor. By hypothesis each of them will send a ack, otherwise $p_i$ should
be able to execute Line~\ref{pref1} (they all handle the ack request by Lemma~\ref{Lemmasafe}). Once $p_i$ receives the acks from the set of correct acceptors will handle them: these messages are safe since
the element of the lattice inside each of them is equal to $Proposed\_set$ in $ack\_req$ and, thus $p_i$ decides. 
\end{proof}
In the next Theorem (Th.~\ref{liveness:wts}) we will show that each correct process eventually commits and decides, we also bound the  number of delays needed, by each correct proposer, to reach a decision.

\begin{theorem}\label{liveness:wts}
Let us consider a set of processes, of size at least $(3f+1)$, executing WTS algorithm. 
Every correct proposer decides in at most $2f+5$ message delays. 
\end{theorem}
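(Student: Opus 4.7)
The plan is to partition the execution of a correct proposer into the Values Disclosure Phase, a first ``probing'' round of the Deciding Phase, and a sequence of refinement rounds, and to charge a constant number of message delays to each piece.

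First I would handle the Values Disclosure Phase. The reliable broadcast primitive used here is the standard Bracha-style broadcast of \cite{Bracha:1987}, which delivers a value within $3$ message delays after its emission. Since every correct process initiates {\sc ReliableBroadcast} of its $proposed\_value$ at the very start, and there are at least $n-f$ correct processes, every correct proposer has $init\_counter \geq n-f$ by time $3$ and transitions to state $proposing$, broadcasting its first $<ack\_req,Proposed\_set,ts>$ at Line~\ref{firstproposal} no later than time $3$.

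Next I would analyse the first round of interaction with the acceptors. The initial $ack\_req$ reaches all correct acceptors by time $4$. By Lemma~\ref{Lemmasafe} it eventually becomes safe for every correct acceptor; I would argue that once the disclosure deliveries that made the $Proposed\_set$ safe at the proposer have themselves been delivered (which happens within the same $3$-delay window since reliable broadcast is uniform), the safety predicate is true at each correct acceptor on arrival, so each correct acceptor responds with an $ack$ or $nack$ immediately. These responses are received by time $5$. If none of them triggers the refinement branch at Line~\ref{pref1}, then Lemma~\ref{Lemmaeve} applies with $t=5$ and the proposer decides by time $5$.

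Finally I would charge the refinement rounds. Whenever a $nack$ does enlarge $Proposed\_set$, the proposer re-broadcasts $ack\_req$ and waits for new responses, which costs $2$ additional message delays per round-trip (one to reach the acceptors, one to come back, possibly with freshly-made-safe messages by Lemma~\ref{Lemmasafe}). Lemma~\ref{Lemmaref} caps the total number of such refinement rounds by $f$. Hence by time $5+2f$ the proposer can no longer execute Line~\ref{pref1}, Lemma~\ref{Lemmaeve} applies, and a decision is reached. The main obstacle I anticipate is the bookkeeping around safety: a Byzantine acceptor could delay a $nack$ in hope of it arriving after the proposer has moved on, but since refinements are triggered only by $nack$s whose $Rcvd\_set$ genuinely grows $Proposed\_set$, and since the number of safe values is bounded by $|X \cup B|$, the count of $2$ delays per refinement combined with at most $f$ refinements (from Lemma~\ref{Lemmaref}) yields the claimed $2f+5$ bound.
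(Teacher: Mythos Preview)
Your proposal is correct and follows essentially the same approach as the paper: charge $3$ delays to the reliable-broadcast disclosure phase, $2$ delays to the initial proposal round, $2$ delays to each of the at most $f$ refinements (via Lemma~\ref{Lemmaref}), and conclude with Lemma~\ref{Lemmaeve}. You are simply more explicit about the safety bookkeeping and the separation of the first round from the refinement rounds, but the decomposition and the key lemmas invoked are identical to the paper's.
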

\begin{proof}
The reliable broadcast at Line~\ref{rb1} takes at most $3$ message delays. Therefore after three rounds each correct process start its first proposal. Each refinement takes at most $2$ message delay, the time needed to broadcast
and receive a response. There are $f$ refinements, see Lemma~\ref{Lemmaref}, executed in at most $2f+2$ message delays, and thus by Lemma~\ref{Lemmaeve} after $2f+5$ message delays a correct decides.
\end{proof}

Note that Theorem~\ref{liveness:wts} implies the {Liveness} property of our Lattice Agreement specification.

\subsubsection{Message complexity}
The Byzantine reliable broadcast used at Line~\ref{rb1} costs ${\cal O}(n^2)$ messages \cite{gargdistcomp}, this cost dominates the other algorithm operations: in the  $2f+5$ delays needed to reach the decision at most ${\cal O}(f \cdot n)$ messages are generated. 

\begin{algorithm*}
\caption{{\em GWTS }  -Algorithm for proposer process $p_i$ }\label{gwts:prop}
\footnotesize

\begin{algorithmic}[1]
\State $V$ lattice values, $E$ subset of $V$ of values that can be proposed by processes. 
\State $proposed\_value=pro_i$
\State $Batch[\forall r \in \mathbb{N}]=SvS[\forall k \in \mathbb{N}]=\emptyset$ \Comment{Array of value sets, one batch for each round}
\State $Counter[\forall r \in \mathbb{N}]=0$ \Comment{Array of numbers, one for each round}
\State $r=-1$
\State $ts=0$
\State  $Proposed\_set =Decided\_set=Waiting\_msgs=Ack\_history=\emptyset$
\State $state = newround$

\smallskip

\algrenewcommand\algorithmicprocedure{\textbf{upon event}}

\Procedure{new value}{v}
\State $Batch[r+1]=Batch[r+1] \cup \{v\}$
\EndProcedure
\smallskip

\LineComment{Values Disclosure Phase}

\Procedure{$state = newround$}{}
\State $state = disclosing$
\State $r=r+1$
\State  $Proposed\_set=Proposed\_set \cup Batch[r]$
\State  {\sc ReliableBroadcast}$(<disclosure\_phase, Batch[r],r>)$ to all \label{gwt:rbcast1}
\EndProcedure
\smallskip

\Procedure{RBcastDelivery from sender}{$<disclosure\_phase, Set,round>$}
\If{$\forall e \in Set,\,\, e \in E$}\label{gwt:propo}
\If{$state = disclosing$} \label{gwt:lattice}
\State  $Proposed\_set=Proposed\_set \cup Set$ \label{gwts:cumulation}
\EndIf
\State $SvS[round] = SvS[round] \cup Set$
\State $Counter[round]=Counter[round]+1$
\EndIf
\EndProcedure

\smallskip

\LineComment{Deciding Phase}
\Procedure{$Counter[r] \geq (n-f)$ when  $state = disclosing$}{} \label{gwt:label}
\State $state = proposing$
\State $ts=ts+1$
\State  {\sc Broadcast}$(<ack\_req, Proposed\_set,ts ,r>)$ to all Acceptors \label{gwt:proposefirst}
\EndProcedure

\Procedure{Delivery or RBcastDelivery from sender}{$m$}
\State  $Waiting\_msgs=Waiting\_msgs \cup \{m\}$
\EndProcedure
\smallskip

\Procedure{$\exists m \in Waiting\_msgs | SAFE(m) \land state=proposing \land m=<nack,Rcvd\_set, ts', r'> \land ts'=ts \land r'=r$}{}
\State  $Waiting\_msgs=Waiting\_msgs \setminus \{m\}$
\If{$Rcvd\_set \cup Proposed\_set \neq Proposed\_set$}
\State $Proposed\_set=Rcvd\_set \cup Proposed\_set$ \label{gwtsrefined}
\State $ts=ts+1$
\State  {\sc Broadcast}$(<ack\_req, Proposed\_set,ts,r>)$ to all Acceptors \label{gwt:refine}
\EndIf
\EndProcedure

\smallskip
%
%\LineComment{Messages with a round or a timestamp different from current $r,ts$ are ignored}
%\Procedure{ InternalP2PDelivery {\bf when} $state=proposing$ from sender}{$<nack,Rcvd\_set, ts, r>$}
%\If{$Rcvd\_set \cup Proposed\_set \neq Proposed\_set$}
%\State $Proposed\_set=Rcvd\_set \cup Proposed\_set$ \label{gwtsrefined}
%\State $ts=ts+1$
%\State  {\sc Brodcast}$(<ack\_req, Proposed\_set,r>)$ to all Acceptors
%\EndIf
%\EndProcedure
%\smallskip

\Procedure{$\exists m \in Waiting\_msgs | SAFE(m) \land state=proposing \land m=<ack, Accepted\_set, destination,sender ,timestamp,round> \land m$ was delivered with {\sc RBcastDelivery}}{}
\State  $Waiting\_msgs=Waiting\_msgs \setminus \{m\}$
\State $Ack\_history=Ack\_history \cup \{<ack, Accepted\_set, destination, sender,timestamp,round,>\}$
\EndProcedure
\smallskip

\Procedure{ $<ack, Accepted\_set, destination, \cdot ,timestamp,round>$ appears $\lfloor (n+f)/2 \rfloor +1$ times in $Ack\_history$}{}\label{livenessRSM-b}
\If{$Decided\_set \subseteq Accepted\_set$ and  $state = proposing \land round=r$} \label{gwtstability}
\State {\sc Decide}$(Accepted\_set)$\label{livenessRSM-e}
\State $Decided\_set=Accepted\_set$
\State $state = newround$ \label{abacus}
\EndIf
\EndProcedure

\Procedure{SAFE}{$m$}  \label{gw1}
\If{the lattice element contained in $m$ is a subset of $SvS[r]$}
\State {\bf return} True
\Else
\State {\bf return} False
\EndIf  \label{gw2}
\EndProcedure

\end{algorithmic}
\end{algorithm*}

\begin{algorithm*}
\caption{{\em GWTS}  - Algorithm for Acceptor process $p_i$}\label{gtws:acceptor}
\footnotesize

\begin{algorithmic}[1]
\algrenewcommand\algorithmicprocedure{\textbf{upon event}}
\State $Accepted\_set = Waiting\_msgs=Ack\_history=\emptyset$ 
\State $SvS[]$ \Comment{Reference to SvS in the corresponding Proposer} 
\State $Safe\_r=0$  \Comment{Max round for which it is safe to process messages}
\smallskip
\Procedure{Delivery or RBCastDelivery from sender}{$m$}
\State  $Waiting\_msgs=Waiting\_msgs \cup \{m\}$
\EndProcedure
\smallskip

%\Procedure{$\exists m \in Waiting\_msgs$ with round $r \leq Safe\_r$  such that all values in $m$ are in $SvS[r]$}{}
%\State  $Waiting\_msgs=Waiting\_msgs \setminus \{m\}$
%\State route message $m$ to the appropriate internal delivery function. 
%\EndProcedure
%\smallskip

\smallskip
\Procedure{$\exists m \in Waiting\_msgs | SAFEA(m) \land r \leq Safe\_r \land m=<ack\_req,Rcvd\_set,ts,r>$}{}
\State  $Waiting\_msgs=Waiting\_msgs \setminus \{m\}$
\If{ $Accepted\_set \leq
 Rcvd\_set$}
\State  $Accepted\_set=Rcvd\_set$ \label{gwtaa}
\State {\sc ReliableBroadcast}$(<ack, Accepted\_set, sender,p_i,ts,r>)$ to all \label{rback}
\Else
\State  {\sc Send to sender} $(<nack, Accepted\_set,ts,r>)$
\State $Accepted\_set=Accepted\_set \cup Rcvd\_set$ \label{gwtab}
\EndIf
\EndProcedure

\smallskip
\Procedure{$\exists m \in Waiting\_msgs | SAFEA(m) \land r \leq Safe\_r \land m=<ack, Accepted\_set, destination,sender ,ts,r> \land m$ was delivered with {\sc RBcastDelivery}}{}
\State  $Waiting\_msgs=Waiting\_msgs \setminus \{m\}$
\State $Ack\_history=Ack\_history \cup \{<ack, Accepted\_set, destination, sender,ts,r,>\}$
\EndProcedure

\smallskip
\Procedure{ $<ack, Accepted\_set, destination, \cdot ,ts,r>$ appears $\lfloor (n+f)/2 \rfloor +1$ times in $Ack\_history$}{} \label{gwts:accround}
\If{$r = Safe\_r$}
\State $Safe\_r=Safe\_r+1$ 
\EndIf
\EndProcedure

\Procedure{SAFEA}{$m$}  \label{gaw1}
\If{$\exists r$ such that the lattice element contained in $m$ is a subset of $SvS[r]$}
\State {\bf return} True
\Else
\State {\bf return} False
\EndIf  \label{gaw2}
\EndProcedure
\end{algorithmic}
\end{algorithm*}
\section{  Algorithm {\em Generalized Wait Till Safe} (GWTS)}% (fold)
\label{sec:gwts}

\subsection{The Generalised Byzantine Lattice Agreement Problem}
In the generalised version of our problem, each process $p_i$  receives, asynchronously, input values from an infinite 
sequence $Pro_i=\langle pro_{0},pro_{1}, pro_{2},\ldots \rangle$. Each $pro_i$ belongs to a set $E$ of values that can be proposed, note that $E$ is not necessarily finite. 

  \begin{itemize}
\item {\bf Liveness:} each correct process $p_i \in C$ performs an infinite sequence of decisions $Dec_{i}= \langle dec_{0},dec_{1}, dec_{2},\ldots \rangle$;
\item {\bf Local Stability:} For each $p_i \in C$  its sequence of decisions is non decreasing (i.e., $dec_{h} \subseteq dec_{h+1}$, for any $dec_{h} \in Dec_{i}$);
\item {\bf Comparability:} Any two decisions of correct processes are comparable, even when they happen on different processes; 
\item {\bf Inclusivity:} Given any correct process $p_i \in C$, if $Pro_i$ contains a value $pro_k$, then $pro_k$ is eventually included in $dec_{h} \in Dec_{i}$;
\item {\bf Non-Triviality:}  Given any correct process $p_i \in C$ if $p_i$ outputs some decision $dec_k$ at time $t$, then 
 \\ $dec_k \leq \bigoplus (Prop[0:h] \cup B[0:b])$. Where:

 $Prop[0:h]$ is the union of the prefixes, until index $h$, of all sequences $Pro_i$ of correct processes; and, $B[0:b]$ is the union of all prefixes, until index $b$,  of 
 $f$  infinite sequences $B_i$, one for each Byzantine process.  Each $B_i$ is a sequence of values in $E$.

  \end{itemize}

Intuitively, with {Non-Triviality}  we are bounding the number of values that the Byzantine processes could insert in any decision to a finite number of values. 

\subsection{Algorithm Description}
The pseudocode of GWTS is in Algorithms~\ref{gwts:prop}-\ref{gtws:acceptor}.

The \emph{Generalized Wait Till Safe} algorithm is an extension of the WTS algorithm based on the same batching approach proposed in~\cite{Faleiro:2012}. Input values at proposers are batched until a new decision round starts. Each decision round follows the two-phases approach of WTS. Note that rounds are executed asynchronously at each proposer.\footnote{The Byzantine reliable broadcast primitive used in \cite{gargdistcomp} is designed to avoid possible confusion of messages in round based algorithms. That is exactly what we need.}

Compared to WTS, an additional challenge to be faced is to prevent adversarial processes from indefinitely postpone the decision correct processes. A uncareful design could allow byzantine proposers to continuously pretend to have decided, thus jumping to new rounds, and clogging the proposers with a continuous stream of new values. This would make acceptors to continuously nack proposals of correct processes. We solve this problem through the acceptors. Acceptors will help a new proposal to be decided in round $r\geq 1$ when, and if, in round $(r-1)$   a proposal has been accepted by at least a (Byzantine) quorum of acceptors (i.e., $safe\_r = r$ ). 
In order for this to work we make acceptors to reliably broadcast their ack messages, in this way the acceptance of proposals is made public. Any correct proposer can decide, in a round $r$, any proposal that has been correctly accepted in round $r$, even if it was not proposed by itself (provided that such decision preserves the Local Stability). 
\subsection{GWTS properties}

\subsubsection{Safety properties}

The proof of the safety properties of GWTS is analogous to the proof contained in Section~\ref{safetyoneshot}. 
From the properties of reliable broadcast we have the following:
\begin{observation}\label{finitevalues}
For each correct process $p_i$ and each round $r$, the set $SvS[r]$ contains at most $n$ sets. 
\end{observation}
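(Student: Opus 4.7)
The plan is to track the only site where $SvS[r]$ is mutated and bound the number of contributing messages using the properties of the Byzantine reliable broadcast primitive. Looking at Algorithm~\ref{gwts:prop}, the variable $SvS[r]$ is initialized to $\emptyset$ and is only ever modified inside the \textsc{RBcastDelivery} handler for messages of the form $\langle disclosure\_phase, Set, round\rangle$ with $round=r$, via the assignment $SvS[r]=SvS[r]\cup Set$. So, to bound the number of sets that appear in (or, equivalently, that contribute their elements to) $SvS[r]$, it suffices to bound the number of distinct messages of this form that $p_i$ delivers for round $r$.

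First I would invoke the no-duplication property of reliable broadcast (as specified in~\cite{gargdistcomp}): a correct recipient delivers at most one message per sender per broadcast instance. Combined with the round-tagging discipline highlighted in the footnote of Section~\ref{sec:gwts}, this means that for any sender $p_j\in P$ (including Byzantine ones) and any round $r$, $p_i$ delivers at most one message $\langle disclosure\_phase, Set, r\rangle$ attributed to $p_j$. For a correct $p_j$, this follows because $p_j$ broadcasts only once per round (the \textsc{RBcast} at Line~\ref{gwt:rbcast1} is executed once per execution of the \emph{state = newround} handler, which advances $r$); for a Byzantine $p_j$, it follows from the no-duplication guarantee applied to the round-indexed broadcast instance.

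Since $|P|=n$, the number of senders is at most $n$, so the assignment $SvS[r]=SvS[r]\cup Set$ is executed at most $n$ times during the whole execution of $p_i$, once per sender. Hence $SvS[r]$ is the union of at most $n$ sets, one per process. This is exactly the claim of Observation~\ref{finitevalues}.

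The only delicate point is the use of the no-duplication property for Byzantine senders; once we appeal to the round-indexed Byzantine reliable broadcast of~\cite{gargdistcomp}, the counting argument is entirely mechanical. Note that we do not need to say anything about the \emph{contents} of the sets contributed (Byzantines may propose whatever they wish, subject to the check on Line~\ref{gwt:propo}); the bound is purely on the number of contributing broadcast instances.
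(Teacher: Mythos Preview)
Your argument is correct and matches the paper's approach exactly: the paper simply states that the observation follows from the properties of reliable broadcast, and you have spelled out the underlying counting argument (at most one \textsc{RBcastDelivery} of a $\langle disclosure\_phase,\cdot,r\rangle$ message per sender per round, hence at most $n$ contributing sets). There is nothing to add.
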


\begin{theorem}\label{th:gwtssafety}
Let us consider a set of processes, of size $(3f+1)$, executing GWTS algorithm. 
Algorithm GWTS enforces: (1) { Comparability}; (2) { Non-Triviality}; (3) { Stability}.
\end{theorem}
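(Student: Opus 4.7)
The plan is to mirror the structure of Section~\ref{safetyoneshot}, handling each round $r$ in isolation via a Byzantine-quorum intersection argument and then chaining the rounds together through the fact that each acceptor's $Accepted\_set$ persists across rounds. I treat the three properties in the order Local Stability, Comparability, Non-Triviality, each relying only on invariants of the pseudocode together with the previously established property.

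First, Local Stability is immediate from the test $Decided\_set \subseteq Accepted\_set$ at Line~\ref{gwtstability}: before emitting a new decision a correct proposer checks that the new candidate includes its previous decision (stored in $Decided\_set$), so the sequence $Dec_i$ is non-decreasing. For Comparability, the plan has two layers. Fix a round $r$ and call a $Proposed\_set$ \emph{$r$-committed} if it collected $\lfloor(n+f)/2\rfloor+1$ acks tagged with round $r$. A verbatim adaptation of Lemma~\ref{Lemmacomm} works inside round $r$: two $r$-committing quorums intersect in at least one correct acceptor, whose $Accepted\_set$ only grows and whose acceptance test (Line~\ref{gwtaa} in Algorithm~\ref{gtws:acceptor}) forces the later $r$-committed proposal to contain the earlier one. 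To chain rounds I argue that whenever a proposal is $r$-committed at time $t$, every $(r{+}1)$-committed proposal at any time $t'>t$ contains it. This relies on (i)~$Accepted\_set$ is never reset between rounds at an acceptor, and (ii)~an acceptor only processes an $ack\_req$ tagged $r{+}1$ after $Safe\_r$ reaches $r{+}1$, which by Line~\ref{gwts:accround} requires having seen an $r$-committed ack quorum and therefore having already absorbed the $r$-committed value into its own $Accepted\_set$. Intersecting the $r$- and $(r{+}1)$-committing quorums of correct acceptors then gives inclusion; iterating across rounds plus Observation~\ref{obsobv} (a decided value is committed for its round) yields comparability of every pair of decisions of correct processes, no matter which round they belong to or which proposer originated them.

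Finally, Non-Triviality follows by controlling which values can ever reach the $Proposed\_set$ of a correct proposer. By the reliable-broadcast guarantee used at Line~\ref{gwt:rbcast1}, each sender contributes at most one set to each $SvS[r]$ of a correct $p_i$, and the filter at Line~\ref{gwt:propo} discards any broadcast carrying values outside $E$; this is the GLA analogue of Observation~\ref{obssvs}. Hence every ``safe'' message handled by a correct proposer carries only values in $E$, with at most one round-$r$ contribution per sender. Since a correct proposer only feeds into $SvS[r]$ values drawn from a finite prefix of its input stream $Pro_i$, summing over the finitely many rounds preceding an arbitrary decision $dec_k$ bounds it by $\bigoplus(Prop[0:h]\cup B[0:b])$ for some finite $h,b$, where each Byzantine sequence $B_i$ records, round by round, the (at most one per round) set reliably delivered from Byzantine $p_i$.

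The main obstacle is the cross-round step of Comparability: rounds are numbered per proposer, they progress asynchronously, and a correct proposer may even decide an accepted proposal it did not originate (a feature deliberately introduced to avoid the liveness problem discussed around Line~\ref{abacus}). Getting a clean inductive invariant requires carefully combining the acceptor-side $Safe\_r$ mechanism with the persistence of $Accepted\_set$; once that step is in place, the rest is a routine adaptation of the WTS safety proof.
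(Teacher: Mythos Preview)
Your treatments of Local Stability and Non-Triviality match the paper's: Stability from the guard at Line~\ref{gwtstability}, and Non-Triviality from the fact that every value in a decision lies in $\bigcup_{r'\le r}SvS[r']$, with the reliable-broadcast filter at Line~\ref{gwt:propo} bounding each sender's per-round contribution. No issue there.

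For Comparability, however, your route diverges from the paper's and contains a gap. The paper does \emph{not} chain rounds through $Safe\_r$; it simply re-invokes Lemma~\ref{Lemmacomm} (and Observation~\ref{obsobv}) globally. This works because a correct acceptor's $Accepted\_set$ is a single monotone variable that is never reset between rounds: any two committed proposals (regardless of their round tags) have correct-acceptor quorums that intersect, and the intersection acceptor processed the two $ack\_req$s in some local order, forcing one proposal to include the other. The $Safe\_r$ machinery plays no role in the safety argument; it is there purely for liveness (Lemmas~\ref{trustlegits}--\ref{legitimateend}).

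Your step~(ii) is where the gap lies. You claim that an acceptor with $Safe\_r \ge r{+}1$ ``has already absorbed the $r$-committed value into its own $Accepted\_set$''. That inference is false: $Safe\_r$ is advanced from $Ack\_history$ (Line~\ref{gwts:accround} of Algorithm~\ref{gtws:acceptor}), which records reliably-delivered \emph{ack} messages from \emph{other} acceptors; nothing in that path writes into the local $Accepted\_set$, which is updated only on receipt of an $ack\_req$ (Lines~\ref{gwtaa} and~\ref{gwtab}). An acceptor can therefore trust round $r{+}1$ without ever having processed the $ack\_req$ that was $r$-committed. Once (ii) falls, your cross-round inclusion no longer follows from your stated premises. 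The repair is precisely the paper's shortcut: drop the per-round decomposition and argue comparability of any two committed proposals directly from quorum intersection and the monotonicity of the (round-agnostic) $Accepted\_set$, i.e., your item~(i) alone, applied globally.
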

\begin{proof}
We prove each property separately.
\begin{enumerate}
 \item is implied by Lemma~\ref{Lemmacomm} and Observation~\ref{obsobv}.

\item For the Non-Triviality we have to show that for each correct proposer $p_i$ and each $dec_k \in Dec_{i}$ it holds  $dec_t \leq \bigoplus ( Prop[0:h] \cup B[0:b])$. First notice that $dec_t$ may only contain values that are present in $\bigcup_{r' \in [0,r]} SvS[r']$ with $r$ rounds in which $dec_k$ happens.
This derives from the fact that a correct process, at a given round, only handles messages that contain safe values. Let  $W_{r}=\bigcup_{r' \in [0,r]}SvS[r']$, it is immediate to see that $W_{r}$ contains at most the union of all values in the prefixes $Prop_i[0:h]$ from some index $h$, which correspond precisely to all values proposed by correct
processes until round $r$. It is also immediate that  $W_{r}$ contains at most the union of all values that Byzantine proposers have reliably broadcast in the first $r$ disclosure phases: this is equivalent to say that it contains a prefix of all the infinite sequences of values that Byzantine cumulatively
broadcast in the disclosure phases of our algorithm.  From these arguments, and the check at Line \ref{gwt:propo} of Algorithm \ref{gwts:prop}, the Non-Triviality follows. 
\item is ensured by line~\ref{gwtstability} in proposers. \end{enumerate}\end{proof}

\subsubsection{Liveness}

We say that a correct process $p_i$ joins a round $r$ if it sends a message $<disclosure\_phase, \cdot,r>$. 
Similarly a correct process $p_i$ proposes a $Set$ at round $r$ if it sends a message  $<ack\_req, Set,\cdot,r>$.

We say that a message $m$ is safe for a process $p_i$ at round $r$, if $SvS[r]$ of $p_i$ contains of all values contained in $m$. 

\begin{Lemma}\label{gwtLemmasafe}
Each message sent by a correct process at round $r$ is eventually safe, at round $r$, for any other correct process.
\end{Lemma}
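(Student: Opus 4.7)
The plan is to combine a provenance invariant on the contents of outgoing messages with the totality property of Byzantine reliable broadcast.

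First, I would establish by induction on the sequence of updates to $Proposed\_set$ (for proposers) and $Accepted\_set$ (for acceptors) that every value ever present in either set at a correct process originates from an \textsc{RBcastDelivery} of some disclosure-phase message. The base case is the empty set; for the inductive step, these sets grow only by (i) a local $Batch[r']$ that the correct process itself reliably broadcasts at Line~\ref{gwt:rbcast1}, (ii) union with the payload of an \textsc{RBcastDelivery}ed disclosure-phase message, or (iii) union with an $Rcvd\_set$ carried in a \textsc{SAFE} incoming nack or ack\_req; in case (iii), $Rcvd\_set$ is a subset of the sender's own $Accepted\_set$ or $Proposed\_set$ at the sending time, so the induction hypothesis applied along the causal chain reduces back to cases (i) or (ii).

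Second, I would apply totality of Byzantine reliable broadcast: whenever $p_j$ delivers a disclosure-phase broadcast $\mu_v = \langle \textsc{disclosure\_phase}, S_v, r \rangle$ carrying a value $v$, every correct process $p_i$ eventually delivers the same $\mu_v$ and therefore inserts $S_v \supseteq \{v\}$ into $SvS[r]$. Iterating over the finitely many values of $m$, each such value eventually lies in $p_i$'s $SvS[r]$, which is precisely the definition of $m$ being safe at round $r$ for $p_i$.

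The main obstacle is aligning the round index of the outgoing message with the round index of the disclosure-phase messages that originally delivered its constituent values. Because $Proposed\_set$ is never reset between rounds, I would have to argue that values carried by an ack\_req or nack tagged with round $r$ were in fact disclosed under round label $r$ specifically; this should follow from the $state = disclosing$ guard together with the phase-transition condition $Counter[r] \geq n-f$ at Line~\ref{gwt:label}, which synchronises the contents of $Proposed\_set$ with $SvS[r]$ at the moment the proposing phase for round $r$ begins. Once this alignment invariant is in place, the reliable-broadcast totality step above closes the proof.
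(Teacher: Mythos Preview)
Your core approach—trace every value in an outgoing message back to a reliably-broadcast disclosure, then invoke totality—is exactly the paper's; its proof reads in full ``Same of Lemma~\ref{Lemmasafe}'', i.e., the values a correct process sends already live in its own $SvS$, and reliable-broadcast totality eventually places them in every correct process's $SvS$. Your inductive provenance invariant is simply a more explicit rendering of that first clause.

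Where you go further than the paper is in flagging the round-alignment obstacle, and you are right that it is a genuine issue the paper's one-line proof does not address. However, your proposed resolution does not close it. The $state = disclosing$ guard and the $Counter[r] \ge n{-}f$ trigger do \emph{not} force $Proposed\_set \subseteq SvS[r]$ at the moment the proposing phase for round $r$ begins: $Proposed\_set$ is never cleared between rounds, so at the start of round $r \ge 1$ it already carries every value accumulated in rounds $0,\dots,r{-}1$, while $SvS[r]$ is populated only by disclosure-phase broadcasts tagged with round label exactly $r$ (Line~\ref{gwt:rbcast1} broadcasts only $Batch[r]$, not the whole $Proposed\_set$). No amount of waiting on $Counter[r]$ can migrate those inherited values into $SvS[r]$. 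As the pseudocode stands, the lemma is therefore not literally true for the per-round $SvS[r]$; it holds only with respect to the cumulative set $\bigcup_{r' \le r} SvS[r']$, which is almost certainly the intended invariant. Under that reading, your provenance-plus-totality argument (and the paper's) goes through cleanly.
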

\begin{proof}
Same of Lemma~\ref{Lemmasafe}
\end{proof}

\begin{definition}
We say that a round $r$ has a {``legitimate end"} if there exists a proposal that has been committed at round $r$.
\end{definition}
(See Definition~\ref{def:commitedproposal} of committed proposal) 
\begin{definition} Round $r$ is a legit round, at time $t$, if, either,  $r$ is $0$ or $(r-1)$  had a legitimate end before time $t$. 
\end{definition}

\begin{definition} An acceptor trusts round $r$ if its $Safe\_r \geq r$.
\end{definition}

\begin{Lemma} \label{trustlegits}
If $r$ is a legit round, then eventually any correct acceptor will trust round $r$. 
\end{Lemma}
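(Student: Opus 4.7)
The plan is to prove the lemma by induction on $r$. For the base case $r=0$, every correct acceptor initializes $Safe\_r=0$, so it trivially trusts round $0$ from the start. The inductive step is where the real content lies: assuming the statement holds for round $r-1$, I want to show it holds for round $r$. So suppose round $r$ is legit at some time $t$, meaning round $r-1$ had a legitimate end by time $t$, i.e.\ some proposal $Proposed\_set$ was committed at round $r-1$.

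By Definition~\ref{def:commitedproposal} and the reasoning in the proof of Lemma~\ref{Lemmacomm}, a committed proposal at round $r-1$ means that some specific ack tuple $\langle ack, Accepted\_set, destination, \cdot, ts, r-1\rangle$ occurred $\lfloor (n+f)/2\rfloor +1$ times in the $Ack\_history$ of some correct proposer. Each such ack was produced by Line~\ref{rback} of the acceptor code, i.e.\ through a reliable broadcast. By the totality property of Byzantine reliable broadcast, every correct acceptor eventually delivers each of these acks as well, so every correct acceptor eventually accumulates the same $\lfloor (n+f)/2\rfloor +1$ copies of that tuple into its own $Ack\_history$. Moreover, by Lemma~\ref{gwtLemmasafe} the lattice contents of these acks become safe at round $r-1$ for every correct acceptor, so the \textsc{SafeA} guard will not prevent them from being processed.

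It remains to show that each correct acceptor actually executes the increment at Line~\ref{gwts:accround}. The increment requires both that enough copies of the tuple are present and that $Safe\_r = r-1$ at that moment. By the inductive hypothesis every correct acceptor eventually has $Safe\_r \geq r-1$, and by the previous paragraph it eventually has the required acks for round $r-1$ in $Ack\_history$ for all time thereafter. Since the event handler is a ``whenever the condition holds'' trigger and the ack copies persist in $Ack\_history$, as soon as the acceptor's $Safe\_r$ reaches $r-1$ the guard fires and $Safe\_r$ is incremented to $r$.

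The main obstacle I anticipate is exactly the pseudocode semantics question raised above: the update in Line~\ref{gwts:accround} only advances $Safe\_r$ by one, and it is tied to the specific value $r = Safe\_r$. One needs to argue carefully that this condition is indeed re-examined (or equivalently, that the event fires) after $Safe\_r$ is bumped, so that rounds whose acks arrive out of order do not get ``stuck''. The cleanest way to handle this is to treat $\langle ack,\ldots,r\rangle$ appearances in $Ack\_history$ as a monotone predicate and interpret the ``upon'' block as firing whenever the conjunction with $r = Safe\_r$ holds; under that interpretation the inductive jump from $r-1$ to $r$ goes through as described, which is enough to conclude the lemma.
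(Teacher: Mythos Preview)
Your proof is correct and follows essentially the same approach as the paper: induction on $r$, with the base case handled by the initialization $Safe\_r=0$, and the inductive step using the legitimate end of round $r-1$ to obtain $\lfloor (n+f)/2\rfloor+1$ reliably broadcast acks that, by totality of reliable broadcast together with Lemma~\ref{gwtLemmasafe} and the inductive hypothesis, eventually trigger the increment at line~\ref{gwts:accround} on every correct acceptor. Your treatment is somewhat more detailed than the paper's (in particular your discussion of the ``upon'' semantics and the persistence of entries in $Ack\_history$), but the underlying argument is the same; one small inaccuracy is that your phrasing ``occurred \ldots\ in the $Ack\_history$ of some correct proposer'' is a consequence rather than the definition of a committed proposal---the definition is that $\lfloor (n+f)/2\rfloor+1$ acceptors \emph{sent} (i.e.\ reliably broadcast) the ack---but this does not affect the validity of the argument.
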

\begin{proof}
The proof is by induction on $r$.
\begin{itemize}
\item {\bf Base case}: for round $r=0$ each acceptor has $Safe\_r$ initialized to $0$. 
\item {\bf Inductive case}: By inductive hypothesis we have that eventually any acceptor sets  $Safe\_r=r-1$. Moreover, we have that, by definition of legit, round $(r-1)$  had a legitimated end. 
This means that $2f+1$ acceptors reliably broadcast $<ack, Accepted\_set, destination, \cdot,ts,r-1>$, and at least one of them is correct process $p$, thus the message $<ack, Accepted\_set, destination, \cdot,ts,r-1>$ is safe for $p$ at round $(r-1)$ . 
Any acceptor with  $Safe\_r=r-1$ upon receipt of these messages will eventually process them, by Lemma~\ref{gwtLemmasafe}, and it will set  $Safe\_r=r$, see procedure starting at line~\ref{gwts:accround}.
\end{itemize}\end{proof}

Note that by Lemma~\ref{trustlegits} we have that any legit round will be eventually trusted by all acceptors. Moreover, we can show that if $r$ is a non-legit round at time $t$ than it will not be trusted, at time $t$, by any correct
acceptor.
\begin{Lemma} \label{trustnonolegits}
If $r$ is a non-legit round, at time $t$, that is, $r\neq 0$ and $(r-1)$  has not had a legitimate end before time $t$, then any correct acceptor has $Safe\_r < r$.
\end{Lemma}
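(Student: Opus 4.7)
The plan is to prove the contrapositive: if a correct acceptor $p_i$ has $Safe\_r \geq r$ at time $t$, then round $(r-1)$ must have had a legitimate end strictly before time $t$, i.e., some proposal was committed at round $(r-1)$.

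First, I would inspect the only place in Algorithm~\ref{gtws:acceptor} where $Safe\_r$ is mutated, namely the procedure at Line~\ref{gwts:accround}: the variable is initialized to $0$ and is only ever incremented by one, and this increment from value $k$ to value $k+1$ occurs exclusively when $p_i$ processes an event where a tuple of the form $<ack, Accepted\_set, destination, \cdot, ts, k>$ has appeared $\lfloor (n+f)/2 \rfloor + 1$ times in its local $Ack\_history$ at round $k$. Hence, for $p_i$ to satisfy $Safe\_r \geq r$ at time $t$ (with $r \geq 1$), there must exist some earlier time $t' < t$ at which such an event fired with $k = r-1$.

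Next I would argue that the $\lfloor (n+f)/2 \rfloor + 1$ entries in $Ack\_history$ correspond to $\lfloor (n+f)/2 \rfloor + 1$ distinct sender acceptors who reliably broadcast the same ack for round $(r-1)$. This follows because entries are added to $Ack\_history$ only upon delivery of reliably broadcast acks, and the reliable broadcast primitive ensures that a given sender can deliver at most one message per broadcast tag to any correct process. By Definition~\ref{def:commitedproposal}, this means that the corresponding proposal was committed at round $(r-1)$ no later than time $t'$, so round $(r-1)$ has had a legitimate end before time $t$. Taking the contrapositive yields the statement.

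The main obstacle, which is really just a careful bookkeeping issue rather than a deep one, is ruling out that Byzantine acceptors inflate the count in $Ack\_history$ to $\lfloor (n+f)/2 \rfloor + 1$ without a corresponding commitment actually having occurred; this is dispelled by the fact that all acks are reliably broadcast (Line~\ref{rback}) and $Ack\_history$ is indexed by sender, so the count faithfully reflects distinct acceptor broadcasts, which is exactly what the definition of \emph{committed} requires. A minor side remark is that the argument does not require induction on $r$: one only needs to observe the single most recent increment of $Safe\_r$ that brought it to $r$, since the relevant commitment event happens strictly before $t$ by causal order.
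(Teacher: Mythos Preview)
Your proposal is correct and follows essentially the same approach as the paper. The paper's own proof is a one-line remark that the claim is immediate from the definition of legitimate end and the guard at Line~\ref{gwts:accround}; your contrapositive argument simply unpacks that one-liner, and your extra paragraph about Byzantine acceptors and the sender-indexed \texttt{Ack\_history} addresses a detail the paper leaves implicit.
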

\begin{proof}
The proof derives immediately from the definition of legitimate end and from line~\ref{gwts:accround} in the acceptor code.
\end{proof}

\begin{definition} A value $v$ has been disseminated, by time $t$, if, by time $t$, it was contained in a safe $ack\_req$ message for some round $r$ and it has been received by $\lfloor (n+f)/2\rfloor +1$ correct acceptors that trusted round $r$. 
\end{definition}

The observation below is a strengthen version of Lemma~\ref{Lemmacomm}.
\begin{observation}\label{obs:inclusion}
If a value $v$ has been disseminated by time $t$, then any proposal committed after time $t$ will contain $v$.
\end{observation}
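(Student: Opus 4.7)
The plan is to parallel the proof of Lemma~\ref{Lemmacomm}, now exploiting the larger quorum supplied by the definition of dissemination. By hypothesis, a set $Q_1 \subseteq C$ of $\lfloor(n+f)/2\rfloor+1$ correct acceptors trusts round $r$ and has received the safe $ack\_req$ message carrying $v$ by time $t$. Since the message is safe and the round is trusted, each acceptor in $Q_1$ processes the $ack\_req$ through the handler of Algorithm~\ref{gtws:acceptor}, and in either branch of the inner if the value $v$ is absorbed into the local $Accepted\_set$ (either by the assignment $Accepted\_set := Rcvd\_set$ when the if-condition holds, or by the update $Accepted\_set := Accepted\_set \cup Rcvd\_set$ in the else branch). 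Since $Accepted\_set$ only ever grows, $v$ remains in it forever once placed there.

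Next, I would consider an arbitrary proposal $P$ committed after time $t$. By Definition~\ref{def:commitedproposal} the commit rests on $\lfloor(n+f)/2\rfloor+1$ matching acks; at most $f$ of these can come from Byzantine acceptors, so a set $Q_2 \subseteq C$ of at least $\lfloor(n-f)/2\rfloor+1$ correct acceptors has acked $P$. A direct count gives $|Q_1| + |Q_2| \geq \lfloor(n+f)/2\rfloor + \lfloor(n-f)/2\rfloor + 2 > n-f = |C|$, hence $Q_1 \cap Q_2 \neq \emptyset$. Picking $p \in Q_1 \cap Q_2$: at the moment $p$ emits the ack counted toward the commit, its $Accepted\_set$ already contains $v$, and because an ack is emitted only when $Accepted\_set \subseteq Rcvd\_set$, the $Rcvd\_set$ — that is, the committed proposal $P$ — must contain $v$.

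The main delicacy is temporal: the argument requires that the intersected acceptor $p$ has processed the $v$-carrying $ack\_req$ \emph{before} emitting its ack for $P$. I would address this by reading ``disseminated by time $t$'' in the strong sense that $v$ is already present in $Accepted\_set$ at every $Q_1$ acceptor by time $t$, together with ``committed after $t$'' meaning that the commit event at the proposer is strictly later than $t$; monotonicity of $Accepted\_set$ then guarantees that any ack a $Q_1$ acceptor emits from time $t$ onward carries $v$. Should that reading prove too generous in a corner case, I would refine the definition of dissemination to require that the $v$-carrying $ack\_req$ has been actually processed (not merely received) by the $Q_1$ acceptors and that the acks triggering the later commit are emitted afterwards — both conditions are consistent with how the observation is used in the subsequent liveness analysis.
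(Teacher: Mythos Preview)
Your proof is correct and mirrors the paper's approach: the paper simply observes that a disseminated value lies in the $Accepted\_set$ of a quorum of correct acceptors (via lines~\ref{gwtaa} or~\ref{gwtab} of Algorithm~\ref{gtws:acceptor}) and then reuses the intersection argument of Lemma~\ref{Lemmacomm}. Your version is more explicit---in particular you spell out the quorum arithmetic and flag the temporal ordering that the paper leaves tacit---but the core idea is identical.
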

\begin{proof}
The proof is immediate by observing that a disseminated value is in the $Accepted\_set$ of $\lfloor (n-f)/2\rfloor +1$ correct acceptors (either by line~\ref{gwtaa} or~\ref{gwtab} of Algorithm 4), and by using the same argument of Lemma~\ref{Lemmacomm}. 
\end{proof}

\begin{Lemma} \label{correctdecision}
If round $r$ has a legitimate end and at least $(n-f)$ correct proposers joined round $r$, then eventually any correct proposer, that joined round $r$, will decide in round $r$, and join round $(r+1)$ .
\end{Lemma}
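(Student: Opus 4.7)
The strategy is to identify a concrete quorum of ack messages at round $r$ that will simultaneously satisfy the $\lfloor(n+f)/2\rfloor+1$ counting condition and the $Decided\_set \subseteq Accepted\_set$ check in line~\ref{livenessRSM-b} at any correct proposer $p_i$ that joined round $r$. Once both conditions hold, the decision code will fire, $p_i$ will set $state = newround$, and the $newround$ handler will immediately increment $r$ and reliably broadcast the new $disclosure\_phase$ message, thereby joining round $r+1$.

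For the counting condition, I would exploit the hypothesis that $r$ has a legitimate end: by Definition~\ref{def:commitedproposal} this means there is a proposal $S'$ that received $\lfloor(n+f)/2\rfloor+1$ acks, reliably broadcast by acceptors and sharing the same destination $d$ and timestamp $ts$. At least $\lfloor(n-f)/2\rfloor+1$ of these originate from correct acceptors; by validity of reliable broadcast (and totality for the Byzantine-sourced ones that were already delivered to a correct process, which is implied by the commit being observed), together with Lemma~\ref{gwtLemmasafe} to ensure safety at round $r$, the full quorum of acks eventually populates $p_i$'s $Ack\_history$. For the inclusion condition, if $r=0$ then $Decided\_set=\emptyset$ and the check is trivial; otherwise $p_i$'s $Decided\_set$ was itself a proposal committed at round $r-1$, and its carrying $ack\_req$ is safe at round $r-1$ and therefore eventually delivered and processed by every correct acceptor, which trusts round $r-1$ by Lemma~\ref{trustlegits}. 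Hence $Decided\_set$ becomes disseminated, and Observation~\ref{obs:inclusion} gives $Decided\_set \subseteq S'$.

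Putting these two pieces together, the trigger of line~\ref{livenessRSM-b} fires at $p_i$ with $Accepted\_set = S'$; $p_i$ decides $S'$, updates $Decided\_set$, and sets $state = newround$ at line~\ref{abacus}. The $state = newround$ handler then executes, incrementing $r$ and broadcasting the round-$(r+1)$ disclosure message, which by definition means $p_i$ joins round $r+1$. The main obstacle I foresee is the timing in the inclusion argument: $Accepted\_set$ at different acceptors is updated asynchronously, so to apply Observation~\ref{obs:inclusion} one must argue that at least $\lfloor(n-f)/2\rfloor+1$ correct acceptors have $Decided\_set$ absorbed into their $Accepted\_set$ before the commit of $S'$ occurs. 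This should follow by noting that any correct acceptor whose ack contributed to the round-$(r-1)$ commit of $Decided\_set$ must have processed the carrying $ack\_req$ prior to sending that ack, and then combining this with a quorum-intersection argument against the $\lfloor(n-f)/2\rfloor+1$ correct acceptors whose acks commit $S'$ at round $r$.
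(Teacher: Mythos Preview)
Your approach is essentially the paper's: exploit the legitimate-end hypothesis to obtain a committed $Accepted\_set$ at round $r$, use reliable-broadcast totality plus Lemma~\ref{gwtLemmasafe} to get the full ack quorum into $p_i$'s $Ack\_history$, and invoke Observation~\ref{obs:inclusion} for $Decided\_set \subseteq Accepted\_set$. Your treatment of the timing issue behind that inclusion is in fact more careful than the paper's one-line appeal to Observation~\ref{obs:inclusion}.

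There is one small gap. You never use the hypothesis that at least $(n-f)$ correct proposers joined round $r$, and you need it: the ack-handling procedure and the decision guard at line~\ref{gwtstability} both require $state = proposing$, but a proposer that has joined round $r$ is initially in $state = disclosing$ and only transitions to $proposing$ when $Counter[r] \geq n-f$ (line~\ref{gwt:label}). Without the $(n-f)$-joiners hypothesis, $p_i$ could remain stuck in $disclosing$, buffering the acks in $Waiting\_msgs$ forever and never firing the decision. The paper's proof makes exactly this point in its second sentence; add the corresponding one-line argument and your proof is complete.
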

\begin{proof} The proof is by contradiction. 
Let $p_i$ be a correct process that joined round $r$ but has not yet decided in round $r$. Note that $p_i$ has to be in state $proposing$: by hypothesis $(n-f)$ correct proposers joined $r$, thus the guard at line \ref{gwt:label} has to be eventually triggered. 

Since $r$ has a legitimate end then there are $\lfloor (n+f)/2\rfloor +1$ reliable broadcast of messages  $<ack, Accepted\_set, destination, \cdot,ts,r>$,  and at least one of them has been generated by a correct process $p$, thus it is safe for $p$ at round $r$. 
By Observation~\ref{obs:inclusion} we have that $Decide\_set \subseteq Accepted\_set$, and, by Lemma~\ref{gwtLemmasafe}, upon receipt of these messages $p_i$ decides and joins round $(r+1)$ .
\end{proof}

\begin{Lemma} \label{eventualjoin}
If $r$ is a legit round, then any correct proposer eventually joins it. 
\end{Lemma}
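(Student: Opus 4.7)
My approach is induction on $r$.

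The base case $r=0$ is immediate: every correct proposer starts in state $newround$ with its local $r$ initialised to $-1$, so the guard on $state=newround$ fires, increments $r$ to $0$, and reliably broadcasts a disclosure message with round field $0$ at line~\ref{gwt:rbcast1}. Hence every correct proposer joins round $0$.

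For the inductive step at $r \geq 1$, I would first establish a small preliminary claim, namely that ``$r$ legit implies $r-1$ legit''; this is the one subtlety of the proof. By definition, the legitimate end of $r-1$ means a proposal was committed at round $r-1$, which required $\lfloor (n+f)/2\rfloor+1$ acks for round $r-1$. Since at most $f$ acceptors are Byzantine, at least one correct acceptor must have issued such an ack, and the guard $r \leq Safe\_r$ in the $ack\_req$-handling procedure of Algorithm~\ref{gtws:acceptor} forces that acceptor to have had $Safe\_r \geq r-1$ at the moment it processed the request. The contrapositive of Lemma~\ref{trustnonolegits} then gives that round $r-1$ was legit at that time, as needed.

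Once $r-1$ is known to be legit, the induction hypothesis applied to $r-1$ yields that every correct proposer eventually joins round $r-1$. Since $|C| \geq n-f$, eventually at least $n-f$ correct proposers have joined round $r-1$, so together with the legitimate end of $r-1$ the two preconditions of Lemma~\ref{correctdecision} are met for round $r-1$. Applying that lemma, every correct proposer that joined $r-1$ eventually decides in round $r-1$, resets its state to $newround$ at line~\ref{abacus}, and then joins round $r$ through the $newround$ guarded procedure.

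The main obstacle is the preliminary claim that $r$ legit implies $r-1$ legit, because without it the inductive hypothesis cannot be invoked on $r-1$; the rest of the argument is a direct composition of Lemma~\ref{correctdecision} with the induction hypothesis.
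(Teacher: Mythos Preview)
Your proposal is correct and follows the same inductive skeleton as the paper: induction on $r$, with the base case $r=0$ handled by initialisation, and the inductive step obtained by applying Lemma~\ref{correctdecision} to round $r-1$ once you know that all correct proposers joined $r-1$ and that $r-1$ had a legitimate end.

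The one genuine difference is that you isolate and prove the preliminary claim ``$r$ legit $\Rightarrow$ $r-1$ legit'', whereas the paper simply writes its inductive hypothesis as ``$(r-1)$ is a legit round and each correct proposer joined it'' without justifying the first conjunct. Your argument for this claim---a committed proposal at round $r-1$ entails at least one correct acceptor sent an ack, hence had $Safe\_r \geq r-1$, hence by the contrapositive of Lemma~\ref{trustnonolegits} round $r-1$ is legit---is sound and closes a gap the paper's proof leaves implicit. In that sense your version is the more careful of the two; otherwise the two proofs coincide.
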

\begin{proof}
The proof is by induction on round number. 
	\begin{itemize}
	\item {\bf Base case}: round $r=0$, by assumption it is a legit round, and by algorithm construction each correct proposer joins $r=0$. 
	\item {\bf Inductive case}: The inductive hypothesis is that $(r-1)$  is a legit round and that each correct proposer joined it.
	We assume that $r$ is a legit round, thus round $(r-1)$  had a legitimate end. Lemma~\ref{correctdecision} and the inductive hypothesis imply that any correct proposer joins $r$. 
	\end{itemize} \end{proof}

\begin{Lemma} \label{legitimateend}
If $r$ is a legit round, then it will eventually have a legitimate end. Moreover, each correct proposer executes line \ref{gwtsrefined}, while its round variable is $r$, at most $f$ times (that is it refines its proposal at most $f$ times during its participation to round $r$). 
\end{Lemma}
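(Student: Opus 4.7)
The plan is to establish the refinement bound first and then use it to force a legitimate end at round~$r$. By Lemma~\ref{eventualjoin}, every correct proposer eventually joins round~$r$. Fix a correct proposer $p_i$. When the guard on line~\ref{gwt:label} fires, $p_i$ has RB-delivered at least $(n-f)$ round-$r$ disclosures and unioned each of them into $Proposed\_set$ via line~\ref{gwts:cumulation}; by Observation~\ref{finitevalues}, $SvS[r]$ will ultimately contain the batches of at most $n$ distinct proposers. An execution of line~\ref{gwtsrefined} is guarded by $Rcvd\_set \cup Proposed\_set \neq Proposed\_set$ and by the proposer-side SAFE check $Rcvd\_set \subseteq SvS[r]$. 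A routine induction on the execution shows that both $Proposed\_set$ and any $Rcvd\_set$ that ever appears in a processed nack are unions of whole disclosed batches (the acceptor only ever sets or augments $Accepted\_set$ with such sets), so every refinement strictly absorbs at least one round-$r$ batch previously missing from $Proposed\_set$. Replaying the counting of Lemma~\ref{Lemmaref} round-by-round, the number of such absorptions at round~$r$ is bounded by the number of round-$r$ batches still absent at the transition time, i.e., at most $f$.

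With the refinement bound in hand, each correct proposer eventually issues a final $ack\_req$ at round~$r$; denote its payload $P^*_i$ and pick $p_j \in C$ such that $P^*_j$ is inclusion-maximal among $\{P^*_i\}_{p_i \in C}$ (such $p_j$ exists because $|C|$ is finite). By Lemma~\ref{trustlegits}, every correct acceptor eventually raises $Safe\_r$ to $r$; by Lemma~\ref{gwtLemmasafe}, $p_j$'s final $ack\_req$ is eventually safe at every correct acceptor and hence eventually handled by each of them. If a correct acceptor replied with a nack, the nack would carry values not in $P^*_j$; by SAFEA these values are in some $SvS[r']$, and combined with the maximality of $P^*_j$ and the refinement bound this would force the nack to be safe for $p_j$ at round~$r$, hence to trigger one more execution of line~\ref{gwtsrefined}, contradicting the finality of $P^*_j$. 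Therefore every correct acceptor replies with an identical ack carrying $Accepted\_set = P^*_j$; these at least $n-f \geq \lfloor (n+f)/2 \rfloor + 1$ acks are RB-broadcast via line~\ref{rback} and make the guard of line~\ref{livenessRSM-b} fire at any correct process, producing a committed proposal in the sense of Definition~\ref{def:commitedproposal} at round~$r$, i.e., a legitimate end.

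The delicate step is the implication ``safe nack implies a new refinement'': the acceptor's $Accepted\_set$ is cross-round, while the proposer-side SAFE check is indexed on $SvS[r]$ for the current round, so in principle a nack could carry values from a round $r' < r$ that $p_j$ did not absorb in round~$r'$ (for instance values a Byzantine proposer sneaked into the acceptor after $p_j$'s round-$r'$ decision). Closing this gap is the main obstacle and should be done by induction on $r$, showing that any such stray value is already subsumed by $p_j$'s accumulated $Proposed\_set$ through the monotone growth of $Proposed\_set$ across rounds and the $Decided\_set \subseteq Accepted\_set$ test of line~\ref{gwtstability}, which pins every past decision to a value that correct proposers can reconstruct from RB-broadcast acks.
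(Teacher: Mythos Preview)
Your proof inverts the paper's order: you establish the $f$-refinement bound first and then leverage it, via a maximal final proposal $P^*_j$, to force a legitimate end.  The paper does the opposite.  It proves legitimate end \emph{without} using the refinement bound, by a stabilization argument on the acceptors' $Accepted\_set$, and only afterwards derives the refinement bound as a corollary.

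The tool you never invoke, and which is the heart of the paper's argument, is Lemma~\ref{trustnonolegits}: until round $r$ has a legitimate end, no correct acceptor trusts any round $r'>r$ and therefore processes no message from such rounds.  Together with Observation~\ref{finitevalues} this bounds the total number of modifications any correct acceptor can make to its (cross-round) $Accepted\_set$, giving a time $t$ after which every correct acceptor's $Accepted\_set$ is frozen.  The paper then argues that correct proposers in round $r$ keep proposing (each proposal is either committed or triggers a refinement and an immediate re-proposal), so some request is issued after $t$ and is committed.  No ``maximal $P^*_j$'' is needed.

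Your refinement-bound paragraph is correct and essentially coincides with the paper's final paragraph: the proposer-side $\mathit{SAFE}$ check confines any processed nack to $SvS[r]$, the first proposal already contains at least $n-f$ of the at most $n$ round-$r$ batches, and each refinement absorbs a new one.

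The gap you flag in your third paragraph is genuine for \emph{your} route, and the inductive patch you sketch is not obviously sound.  Concretely, a Byzantine proposer $b$ can reliably broadcast a batch $B$ in round $r{-}1$, wait until $p_j$ has already decided at round $r{-}1$ and moved on, and only then send an $ack\_req$ containing $B$ to a correct acceptor $a$ that still trusts round $r{-}1$; $a$ inserts $B$ into its $Accepted\_set$.  Nothing forces $B$ into $p_j$'s $Proposed\_set$: $p_j$ was no longer in state $disclosing$ when $B$ was RB-delivered, and $p_j$ saw no nack carrying $B$ before deciding.  At round $r$, $a$'s nack to $p_j$ then contains $B\in SvS[r{-}1]\setminus SvS[r]$, so $\mathit{SAFE}$ fails at $p_j$ and the nack is never processed; your contradiction (``one more execution of line~\ref{gwtsrefined}'') does not fire.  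The monotone growth of $Proposed\_set$ and the check on line~\ref{gwtstability} do not close this, because both concern values $p_j$ has \emph{seen}, not values a Byzantine slipped into an acceptor behind $p_j$'s back.  The paper's stabilization argument avoids the maximal-proposal step entirely and hence does not need to discharge this obligation.
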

\begin{proof}
First observe, by Lemma~\ref{trustlegits} that each correct acceptor eventually trusts round $r$.  
Then observe that, until $r$ does not have a legitimate end, by Lemma~\ref{trustnonolegits}, no correct acceptor will trust any round $r' > r$.  Thus they will not process any message coming from round $r'$. 

The above and Observation~\ref{finitevalues} bound the number of changes that correct acceptors perform in round $r$ on their $Accepted\_set$ to a finite number. 
Therefore, there exists a time $t$ after which each correct acceptor does not change anymore its $Accepted\_set$.

If a correct proposer, that joined round $r$, issues an $ack\_req$ after time $t$, then, by Lemma~\ref{gwtLemmasafe}, and the above reasoning we have that such request will be committed. Once committed
round $r$ has a legitimate end.
Now by Lemma~\ref{eventualjoin} we have that eventually any correct proposer joins round $r$. 
It remains to show that some correct process issues a request after time $t$. Note that, when joining a new round, each correct process proposes its value. 
This proposal either is committed or refined (upon execution of line~\ref{gwtsrefined}). In case of refinement a new set is immediately proposed. This ensures that either something was committed in round $r$ before time $t$, or that
something will be proposed after $t$. In both cases round $r$ will have a legitimate end.
Recall that all correct proposers eventually join round $r$ (by Lemma \ref{eventualjoin}), each of them only proposes the value constituted by $Batch[r]$, by the property of the Byzantine reliable broadcast and by the safety of messages also Byzantines are constrained to propose
at most $f$ different values in  round $r$. This means that a decision in round $r$ can contain at most $n$ new values with respect to the decision at round $(r-1)$ . 
Since when a correct proposer executes line \ref{gwt:proposefirst} it passed the guard at line \ref{gwt:label}, it is obvious that there at most $f$ values missing in its proposal. From this Lemma 10, and thus the bound on the number of executions of line \ref{gwtsrefined}, follows immediately. \end{proof}

\begin{Lemma}\label{proposenolock}
If a correct process $p_i$  joins a round $r$ at time $t$, it also proposes, in round $r$, all values in $Pro_i$ received before time $t$. 
\end{Lemma}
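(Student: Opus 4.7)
(Plan)
The plan is to trace the algorithm carefully and argue that at the moment $p_i$ joins round $r$, its local variable $Proposed\_set$ already contains every value in $Pro_i$ received before that moment; once this is established, the rest follows because $Proposed\_set$ only grows during round $r$, and every $ack\_req$ emitted in round $r$ carries a superset of this initial content.

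First I would set up an induction on $r$. The key intermediate claim is: \emph{whenever $p_i$ enters the \texttt{newround} handler and increments its round variable to $r$, the updated $Proposed\_set$ contains $\bigcup_{k=0}^{r} Batch[k]$.} The base case $r=0$ is immediate from the line $Proposed\_set = Proposed\_set \cup Batch[r]$ inside \texttt{newround} (at $r=0$, $Proposed\_set$ is empty beforehand). The inductive step is equally direct: by the induction hypothesis, $Proposed\_set$ contained $\bigcup_{k=0}^{r-1} Batch[k]$ right after joining round $r-1$; $Proposed\_set$ is never shrunk between the two \texttt{newround} executions (it is only enlarged on RBcastDelivery in state \texttt{disclosing} and on nacks at line \ref{gwtsrefined}); so when $p_i$ executes the assignment $Proposed\_set = Proposed\_set \cup Batch[r]$ upon joining round $r$, the resulting set contains $\bigcup_{k=0}^{r} Batch[k]$.

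Next I would match the $Batch[\cdot]$ buckets against the timeline of inputs. The \emph{new value} handler places any value $v$ delivered while the round variable is $r'$ into $Batch[r'+1]$. If $v$ is received at some time $s<t$, then at time $s$ the round variable of $p_i$ is some value $r'\in\{-1,0,\ldots,r-1\}$ (it cannot yet be $r$, since $r$ is set by the very event that defines $t$), so $v$ is placed in $Batch[r'+1]$ with $r'+1\leq r$. Hence $v\in\bigcup_{k=0}^{r} Batch[k]$, which by the claim above is a subset of $Proposed\_set$ at time $t$. Conclude that at time $t$ the $Proposed\_set$ of $p_i$ contains every value of $Pro_i$ received strictly before $t$.

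Finally I would close by observing that from time $t$ onward, as long as $p_i$'s round variable remains $r$, the $Proposed\_set$ is only modified through unions (at the RBcastDelivery handler while in state \texttt{disclosing} and at line \ref{gwtsrefined} on nacks), so it remains a superset of its value at time $t$. Every $<ack\_req, Proposed\_set, ts, r>$ that $p_i$ sends in round $r$ — starting with the one broadcast at line \ref{gwt:proposefirst} once the guard $Counter[r]\geq n-f$ fires, and including the refined ones at line \ref{gwt:refine} — therefore carries a set that includes all values of $Pro_i$ received before $t$, which is exactly the statement of the lemma. I do not expect a serious obstacle here: the proof is essentially a bookkeeping argument on the indices of the $Batch[\cdot]$ array. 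The only point requiring mild care is the off-by-one between the current round variable at reception time and the batch index (a value received while the round variable equals $r'$ lands in $Batch[r'+1]$, not $Batch[r']$), which must be handled correctly to cover the boundary cases $r'=-1$ and $r'=r-1$.
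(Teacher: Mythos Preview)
Your bookkeeping argument on $Batch[\cdot]$ and the monotone growth of $Proposed\_set$ is correct and in fact more carefully spelled out than the paper's own proof, which simply appeals to atomicity of handlers and the fact that $Proposed\_set$ accumulates all prior batches without ever shrinking. The route is the same.

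There is one small omission. The lemma asserts not only that every $ack\_req$ sent in round $r$ carries the right content, but that $p_i$ actually \emph{does} propose in round $r$; for that you need the guard $Counter[r]\geq n-f$ to fire, which you assume without argument. The paper handles this explicitly: since a correct process joins round $r$ only after $r-1$ has a legitimate end, round $r$ is legit, so by Lemma~\ref{eventualjoin} every correct proposer eventually joins $r$, and hence $Counter[r]$ reaches $n-f$. You should add this one-line appeal to Lemma~\ref{eventualjoin} to close the argument.
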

\begin{proof}
Observe that a correct process joins a round $r$ only if $(r-1)$  had a legitimate end. From Lemma~\ref{eventualjoin} we have that all correct processes will eventually join round $r$, thus the if at line~\ref{gwt:label} will be passed.
From the above, the atomicity of the local procedures, and the fact that a correct process cumulates in its $Proposed\_set$ all previous batches never removing any value (see line~\ref{gwts:cumulation}), our claim follows. 
\end{proof}

From Lemma~\ref{proposenolock} we have the following observation:
\begin{observation} \label{dissemination}
Given any correct process $p_i \in C$ and any value  $v \in Pro_i$, we have that $v$ is eventually disseminated. 
\end{observation}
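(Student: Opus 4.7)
The plan is to follow the value $v$ from its arrival at $p_i$ through batching, round advancement, and the $ack\_req$ broadcast, composing the liveness lemmas already proved. The desired conclusion is a single safe $ack\_req$ containing $v$ that is delivered to a quorum of correct acceptors while they trust the relevant round.

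First I would note that when $v$ arrives at $p_i$, the \emph{upon new value} handler inserts $v$ into $Batch[r_0+1]$, where $r_0$ is the current round index of $p_i$. I then claim that $p_i$ eventually joins round $r_0+1$: since $p_i$ is currently in round $r_0$, either $r_0=0$ or $p_i$ entered $r_0$ via a \emph{newround} transition after deciding in $r_0-1$; in both cases $r_0$ is legit. By Lemma~\ref{eventualjoin} all correct processes eventually join $r_0$, so Lemma~\ref{legitimateend} guarantees it a legitimate end, and Lemma~\ref{correctdecision} then forces $p_i$ to decide in $r_0$ and move to $newround$. At this transition, the assignment $Proposed\_set \leftarrow Proposed\_set \cup Batch[r_0+1]$ executed before the guard at Line~\ref{gwt:label} places $v$ into $p_i$'s $Proposed\_set$ for round $r_0+1$.

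Next I would invoke Lemma~\ref{gwtLemmasafe} on the $ack\_req$ that $p_i$ issues at Line~\ref{gwt:proposefirst} in round $r_0+1$: this message carries $Proposed\_set$, which contains $v$ and only grows during the round, and so it is eventually safe at every correct acceptor at round $r_0+1$. Because $r_0$ had a legitimate end, round $r_0+1$ is legit, so by Lemma~\ref{trustlegits} every correct acceptor eventually sets $Safe\_r \geq r_0+1$. Reliable point-to-point delivery then ensures all $n-f$ correct acceptors receive this safe $ack\_req$ while trusting round $r_0+1$. Since $n \geq 3f+1$ implies $n-f \geq \lfloor (n+f)/2 \rfloor + 1$, the quorum requirement in the definition of disseminated value is met, and $v$ is disseminated.

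The main obstacle will be avoiding subtle circularity and making the ordering precise. In particular I must justify that the $Batch[r_0+1]$ update occurs strictly before $p_i$ sends its first $ack\_req$ of round $r_0+1$ (this is immediate from the atomic ordering of assignments in the \emph{newround} handler, where $r$ is incremented and $Proposed\_set$ is augmented from $Batch[r]$ prior to the disclosure broadcast and the subsequent firing of the guard at Line~\ref{gwt:label}), and that every lemma I cite is applicable once $r_0$ is known to be legit for $p_i$. Beyond these bookkeeping points, the proof is a straightforward threading of the already-established liveness lemmas.
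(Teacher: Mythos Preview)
Your proof is correct and matches the paper's intended reasoning: the paper derives this observation in one line from Lemma~\ref{proposenolock} (whose content you essentially re-derive in your first paragraph), leaving implicit the dissemination step that you spell out via Lemmas~\ref{gwtLemmasafe} and~\ref{trustlegits} together with the quorum bound $n-f \geq \lfloor (n+f)/2\rfloor+1$. The only detail you leave unjustified is why the guard at Line~\ref{gwt:label} actually fires in round $r_0{+}1$; this follows because that round is legit, so by Lemma~\ref{eventualjoin} all correct proposers join it and their disclosure broadcasts eventually drive $Counter[r_0{+}1]$ to at least $n-f$.
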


\begin{theorem}\label{th:GWTSliveness-inclusivity}
Let us consider a set of processes, of size $(3f+1)$, executing GWTS algorithm.  We have that any run of GWTS ensures the following liveness properties:
\begin{enumerate}
\item {Liveness:} each correct process $p_i \in C$ performs an infinite sequence of decisions $Dec_{i}=\{dec_{0},dec_{1}, dec_{2},\ldots\}$
\item {Inclusivity:} Given any correct process $p_i \in C$, if $Pro_i$ contains a value $v$ then, $v$ is eventually included in one decision in $Dec_{i}$;
\end{enumerate}
\end{theorem}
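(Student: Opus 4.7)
The plan is to establish Liveness first by a strong induction on the round number $r$, showing that every $r \in \mathbb{N}$ is a legit round and eventually reaches a legitimate end. The base case is immediate since round $0$ is legit by definition. For the inductive step, assuming $r$ is legit, Lemma~\ref{legitimateend} yields that $r$ eventually has a legitimate end, so by definition round $r+1$ becomes legit. Combined with Lemma~\ref{eventualjoin}, every correct proposer eventually joins every legit round; since $|C| = n-f$, the hypothesis of Lemma~\ref{correctdecision} is satisfied, so every correct proposer decides in round $r$ and then joins round $r+1$. Iterating, each correct $p_i$ produces exactly one decision per round, hence the infinite sequence $Dec_i = \langle dec_0, dec_1, \ldots \rangle$ required by Liveness.

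For Inclusivity, I would fix a correct process $p_i$ and a value $v \in Pro_i$ delivered to $p_i$ at some time $t$. By Observation~\ref{dissemination}, $v$ is eventually disseminated, say by some time $t' \geq t$. From the Liveness argument just established, $p_i$ performs an infinite sequence of decisions, each based on a commit that occurs at some finite time after the corresponding round is joined. Since $p_i$ joins rounds sequentially and each join happens only after the previous decision's commit, the commit times form an unbounded sequence; in particular, there exists a round $r^\star$ whose commit time exceeds $t'$. By Observation~\ref{obs:inclusion}, the proposal committed in round $r^\star$ contains $v$, and therefore the decision that $p_i$ outputs in round $r^\star$, which is an element of $Dec_i$, contains $v$.

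The main obstacle is ensuring that the inductive chain of legit rounds never stalls, so that Liveness is genuinely perpetual rather than stopping after finitely many rounds. This hinges entirely on Lemma~\ref{legitimateend}, whose proof in turn relies on the fact that correct acceptors eventually trust every legit round (Lemma~\ref{trustlegits}), the bounded number of refinements per round (a consequence of Observation~\ref{finitevalues} and the safety filter on incoming messages), and the fact that correct proposers always re-propose after a refinement. Once this perpetual chain is in place, Inclusivity follows essentially for free from two ingredients already available: the dissemination guarantee for values proposed by correct processes, and the persistence of disseminated values in all subsequent commits.
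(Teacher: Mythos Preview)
Your proposal is correct and follows essentially the same approach as the paper: an induction on the round number, driven by Lemma~\ref{legitimateend}, to obtain an infinite sequence of legit rounds, combined with Lemmas~\ref{eventualjoin} and~\ref{correctdecision} to ensure each correct proposer decides once per round; and for Inclusivity, the combination of Observation~\ref{dissemination} and Observation~\ref{obs:inclusion} together with the just-established infinitude of decisions. Your argument is somewhat more explicit than the paper's (in particular, you spell out why some commit must occur after the dissemination time $t'$), but the decomposition and the key lemmas invoked are identical.
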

\begin{proof}
We prove each property separately: 
\begin{enumerate}
\item {Liveness:}  it is enough to show that there is an infinite sequence of legit rounds. This derives from Lemma~\ref{legitimateend}, combining it with Lemma~\ref{correctdecision} and a simple induction on the round number. 
Lemmas~\ref{correctdecision},~\ref{eventualjoin} ensure that in each round of such sequence all correct proposers decide.
\item {Inclusivity:} it derives from Observations~\ref{dissemination},~\ref{obs:inclusion} and the fact that the sequence of decisions is infinite (see above). \end{enumerate}\end{proof}

\subsection{Message Complexity}\label{sec:gwtmessagecomplexity}
GWTS executes a possibly infinite sequence of decisions. Thus, we restrict our message complexity analysis to the number of messages needed for each decision. The messages are counted per proposer, we include messages created by correct acceptors
in response to proposer actions. 
Each proposer decides exactly once for each algorithm round. Therefore, we count messages from start to end of a generic round.
A proposer has to reliably broadcast its batch (line \ref{gwt:rbcast1} -cost ${\cal O}(n^2)$), it has to broadcast its proposal (line \ref{gwt:proposefirst} - cost ${\cal O}(n)$), then, in the worst case, it refines its proposal at most $f$ times (see Lemma \ref{legitimateend} -line \ref{gwt:refine} - cost ${\cal O}(n)$), however each ack from a correct acceptor has to be reliably 
broadcast (line \ref{rback} - cost ${\cal O}(n^2)$). The total cost is therefore upper-bounded by ${\cal O}(f \cdot n^2)$.

\section{Byzantine tolerant RSM} % (fold)
\label{sec:byzantine_rsm}

We are interested in wait-free implementations of linearizable  replicated state machines for commutative update operations in the Byzantine model.

\subsection{Specification  of the Byzantine tolerant  RSM}

The replicated state machine is composed of $n$ replicas, which start in the initial empty state $s_0$. Among them, up to $f \leq (n-1)/3$ replicas may exhibit Byzantine failures. The RSM exposes two operations, update and read, such that the update operation with command  $cmd$ modifies the current state $s$ of the RSM by applying $cmd$ to $s$ but does not return any value, while the read operation returns the current state of the RSM. The state of the RSM at time $t$  is a set of update commands applied to the initial state $s_0$ until time $t$. Note that being the RSM commutative the order in which updates are applied does not matter. Clients may trigger an infinite number of read and update operations. We assume that each command is unique (which can be easily done by tagging it with the identifier of the client and a sequence number). We do not make any assumptions regarding clients behavior: they can exhibit arbitrary behaviors (e.g., invoke an update operation with some arbitrary command, or modify the read and update code). We do not limit the number of Byzantine clients. Hence, to prevent Byzantine clients from jeopardizing the state of the RSM through  arbitrary commands, commands are locally executed by clients: the RSM provides clients with a set of updates and clients locally execute them. For readibility reasons, the value returned by the execution of a set of commands is equal to the set of commands.
The following properties formalise the behavior of read and update operations during any execution run by correct clients:

\begin{itemize}
\item \textbf{Liveness} Any update and read operation completes;
\item \textbf{Read Validity: }  Any value returned by a read  reflects a state of the RSM;
\item \textbf{Read Consistency: }  Any two  values returned by any two reads are comparable;
\item \textbf{Read Monotonicity: } For any two reads $r_1$ and $r_2$ returning value $v_1$ and $v_2$ respectively, if $r_1$ completes before $r_2$ is triggered then $v_1 \subseteq v_2$; 
\item \textbf{Update Stability: } If update $u_1$ completes before  update $u_2$ is triggered then every read that returns a value that includes the command of $u_2$ also include the command of $u_1$;
\item \textbf{Update Visibility: } If update $u$ completes before read $r$ is triggered then the  value returned by $r$ includes the command  of $u$.
\end{itemize}

\subsection{Implementation of the Byzantine tolerant  RSM}
As previously introduced, our general idea to implement a wait-free and linearizable replicated state machine resilient to Byzantine failures in an asynchronous system is to apply  Generalized Lattice Agreement on the power set of all the update commands. GWTS is  executed by the replicas of the state machine (for simplicity reasons  replicas play the role of both proposers and acceptors). The update and read operations are presented in Algorithms~\ref{alg:rsmupdate} and~\ref{alg:rsmread} respectively.  The update operation consists in submitting the new command $cmd$ to generalized Lattice Agreement so that eventually the new state of each (correct) replica includes $cmd$. This is achieved by triggering the  execution of  \textsc{new value} with $\{cmd\}$ as parameter at any subset of $(f+1)$ replicas (so that at least one correct replica will execute it), see Line~\ref{newvalue}. The update operation  completes when some correct replica modifies its local state with $cmd$, that is, decides a decision value that includes $cmd$  (Line~\ref{writeupdate}). This preserves the order of non-overlapping update operations.
The read operation consists in an update operation followed by a confirmation step. The update is triggered with a special value $nop$ that locally modifies a replica's state as for an ordinary command $cmd$ but is equivalent to a nop operation when executed. When the update completes, any decision value decided by a correct replica can be returned by the read operation. Since up to $f$ Byzantine replicas may provide any value, a confirmation request for each of these $(f+1)$ decision values  is sent to all replicas (Line~\ref{rsm:askconfirmation}).  A replica acknowledges $Accepted\_set$ if $Accepted\_set$ has  been accepted by $\lfloor (n+f)/2\rfloor+1$  acceptors, which ensures that $Accepted\_set$ has effectively been decided in GWTS (Line~\ref{gtwsplug:sendack} of Alg. \ref{GWTSplugin}).  The value returned by the read operation is the result of the execution of the first decision value confirmed  by $f+1$ replicas, i.e, the first decision value confirmed by at least one correct replica. This ensures that a read operation will return a value that reflects the effect of the last update operation.
From an implementation point of view, the confirmation step requires to add two lines of code in Algorithm~\ref{gwts:prop}. Specifically, when a proposer receives a confirmation request for decision value $Accepted\_set$, then it acknowledges the request if $<ack, Accepted\_set,\cdot,\cdot,ts,r>$ appears $\lfloor (n+f)/2\rfloor+1$ times in its $Ack\_history$ set for a fixed combination of $ts$ and round $r$. See Code in Alg.~\ref{GWTSplugin}. 
\begin{algorithm*}
\caption{\emph{Replicated State Machine - Update algorithm at  a client}}
\label{alg:rsmupdate}
\footnotesize

\begin{algorithmic}[1]

 \Procedure{Update}{$cmd$}  
 \State $DecSet=\emptyset$
 \State {\textsc{new value} $(\{cmd\})$ at  $(f+1)$ \sc{replicas}}{}
 \label{newvalue}
\State {\textbf{wait until} $|DecSet| \geq f+1$}{} \label{writeupdate}

 \smallskip
\algrenewcommand\algorithmicprocedure{\textbf{upon event}}
\Procedure{Receipt from $replica <~\textsc{decide},Accepted\_set,replica>$ with $cmd \in Accepted\_set$}{} 
\State $DecSet = DecSet \; \cup <\textsc{decide},Accepted\_set,replica>$
\EndProcedure
\EndProcedure
\end{algorithmic}
\end{algorithm*}

%%%%%%%%%%%%%%%%%%%%
\begin{algorithm*}
\caption{{\em Replicated State Machine - Read algorithm at  client $c$}}
\label{alg:rsmread}
\footnotesize

\begin{algorithmic}[1]
%\Comment{$r$-th read request. Initially set to $0$}
%\State $Dec\_Set=\emptyset$  
%\Comment{set containing the first $(f+1)$  decision values}
%\Comment{set of confirmation of a decision values. }
%\smallskip
%\smallskip
\Procedure{Read}{}
\State{$DecSet=ConfSet =  \emptyset$ }
\State {\textsc{new value} $(\{nop_{c,r}\})$ at  $(f+1)$ \sc{replicas}}{}\label{deb-update}

\smallskip
\algrenewcommand\algorithmicprocedure{\textbf{upon event}}
\Procedure{Receipt from $replica$ $<\textsc{decide},Accepted\_set,replica>$ with $nop_{c,r} \in Accepted\_set$}{} 
\State $DecSet = DecSet \; \cup <\textsc{decide},Accepted\_set,replica>$
\EndProcedure
\smallskip
\algloop{foreach}
\Procedure{$|DecSet| \geq f+1$}{}\label{end-update}
\ForAll{ $Accepted\_set \in DecSet$}\label{start-confirmationphase}
\State {\sc Send}$(<\textsc{CnfReq},Accepted\_set>)$ to all replicas \label{rsm:askconfirmation}
\EndFor
\EndProcedure
\smallskip
\Procedure{Receipt from $replica$ $<\textsc{CnfRep},Accepted\_set,replica>$}{}
\State $ConfSet = ConfSet \; \cup <\textsc{CnfRep}, Accepted\_set,replica>$
\EndProcedure
\smallskip
\Procedure{$<\textsc{CnfRep}, Accepted\_set,.>$ appears $(f+1)$ times in $ConfSet$}{}\label{end-confirmationphase}
\State {\textbf{return} \sc{execute} ($Accepted\_set$)}{}
\EndProcedure
\EndProcedure

\end{algorithmic}
\end{algorithm*}

\begin{algorithm*}
\caption{GWTS - proposer $p_i$ \emph{code plug-in for supporting RSM}}\label{GWTSplugin}
\footnotesize
\begin{algorithmic}[1]
\State $Pending\_conf=\emptyset$ 
\smallskip
\algrenewcommand\algorithmicprocedure{\textbf{upon event}}
 \Procedure{Deliver confirmation\_req from Client $c$}{$<\textsc{Conf\_req},Accepted\_set>$}  
\State $Pending\_conf=Pending\_conf \cup \{<\textsc{Conf\_req},Accepted\_set,c>\}$
\EndProcedure
 \smallskip
\Procedure{$\exists <\textsc{Conf\_req},Accepted\_set,c> \in Pending\_conf:$ $<\cdot,Accepted\_set,\cdot,\cdot,timestamp,r>$ appears $\lfloor (n+f)/2 \rfloor+1$ times in $Ack\_history$}{}  \label{gtwsplug:sendack}
\State {\sc Send to Client $c$}$(<\textsc{Conf\_rep},Accepted\_set,p_i>)$
\State $Pending\_conf=Pending\_conf \setminus \{<\textsc{Conf\_req}, Accepted\_set,c>\}$
\EndProcedure
\end{algorithmic}
\end{algorithm*}

\begin{theorem}
Given the wait-free Byzantine generalized Lattice Agreement (GLA) algorithm whose pseudocode is given in  Alg.~\ref{gwts:prop},~\ref{gtws:acceptor} and~\ref {GWTSplugin},
the above transformation yields a wait-free linearizable replicated state machine for commutative update operations. This transformation requires one execution of the Byzantine GLA algorithm. 
\end{theorem}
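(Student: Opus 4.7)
The plan is to establish each of the six RSM properties listed in the specification by reducing them to the safety and liveness guarantees of GWTS (Theorems~\ref{th:gwtssafety} and~\ref{th:GWTSliveness-inclusivity}), together with Observation~\ref{obs:inclusion} which states that any proposal committed after a value $v$ is committed must itself contain $v$. The crucial object to reason about is the notion of a \emph{committed} decision set: by construction of the confirmation plug-in (Alg.~\ref{GWTSplugin}, Line~\ref{gtwsplug:sendack}), a replica answers a \textsc{Conf\_req} for $Accepted\_set$ only if $\lfloor(n+f)/2\rfloor+1$ matching acks (for a fixed $ts,r$) live in its $Ack\_history$, i.e., only if the set is committed in the sense of Definition~\ref{def:commitedproposal}. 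Hence whenever $(f+1)$ replicas confirm the same $Accepted\_set$, at least one correct replica did, so that set is necessarily a genuine committed decision of GWTS. The linearization point I will use for an update is the first instant at which some correct replica commits a decision containing the update's command; for a read, it is the first instant at which $(f+1)$ confirmations of the returned value are collected.

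First I will handle \textbf{Liveness}. An update (resp.\ read) triggers \textsc{new value} at $(f+1)$ replicas, so at least one correct replica receives $cmd$ (resp.\ $nop_{c,r}$) as a new input; by the Inclusivity part of Theorem~\ref{th:GWTSliveness-inclusivity} this value is eventually included in some decision of that correct replica, which then sends \textsc{decide} back. Since any decision that includes $cmd$ is by Observation~\ref{obs:inclusion} contained in every later committed decision, the Non-Triviality and Comparability of GWTS ensure that the wait-until in Line~\ref{writeupdate} of Alg.~\ref{alg:rsmupdate} and the analogous wait-until in Alg.~\ref{alg:rsmread} are eventually satisfied. For the read, after $(f+1)$ decisions are collected the \textsc{Conf\_req}s are sent; because at least one of those decisions was produced by a correct replica and every correct replica eventually accumulates $\lfloor(n+f)/2\rfloor+1$ acks for every committed decision (acks are reliably broadcast at Line~\ref{rback} of Alg.~\ref{gtws:acceptor}), all $(n-f)$ correct replicas will eventually reply \textsc{Conf\_rep}, supplying more than the $(f+1)$ confirmations the client needs. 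This gives wait-freedom: each client completes in a bounded number of its own steps regardless of the behavior of other clients or Byzantine replicas.

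Next I will dispatch the safety properties. \textbf{Read Validity} follows because, as argued above, any value returned by a read is a committed GWTS decision, hence a state reachable by applying the commands it contains to $s_0$. \textbf{Read Consistency} is immediate from Comparability in Theorem~\ref{th:gwtssafety} applied to the two committed decisions. \textbf{Read Monotonicity} uses the linearization point: if $r_1$ completes at time $t_1$ returning $v_1$, then $v_1$ is committed by time $t_1$; the value $v_2$ returned by a read $r_2$ triggered after $t_1$ is also committed, and contains the unique $nop_{c_2,r_2}$, which was first proposed after $t_1$, so $v_2$ is committed strictly after $v_1$ and Observation~\ref{obs:inclusion} gives $v_1\subseteq v_2$. \textbf{Update Stability} and \textbf{Update Visibility} follow by the same reasoning, exploiting uniqueness of $cmd$'s and $nop$'s to pin down the temporal order of the corresponding committed decisions.

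The main obstacle in the proof is the Byzantine robustness of the confirmation phase: a Byzantine replica could falsely claim that a decision value has been confirmed, or flood the client with forged decisions, so one must argue carefully that the $(f+1)$-quorum used both in the \textsc{decide} collection and in the \textsc{Conf\_rep} collection really does pin the returned value to a committed GWTS decision. This is where the interplay with Lemma~\ref{Lemmacomm} and the reliable broadcast of acks is essential: it lets a correct replica vouch for a decision only after witnessing a genuine Byzantine quorum of acks, so the $(f+1)$-confirmation rule upgrades ``some replica says it was decided'' to ``some correct replica saw a committed proposal,'' which is exactly what the monotonicity and visibility arguments need. The remaining verifications are straightforward bookkeeping.
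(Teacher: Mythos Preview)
Your proposal is correct and follows essentially the same route as the paper: each RSM property is reduced to the corresponding GWTS safety/liveness guarantee (Theorems~\ref{th:gwtssafety} and~\ref{th:GWTSliveness-inclusivity}) together with Observation~\ref{obs:inclusion}, and the $(f{+}1)$-confirmation step is used exactly as you describe to certify that a returned value is a genuinely committed GWTS decision. Your write-up is in fact more explicit than the paper's about linearization points and about why the reliable broadcast of acks (Line~\ref{rback}) guarantees that every correct replica can eventually confirm any committed set, but the underlying decomposition and the key invocations are the same.
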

\begin{proof}
The proof consists in showing that (1) liveness, (2) read validity, (3) read consistency, (4)  update stability, (5) read monotonicity, and (6) update visibility properties holds. We prove each property separately.
\begin{itemize}
\item (1) Liveness of the update operation is straightforward from Theorem~\ref{th:GWTSliveness-inclusivity}. For the read operation, liveness holds from update liveness and from the fact that among the $(f+1)$ received values, at least one is the decision value of a correct replica, which by Lines~\ref{livenessRSM-b}-~\ref{livenessRSM-e} of Algorithm~\ref{gwts:prop}, has been accepted and reliably broadcast to all proposers by $(2f+1)$ acceptors, and thus reliably delivered by all proposers (by Liveness of Reliable Broadcast);
\item (2) Straightforward from the fact that the value returned by a read is a decision value.
\item (3) Straightforward from Theorem~\ref{th:gwtssafety};
\item (4) By Observation~\ref{obs:inclusion} and by the fact that a read operation begins with an update operation;
\item (5) By applying the same argument as for update stability, read monotonicity holds.
\item (6) By applying the same argument as for update stability, update visibility holds.
\end{itemize}
\end{proof}

\begin{Lemma}
The above transformation is resilient to Byzantine clients. 
\end{Lemma}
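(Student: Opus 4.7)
The plan is to observe that any interaction between a Byzantine client and the RSM proceeds through the same two-primitive interface used by correct clients: triggering a \textsc{new value} event at a self-chosen subset of $(f+1)$ replicas, and sending confirmation requests during the second phase of a read. Tagging each command with a (client-identifier, sequence-number) pair, together with the authenticated point-to-point channels of Section~\ref{sec:model}, ensures that Byzantine clients cannot forge or replay commands attributed to correct ones. Consequently, from the viewpoint of the GWTS layer inside the replicas, all Byzantine client activity reduces to the injection of extra, correctly tagged commands at some (possibly empty) subset of replicas, together with spurious confirmation-request messages.

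First I would dispose of the spurious confirmation requests. Inspecting Algorithm~\ref{GWTSplugin} shows that the response to a confirmation request is directed only at the requester and alters no internal replica state nor any variable that feeds into another client's operation, so such messages cannot influence correct-client properties and can be ignored throughout the remainder of the argument.

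Next I would re-establish each of the six properties for operations initiated by correct clients, piggybacking on the main RSM theorem above. Liveness of a correct client's update holds because its $(f+1)$ chosen replicas contain at least one correct replica that fires \textsc{new value}; by Theorem~\ref{th:GWTSliveness-inclusivity} the command then appears in every correct replica's decision sequence, so the client collects the required $f+1$ \textsc{decide} messages, and the read case is analogous once we add the confirmation threshold. Read Validity, Read Consistency, Read Monotonicity, Update Stability and Update Visibility all follow from Comparability of GWTS and Observation~\ref{obs:inclusion} applied to values that have been acknowledged by at least $(f+1)$ replicas and therefore vouched for by at least one correct replica; the presence of additional, Byzantine-originated commands does not break these arguments because the semi-lattice join is monotone, so injected commands only extend the chain of decisions without violating comparability or monotonicity.

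The main obstacle will be ruling out two subtle client-level attacks: a Byzantine client may equivocate by delivering different ``versions'' of a logically identical command to different replicas, and it may execute only the first half of a read or update and then abandon it. Uniqueness of tags turns equivocation into the injection of several distinct commands, a case already covered by Non-Triviality of GWTS at the replica level; abandoned operations are harmless because correct-client properties refer exclusively to \emph{completed} operations of correct clients, so any dangling Byzantine input behaves like any other pending proposal. Making these reductions explicit, and carefully checking that the $(f+1)$ confirmation threshold used at the client and in Algorithm~\ref{GWTSplugin} composes consistently with the $\lfloor(n+f)/2\rfloor+1$ acceptor quorum of GWTS, constitutes the technical core of the argument.
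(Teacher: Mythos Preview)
Your proposal is correct and reaches the same conclusion, but the organisation differs from the paper's. The paper proceeds by a short case analysis of concrete Byzantine-client deviations: (i) submitting a command $cmd \notin E$, filtered out at the replica by the $e \in E$ check; (ii) sending $cmd$ to fewer than $f+1$ replicas, which still propagates if any replica receives it; (iii) launching overlapping updates without waiting, which GWTS simply treats as concurrent proposals; and (iv) for reads, noting that the confirmation phase is a pure proof-of-decision step that does not touch RSM state. Your approach instead performs a reduction: you collapse every Byzantine client action into ``inject some correctly tagged commands at an arbitrary replica subset'' plus ``send stateless confirmation requests'', and then re-derive the six RSM properties for correct clients under this augmented input stream by appealing to the main theorem and GWTS guarantees.

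Your route is more systematic and makes explicit two points the paper only implicitly relies on---that command tags plus authenticated channels prevent impersonation, and that equivocation (sending different commands under the guise of one operation) is absorbed by tag uniqueness into the injection of several distinct commands. The paper's route, on the other hand, is shorter and ties each deviation to a concrete code line, which is useful for a reader checking the pseudocode. One small point you leave implicit that the paper calls out: the filtering of commands outside $E$; you cover it by deferring to GWTS Non-Triviality, which is fine, but it would be cleaner to name the check explicitly as the paper does.
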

\begin{proof}
A Byzantine client $c$ may threaten the correctness of the above transformation by uncorrectly executing the read and write operations. 
\begin{itemize}
\item Suppose that $c$ invokes Update($cmd$). If $cmd$ is not a admissible command (i.e., is not an element of the lattice) then correct replicas filter out $cmd$ (line~\ref{gwt:lattice}), and thus the operation has no impact on the RSM state. Now, if $c$ sends $cmd$ to less than $f+1$ replicas (line~\ref{newvalue}), then it is sufficient that a single replica receives and proposes $cmd$ to GWTS for $cmd$ to be decided by all correct replicas, and thus correctly impacts the state of the RSM. Finally if $c$ invokes a sequence of updates $u_i, \ldots u\ell$ without waiting for each $u_i$ to complete (line~\ref{writeupdate}), then these update operations will be considered as concurrent operations, which is  handled by GWTS. 
\item Suppose that $c$ invokes Read(). The only difference with the Update() operation is the confirmation phase (lines~\ref{start-confirmationphase}-~\ref{end-confirmationphase}) which acts as a proof that the value returned by the read operation is a decided value. This phase does not impact the state of the RSM, which completes the proof of the lemma.
\end{itemize}
\end{proof}

\section{Safety by Signature Algorithm - An algorithm with linear message complexity}
\label{sec:sign}

In this Section we will discuss how to decrease the message complexity using signatures. The resulting algorithm has a message complexity of ${\cal O} (n)$ when $f={\cal O}(1)$. This Algorithm introduces a trade-off between message-complexity and message-size:  it sends messages that could have size
${\cal O}(n^2)$, this is not the case with the original WTS.
Pseudocode is on Algorithms \ref{sbs:one},\ref{sbs:two}, and \ref{sbs:three}.
The algorithm is conceptually divided in three phases:
\begin{itemize}
\item {\em Init}: in this phase each process broadcasts a signed version of its initial proposed value. A process collects these messages until it sees $(n-f)$ of it. The purpose of this phase is to ensure
that any correct process is able to create a set of values containing at least $|X \cup B|-2f$ values from correct processes. 
\item {\em Safetying}: At the end of the init phase a correct process has a certain set of values. The purpose of the safetying phase is to make at least $|X \cup B|-2f$ such values safe. In such case a value $v$ is safe if we are sure that no other process can see a different value 
$v'$ that is also safe and has been sent by the same sender of $v$. Safetying is done by performing a broadcast of the $(n-f)$ signed values obtained in the init phase towards the acceptors. 
Each acceptor keeps a set of values that are candidate to be safe. Once an acceptor receives a set of values from a proposer, it starts processing each value contained in it.
If there is a value $v$ such that it has not see another value sent by the same sender (this check is done by using signature), it adds the value $v$ to its candidate set.
Otherwise, if it exists a $v'$ from the same sender, it adds $(v,v')$ to a temporary set of Conflicts. 
The acceptor replies back to the proposer by sending a signed message that contains the set received, and the set of conflicts found. 
A proposer possesses a proof of safety for a value $v$ if it receives $\lfloor (n+f)/2 \rfloor+1$ messages from different acceptors in which $v$ never appears as a conflict. 
The intuition behind this phase is that if the same Byzantine process injects two (or more) values signed by him in the init phase, then at most one of them could manage to get a correct proof of safety. 
\item {\em Proposing}: This phase is identical to the proposing phase of WTS. The only difference is that a correct proposer and acceptor refuse to process any message that contains a value without an
attached proof of safety. 
\end{itemize}

\begin{algorithm*}
\caption{{\sc SbS} One - Algorithm for proposer process $p_i$}\label{sbs:one}
\footnotesize

\begin{algorithmic}[1]
\State $V$ lattice values, $E$ subset of $V$ of values that can be proposed by processes. 
\State $Proposed\_set= Safe\_acks=Ack\_set=\emptyset$
\State $byz[\forall k \in [0,n]]=False$ \Comment{Each location $i$ stores a boolean indicating if $acceptor_i$ is known to be Byzantine. }
\State $ts=0$
\State $Safety\_set=\emptyset$
\State $proposed\_value=\bot$ 
\State $state = init$
\medskip

\algrenewcommand\algorithmicprocedure{\textbf{upon event}}

\LineComment{Init Phase}
\Procedure{$proposed\_value \neq \bot$}{}
\State  $payload=<${\sc Sign}$(proposed\_value),p_i>$ \Comment{signed value}
\State  $Safety\_set=Safety\_set \cup \{payload\}$ \label{sbs:insertvalue}
\State  {\sc BrodcastToProposers}$(<init\_phase, payload>)$ \label{sbs:firstbroadcast}
\EndProcedure
\medskip

\Procedure{BcastDelivery from sender}{$<init\_phase, <value,sender>>$}
\If{{\sc Verify}$(value) \land value\in E\land state=init$}
\State  $Safety\_set=\text{ {\sc RemoveConflicts}}(Safety\_set \cup \{<value,sender>\})$

\EndIf
\EndProcedure
\medskip

\LineComment{Safetying phase}
\Procedure{upon event $|Safety\_set| \geq n-f \land state=init$}{}\label{start:safetying}
\State $state=safetying$
\State  {\sc BrodcastToAcceptors}$(<safe\_req, Safety\_set>)$ \label{safereq:send}
\EndProcedure
\medskip

\Procedure{Delivery {\bf when} $state=safetying$ from sender}{$m=<safe\_ack, Rcvd\_set, Conflicts>$}

\If{{\sc Verify}$(m) \land Rcvd\_set=Safety\_set \land \forall (x,y) \in Conflicts |$ {\sc VerifyConfPair}$((x,y))$} \label{safeack:check}
\State  $Safe\_acks=Safe\_acks \cup \{(<safe\_ack, Rcvd\_set, Conflicts>,sender)\}$
\Else
\State $byz[sender]=True$
\EndIf

\EndProcedure
\medskip

\LineComment{Proposing Phase}
\Procedure{$|Safe\_acks| \geq \lfloor (n+f)/2 \rfloor +1 \land state=safetying$}{} \label{procedure:safe}

\ForAll{$<v,s> \in Safety\_set$ such that $(\nexists (<\cdot,\cdot,Conflicts,\cdot>,\cdot) \in Safe\_acks | (<v,s>,\cdot) \in Conflicts \lor (\cdot,<v,s>) \in Conflicts)$}
\State $Proposed\_set=Proposed\_set \cup \{<v,Safe\_acks>\}$ \label{sbs:insertsafe}
\EndFor
\State $state=proposing$ 
\State $Ack\_set=\emptyset$
\State $ts=ts+1$
\State  {\sc BrodcastToAcceptors}$(<ack\_req, Proposed\_set,ts>)$ \label{sbs:proposalbroadcast}
\EndProcedure
\medskip

\Procedure{Delivery {\bf when} $state=proposing$ from sender}{$<ack,Rcvd\_set,rts>$}
\If{$rts=ts$}
\If{$Proposed\_set = Rcvd\_set \land byz[sender]=False$}
\State  $Ack\_set=Ack\_set \cup \{(<ack,Proposed\_set,rts>,sender)\}$
\Else
\State $byz[sender]=True$
\EndIf
\EndIf
\EndProcedure
\medskip

\Procedure{Delivery  {\bf when} $state=proposing$ from sender}{$<nack,Rcvd\_set,rts>$}
\If{$ rts=ts$}
\If{$Proposed\_set \neq Rcvd\_set \cup Proposed\_set \land byz[sender]=False \land ${\sc AllSafe}$(Rcvd\_set)$} \label{sbs:refguard}
\State $Proposed\_set=Rcvd\_set \cup Proposed\_set$ \label{sbs:refine1}
\State $Ack\_set= \emptyset$
\State $ts=ts+1$
\State  {\sc BrodcastToAcceptors}$(<ack\_req, Proposed\_set,ts>)$ \label{sbs:refine}
\Else
\State byz[sender]=True
\EndIf
\EndIf
\EndProcedure
\medskip

\Procedure{$|Ack\_set| \geq \lfloor (n+f)/2 \rfloor +1 \land state=proposing$}{}
\State $state=decided$ \label{sbs:decideonce}
\State $Only\_values=\{v : \exists <v,\cdot> \in Proposed\_set\}$
\State {\sc DECIDE}($Only\_values$)
\EndProcedure

\end{algorithmic}
\end{algorithm*}

\begin{algorithm*}
\caption{{\sc SbS} One - Algorithm for acceptor process $p_i$}\label{sbs:two}
\footnotesize

\begin{algorithmic}[1]
\algrenewcommand\algorithmicprocedure{\textbf{upon event}}
\State $SafeCandidates=\emptyset$
\State $Accepted\_set = \emptyset$

\medskip
\Procedure{BcastDelivery from sender}{$<safe\_req,Safety\_set>$} \label{asafereq}
\If{$\forall <v,s> \in Safety\_set |${\sc Verify}$(v)$}
\State {\sc Send to sender}({\sc Sign}($<ack, Safety\_set,$ {\sc ReturnConflicts}$(Safety\_set \cup SafeCandidates),x>$)) \label{aconf:1}
\State $SafeCandidates=SafeCandidates \, \cup \, ${\sc RemoveConflicts}$(Safety\_set \cup SafeCandidates )$ \label{aconf:2}
\EndIf
\EndProcedure
\medskip
\Procedure{BcastDelivery from sender}{$<ack\_req,Rcvd\_set, x>$}
\If{{\sc AllSafe}$(Rcvd\_set)$} \label{allsafe:accp}
\If{ $Accepted\_set \leq
 Rcvd\_set$}
 \State $Accepted\_set=Rcvd\_set$
\State {\sc Send to sender}$(<ack, Accepted\_set,x>)$
\Else
\State  {\sc Send to sender}$(<nack, Accepted\_set,x>)$
\State $Accepted\_set=Rcvd\_set \cup Accepted\_set$
\EndIf
\EndIf
\EndProcedure

\end{algorithmic}
\end{algorithm*}

\begin{algorithm*}
\caption{Helper Procedures}\label{sbs:three}
\footnotesize

\begin{algorithmic}[1]

\Procedure{ReturnConflicts}{$Set$}
\State $Result=\emptyset$
\ForAll{$(x,y) \in Set \times Set |$ s.t. {\sc VerifyConfPair}$((x,y))$}
\State  $Result=Result \cup \{(x,y)\}$
\EndFor
\State return $Result$
\EndProcedure
\medskip

\Procedure{RemoveConflicts}{$Set$}
\State $Result=Set$
\ForAll{$(x,y) \in ${\sc ReturnConflicts}$(Set)$}
\State $Result=Result \setminus \{x,y\}$
\EndFor
\State return $Result$
\EndProcedure
\medskip

\Procedure{VerifyConfPair}{$(x,y)$}
\State return {\sc Verify}($x$) $\land$ {\sc Verify}($y$) $\land$ $x$.sender = $y$.sender $\land x.value \neq y.value$
\EndProcedure

\Procedure{AllSafe}{$Set$}
\State $Result=True$
\ForAll{$<v,Acks> \in Set$}
\If{$|Acks| \geq \lfloor (n+f)/2 \rfloor +1 \land( \nexists <\cdot,Proposal,\cdot,\cdot> \in Acks | v \not\in Proposal) \land( \nexists <\cdot,\cdot,Conflicts,\cdot> \in Acks | (v,\cdot) \in Conflicts \lor (\cdot,v) \in Conflicts) \land (\nexists m \in Acks | \neg${\sc Verify}$(m)$ is not valid $) \land (\nexists m,m' \in Acks | m \neq m' \land m.$sender$=m'.$sender$) \land v \in E$}
\State $Result=Result \land True$
\Else
\State $Result=Result \land False$
\EndIf
\EndFor
\State return $Result$
\EndProcedure
\medskip

\Procedure{Sign}{$e$}
\State Signs the element $e$ (that could be a message or a value) and it returns a new element $e'$ that is a signed version of $e$.
\EndProcedure

\Procedure{Verify}{$e$}
\State Takes an element $e$, and it returns true if an only if $e$ has a correct signature: $e$ has been returned by a call to the above Sign procedure.
\EndProcedure

\end{algorithmic}
\end{algorithm*}

\subsection{Algorithm properties}

We say that a value $v$ and a value $v'$ conflict if $v \neq v'$ and $v.sender=v'.sender$. Given a set of values we say that the set is conflicting if it contains two conflicting values.

\paragraph{Safety.} 
In the signature based algorithm we define a value as safe if:
\begin{definition} (Safe value) A value $v$ is safe if there are $\lfloor (n+f)/2 \rfloor+1$ broadcasts of messages  $<safe\_ack, Rcvd\_set, Conflicts, rts>$ from different acceptors, such that
$v$ is contained in each $Rcvd\_set$ and it is not contained in a pair inside a set $Conflicts$.
\end{definition}

\begin{Lemma}\label{sbs:safeensure}
Let $W_j$ be the set of values signed by process $p_j \in P$ appearing during any execution of the SbS algorithm.
$W_j$ contains at most one safe value. 
\end{Lemma}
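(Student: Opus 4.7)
My plan is to prove this by contradiction: suppose there are two distinct values $v \neq v'$, both signed by $p_j$, that are simultaneously safe. By definition, each admits a quorum of $\lfloor (n+f)/2\rfloor+1$ acceptors sending a ``clean'' safe\_ack (value inside $Rcvd\_set$, not inside any pair of $Conflicts$). Call these quorums $Q_v$ and $Q_{v'}$.

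The first step is a standard quorum intersection:
\[ |Q_v \cap Q_{v'}| \geq 2\bigl(\lfloor (n+f)/2\rfloor+1\bigr) - n \geq f+1, \]
so $Q_v \cap Q_{v'}$ contains at least one \emph{correct} acceptor $p$. Thus $p$ must have sent two clean acks: one $m_1$ whose request's $Safety\_set_1$ contains $v$, and one $m_2$ whose request's $Safety\_set_2$ contains $v'$. Without loss of generality, $p$ processes $m_1$ before $m_2$.

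The key structural observation I will then exploit is that at the acceptor, $SafeCandidates$ is \emph{monotone}: the update at Line~\ref{aconf:2} is a pure union, so once a value enters $SafeCandidates$ it never leaves. Using this, I analyze the processing of $m_1$ by $p$. If $v'$ were already in $SafeCandidates$ at that moment, then $v \in Safety\_set_1$ together with $v' \in SafeCandidates$ would make $\textsc{VerifyConfPair}((v,v'))$ true, so $(v,v')$ would be returned by $\textsc{ReturnConflicts}$ and placed in the $Conflicts$ field of the ack—contradicting cleanness for $v$. Hence $v' \notin SafeCandidates$ at time $t_1$. By the same cleanness assumption for $v$, no other value signed by $p_j$ can be in $Safety\_set_1 \cup SafeCandidates$ either (else it would trigger a conflict involving $v$), so $v$ survives $\textsc{RemoveConflicts}$ and is inserted into $SafeCandidates$.

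Finally, when $p$ later processes $m_2$, monotonicity gives $v \in SafeCandidates$, while $v' \in Safety\_set_2$; the pair $(v,v')$ satisfies $\textsc{VerifyConfPair}$ (both carry a valid signature of $p_j$ and differ), so it appears in $Conflicts$ of the ack $m_2$. This contradicts the cleanness of $m_2$ for $v'$. The main obstacle I expect is being careful in the case where $v$ and $v'$ appear in the \emph{same} request, or where the ``other'' value is injected between $t_1$ and $t_2$: in both cases the monotonicity of $SafeCandidates$ together with the symmetric structure of $\textsc{ReturnConflicts}$ closes the argument, which is why the monotonicity observation is the real workhorse of the proof.
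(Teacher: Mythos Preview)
Your proof is correct and follows essentially the same approach as the paper: contradiction via quorum intersection to find a common correct acceptor $p$, then a case analysis on the order in which $p$ processed the two requests, using that the first value enters $SafeCandidates$ and therefore forces a conflict pair when the second request arrives. Your write-up is in fact a bit more careful than the paper's on two points the paper glosses over: you explicitly note the monotonicity of $SafeCandidates$ (the update at Line~\ref{aconf:2} is a union with the old set), and you justify why $v$ actually survives $\textsc{RemoveConflicts}$ after $m_1$ (cleanness of $m_1$ rules out any other $p_j$-signed value in $Safety\_set_1\cup SafeCandidates$).
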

\begin{proof}
By contradiction let us assume that there are two values $x,y \in W_j$ both safe. 
Then for each value we have  $\lfloor (n+f)/2 \rfloor +1$ acceptors that broadcast the $safe\_ack$ messages. 
Any two sets of $\lfloor (n+f)/2 \rfloor +1$ acceptors have a non empty intersection in the set of correct acceptors: there should be at least one correct acceptor $p$ that generated the message (or messages) $safe\_ack$  that made both  $x$ and $y$ safe.
First of all, notice that the message cannot be the same for both values (since they conflict they would be inserted in $Conflicts$ by $p$, see line~\ref{aconf:1} of the acceptor code).
For this reason we should have that $p$ has done two different broadcasts one for $x$ and one for $y$. 
Let us assume w.l.o.g that the first broadcast has been for $x$, note that $x$ is inserted in $SafeCandidate$ set (line~\ref{aconf:2}).
This implies that once a $safe\_req$ with set containing $y$ is received by $p$ (line~\ref{asafereq}) $p$ detects the conflict with $x$ in $SafeCandidate$ (line~\ref{aconf:1}). 
Therefore, $p$ cannot generate any message in which $y$ does not appear in $Conflicts$. 
\end{proof}

\begin{Lemma}\label{sbs:safecorrect}
Let us consider an execution of SbS.
Let $p_i$ be a correct process, and let $v_i$ be a value that $p_i$ signs and inserts in its $Proposed\_set$ at line~\ref{sbs:insertvalue}.
If $p_i$ decides a set $decide_i$, then $v_i \in decide_i$. 
\end{Lemma}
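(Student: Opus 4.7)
The plan is to trace the fate of the value $v_i$ through each phase of the algorithm, leveraging the unforgeability of signatures. Since $p_i$ is correct, it signs exactly one value $v_i$ in the Init phase (line~\ref{sbs:insertvalue}). By the signature assumption in Section~\ref{sec:model}, no process in $F$ can produce another validly signed message bearing $p_i$'s signature on a different value. Consequently, for every pair $(x,y)$ with $x.\text{value}=v_i$ or $y.\text{value}=v_i$, the predicate \textsc{VerifyConfPair}$((x,y))$ is false. This fact will be invoked repeatedly.

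First, I would show that $v_i \in Safety\_set$ at the moment $p_i$ broadcasts $safe\_req$ at line~\ref{safereq:send}. The set $Safety\_set$ is mutated only upon \textsc{BcastDelivery} of init\_phase messages, where \textsc{RemoveConflicts} is applied to $Safety\_set \cup \{\langle value, sender\rangle\}$. By the observation above, \textsc{RemoveConflicts} cannot strip $v_i$: doing so would require a conflicting pair involving $\langle v_i, p_i\rangle$, which cannot pass \textsc{Verify}. Hence $v_i$ persists in $Safety\_set$ until (and including) the transition into the Safetying phase at line~\ref{start:safetying}.

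Next, for the Safetying phase, I would argue that when $p_i$ has collected $\lfloor (n+f)/2\rfloor+1$ messages in $Safe\_acks$ (line~\ref{procedure:safe}), \emph{none} of the associated $Conflicts$ sets contains a pair involving $v_i$. For correct acceptors this holds because their \textsc{ReturnConflicts} call at line~\ref{aconf:1} only produces conflict pairs satisfying \textsc{VerifyConfPair}, which, as noted, rules out $v_i$. For Byzantine acceptors, any $safe\_ack$ that survives the check at line~\ref{safeack:check} must have every pair in $Conflicts$ certified by \textsc{VerifyConfPair}, which again rules out any pair containing $v_i$. Therefore, at line~\ref{sbs:insertsafe}, the predicate in the \textbf{for all} loop is satisfied for $\langle v_i, p_i\rangle$, and $v_i$ is inserted into $Proposed\_set$.

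Finally, I would conclude by monotonicity of $Proposed\_set$: after line~\ref{sbs:insertsafe}, the only further modification is the union performed at line~\ref{sbs:refine1}, which can only grow the set. Consequently, $v_i$ remains in $Proposed\_set$ at the instant $p_i$ decides, and therefore $v_i \in decide_i$. The one delicate point — where the only obstacle really lies — is ruling out that a Byzantine acceptor injects a fake conflict pair for $v_i$ into a $safe\_ack$; this is exactly neutralised by the per-pair \textsc{VerifyConfPair} check at line~\ref{safeack:check}, which reduces to unforgeability of signatures of correct processes.
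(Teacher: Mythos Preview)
Your proposal is correct and follows essentially the same approach as the paper: both arguments hinge on signature unforgeability to rule out any valid conflict pair containing $v_i$, then use the check at line~\ref{safeack:check} to conclude $v_i$ survives into $Proposed\_set$ at line~\ref{sbs:insertsafe}, and finish by monotonicity of $Proposed\_set$. Your version is somewhat more explicit than the paper's (you separately argue that \textsc{RemoveConflicts} cannot strip $v_i$ from $Safety\_set$ during the Init phase, a step the paper leaves implicit), but the structure and key ideas are the same.
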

\begin{proof}
First of all, note that $p_i$ is the only one that can create a valid signature on a value $v$. Therefore, if a Byzantine process creates a pair in a $Conflicts$ set containing value $v_i$ for a certain message  $m=<safe\_ack, \cdot, Conflicts, \cdot>$ , then no correct process will insert $m$ in its $Safe\_ack$ set (the signature check at line~\ref{safeack:check} will not be passed). Stated otherwise, $v_i$ can never appear in a $Conflicts$ set of a message $m=<safe\_ack, \cdot, Conflicts, \cdot>$  that will be processed by any correct process.
This implies that, if $p_i$ decides, then $p_i$ inserted $v_i$ in $Safe\_proposal$ at line~\ref{sbs:insertsafe}. 
Since, $p_i$ never removes any value from $Proposed\_set$ we have that an eventual decision contains $v_i$.
\end{proof}

Note that the proposing phase of SbS is analogous to the one of the WTS algorithm. It is not hard to see that  Lemma~\ref{Lemmacomm} and Observation~\ref{obsobv} holds also for the proposing phase of SbS.

\begin{theorem}\label{SuSA}
Let us consider a set of processes, of size at least $(3f+1)$, executing SbS algorithm. 
Algorithm SbS enforces: (1) {Comparability}; (2) {Inclusivity}; (3) {Non-Triviality}; (4) {Stability}. 
\end{theorem}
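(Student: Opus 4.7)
({\em Sketch - proposal})
The plan is to handle the four properties separately, leveraging the two signature-specific lemmas (Lemma~\ref{sbs:safeensure} and Lemma~\ref{sbs:safecorrect}) together with the observation made by the paper that Lemma~\ref{Lemmacomm} and Observation~\ref{obsobv} carry over verbatim to SbS because its proposing phase is structurally identical to that of WTS (the only novelty being that messages carry $safe$-proofs, which only restrict the set of admissible ack/nack messages but do not alter the commitment logic).

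\emph{Comparability and Stability.} These are the easy directions. For Stability, once a proposer enters state $decided$ at Line~\ref{sbs:decideonce} it never again modifies $Proposed\_set$ nor invokes {\sc DECIDE}, so a single decision value is produced. For Comparability, given two correct proposers $p_i$ and $p_j$ with decisions $decide_i$ and $decide_j$, Observation~\ref{obsobv} tells us both underlying $Proposed\_set$'s are committed in the sense of Definition~\ref{def:commitedproposal}. Applying Lemma~\ref{Lemmacomm} to the one committed first shows that it is contained in the other, yielding comparability.

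\emph{Inclusivity.} This is immediate from Lemma~\ref{sbs:safecorrect}: a correct proposer $p_i$ signs and inserts its initial value at Line~\ref{sbs:insertvalue}, and the lemma guarantees this value appears in $decide_i$ whenever $p_i$ decides.

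\emph{Non-Triviality.} This is the main obstacle and the only place where the signature mechanics genuinely do work. I have to show that $decide_i \leq \bigoplus (X \cup B)$ with $B \subseteq E$ and $|B| \leq f$. The argument would run in three steps. First, every value that a correct proposer ever inserts into its $Proposed\_set$ (either at Line~\ref{sbs:insertsafe} after the safetying phase, or through a refinement at Line~\ref{sbs:refine1}) is a safe value in the sense of the definition preceding Lemma~\ref{sbs:safeensure}; this follows because the conditions in {\sc AllSafe} together with the guard at Line~\ref{sbs:refguard} (and the corresponding check at Line~\ref{allsafe:accp} on the acceptor side) require $\lfloor (n+f)/2 \rfloor+1$ valid, non-conflicting $safe\_ack$ signatures. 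Second, by Lemma~\ref{sbs:safeensure}, for each process $p_j \in P$ at most one value signed by $p_j$ can be safe. Therefore the total contribution of Byzantine signers to any correct $Proposed\_set$ is at most $f$ distinct values, giving the set $B$ with $|B|\leq f$. Third, each such value is in $E$: a correct proposer accepts it into $Safety\_set$ only after the check $value \in E$ on delivery of $<init\_phase,\ldots>$, and the {\sc AllSafe} predicate requires $v\in E$ for values introduced via nack-refinement; hence $B \subseteq E$. Combining these three observations, $decide_i$ is the union of values drawn from the initial proposals of correct processes (contributing a subset of $X$) plus at most $f$ values from $E$ originating from Byzantines (the set $B$), which is exactly Non-Triviality.
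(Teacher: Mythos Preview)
Your proof is correct and follows essentially the same approach as the paper's own proof: both invoke Lemma~\ref{Lemmacomm} and Observation~\ref{obsobv} for Comparability, Lemma~\ref{sbs:safecorrect} for Inclusivity, Line~\ref{sbs:decideonce} for Stability, and the safety of inserted values together with Lemma~\ref{sbs:safeensure} for Non-Triviality. Your treatment of Non-Triviality is in fact more careful than the paper's one-line justification, as you spell out explicitly why every value reaching a correct $Proposed\_set$ is safe (via {\sc AllSafe} at Lines~\ref{sbs:refguard} and~\ref{allsafe:accp}) and why membership in $E$ is preserved, which the paper leaves implicit.
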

\begin{proof}
We prove each property separately. 
\begin{enumerate}
 \item is implied by Lemma~\ref{Lemmacomm}, Observation~\ref{obsobv} and the fact that a proposer refuses to process nacks containing values that are not safe (see if at line~\ref{sbs:refguard}).
 \item by Lemma~\ref{sbs:safecorrect}. 
\item the bound on $B$ and $X$ derives from the safety of each value inserted in a decision and Lemma~\ref{sbs:safeensure}.
\item  is ensured by line~\ref{sbs:decideonce} in the code run by proposers.  
\end{enumerate}\end{proof}

\paragraph{Liveness} The main result of this section is a bound of ${\cal O}(f)$ on the number of times a proposer refines its proposal. This bound will be needed to prove that our algorithm uses a quadratic number of messages.  

\begin{Lemma}\label{lemmaDelays}
Let us consider an execution of SbS. Any correct process eventually executes line~\ref{sbs:proposalbroadcast}, and when it does its $Proposed\_set$ contains at least $|X \cup B|-2f$ values.  
\end{Lemma}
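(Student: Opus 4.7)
The plan is to establish the lemma by tracking a correct process through the three phases (init, safetying, proposing) and showing that (i) it eventually moves from one to the next, and (ii) when it finally executes line~\ref{sbs:proposalbroadcast}, enough values survive the conflict-filtering to meet the stated bound.

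First I would show that every correct process leaves the init phase. Each of the $n-f$ correct proposers executes line~\ref{sbs:firstbroadcast}, so eventually $p_i$ delivers at least $n-f$ init messages carrying correctly signed values from the correct processes. None of these are in conflict with each other (a correct sender signs exactly one value), so {\sc RemoveConflicts} cannot drop them. Hence $|Safety\_set| \geq n-f$ in finite time and the guard at line~\ref{start:safetying} fires, making $p_i$ broadcast $<\!safe\_req, Safety\_set\!>$.

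Next I would show that $p_i$ collects enough safe\_acks. Each correct acceptor delivering this safe\_req passes the verification at line~\ref{asafereq} and replies with a signed message whose $Rcvd\_set$ equals $p_i$'s $Safety\_set$ and whose $Conflicts$ is the output of {\sc ReturnConflicts}; by construction every pair in it satisfies {\sc VerifyConfPair}. Therefore such replies pass the check at line~\ref{safeack:check}. Since $n \geq 3f+1$ implies $n-f \geq \lfloor (n+f)/2\rfloor +1$, the threshold is eventually met and $p_i$ reaches the procedure at line~\ref{procedure:safe} and executes line~\ref{sbs:proposalbroadcast}.

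Finally I would bound $|Proposed\_set|$ at that moment. A value $<v,s> \in Safety\_set$ is excluded only if some accepted safe\_ack contains a pair $(<v,s>,<v',s>)$ (or its reverse) in its $Conflicts$. By {\sc VerifyConfPair}, such a pair requires two distinct values $v \neq v'$ both carrying valid signatures of the same sender $s$. Unforgeability of signatures plus the fact that every correct process signs only a single value (line~\ref{sbs:insertvalue}, executed once per process) imply that $s$ must be Byzantine. So at most $f$ elements of $Safety\_set$ can be filtered out, leaving $|Proposed\_set| \geq |Safety\_set| - f \geq n - 2f$. Since every value in $Safety\_set$ originated from an init-phase message (either from a correct process, contributing to $X$, or from a Byzantine process, contributing at most one safe representative per sender to $B$), we have $|X \cup B| \leq n$, hence $|Proposed\_set| \geq |X\cup B| - 2f$, as required.

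The main obstacle is the last step: one must be careful that Byzantine acceptors cannot fabricate a conflict against a correct sender's value (since each conflict pair must carry two verifiable signatures from the same sender) and that Byzantine proposers cannot inflate $Safety\_set$ with many distinct signed values (since {\sc RemoveConflicts} keeps at most one value per sender, and any Byzantine sender contributing a conflict is accounted for in the $f$ possible exclusions). Aside from this, the argument is a straightforward liveness chase through the phases.
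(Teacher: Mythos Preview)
Your proof is correct and follows essentially the same approach as the paper's: establish liveness through the init and safetying phases (using that $n-f$ correct processes broadcast, and $n-f \geq \lfloor (n+f)/2\rfloor +1$ correct acceptors reply with well-formed safe\_acks), then bound $|Proposed\_set|$ by observing that only values signed by Byzantine senders can appear in any $Conflicts$ set, so at most $f$ elements of $Safety\_set$ are filtered out. The paper phrases the first bound as $|Safety\_set|\geq |X\cup B|-f$ and invokes Lemma~\ref{sbs:safecorrect} for the exclusion step, whereas you argue $|Safety\_set|\geq n-f$ and unfold the unforgeability argument directly, but since $|X\cup B|\leq n$ these are equivalent.
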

\begin{proof}
Since at most $f$ processes are faulty line~\ref{start:safetying} will be eventually executed on each proposer, and when it happens its $Safety\_set$ contains at least $|X \cup B|-f$ values.
It remains to show that line~\ref{sbs:proposalbroadcast} is eventually executed and that at most $f$ values are not inserted in $Proposed\_set$  during the loop at line~\ref{sbs:insertsafe}.
First of all notice that at least $(n-f)$ acceptors will reply to the $safe\_req$ sent at line~\ref{safereq:send}, this ensures that eventually the procedure at line~\ref{procedure:safe} will be executed. 
Moreover, we have that the $Safety\_set$ sent at line~\ref{safereq:send} contains at most $f$ values of Byzantine processes. By the same reasoning used in the proof of Lemma~\ref{sbs:safecorrect} we have that at least
 $|X \cup B|-2f$ values are from correct proposers and will be sent at line ~\ref{sbs:proposalbroadcast}.
\end{proof}

\begin{Lemma}\label{LemmaX}
A correct proposer refines its proposal (executing line~\ref{sbs:refine}) at most $2f$ times. 
\end{Lemma}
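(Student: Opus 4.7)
The plan is to pin the growth of $Proposed\_set$ between a known lower bound (from Lemma~\ref{lemmaDelays}) and an absolute cap coming from Lemma~\ref{sbs:safeensure}, and then observe that every refinement consumes at least one unit of headroom between them.

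First I would invoke Lemma~\ref{lemmaDelays}: at the moment the correct proposer first executes line~\ref{sbs:proposalbroadcast}, its $Proposed\_set$ already contains at least $|X\cup B|-2f$ entries. From that moment on $Proposed\_set$ can only grow, and line~\ref{sbs:refine1} is the only place where a new element can be inserted. The guard of line~\ref{sbs:refguard} (the check $Proposed\_set \neq Rcvd\_set \cup Proposed\_set$) makes each execution of line~\ref{sbs:refine} strictly enlarge $|Proposed\_set|$ by at least one value.

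Next I would argue that every value ever appearing in $Proposed\_set$ is \emph{safe} in the sense of the definition preceding Lemma~\ref{sbs:safeensure}. Initial insertions at line~\ref{sbs:insertsafe} come from a $Safety\_set$ endorsed by $\lfloor (n+f)/2 \rfloor +1$ acceptors whose $safe\_ack$ messages do not list $v$ as a conflict, which is precisely the definition. Refined insertions are gated by the AllSafe predicate (checked both in the proposer at line~\ref{sbs:refguard} and redundantly in the acceptor at line~\ref{allsafe:accp}); inspecting the helper AllSafe procedure, one sees that it certifies exactly the same quorum of distinct, well-signed, non-conflicting $safe\_ack$ messages per value. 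Consequently every element of $Proposed\_set$ carries a safety certificate, and Lemma~\ref{sbs:safeensure}, which permits at most one safe value per signer, caps the total number of distinct safe values by $|X\cup B|$ (since the signers whose signatures can ever appear in an init-phase broadcast are exactly the contributors of $X\cup B$).

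Putting the three ingredients together yields the conclusion: starting from $|X\cup B|-2f$ safe values, strictly adding at least one fresh safe value per refinement, and never exceeding $|X\cup B|$ total safe values, the number of executions of line~\ref{sbs:refine} is bounded by $|X\cup B|-(|X\cup B|-2f)=2f$. The delicate point, and where I expect to spend most of the care, is the second paragraph: verifying that the AllSafe predicate really certifies safety (in particular, the clauses on distinct senders and valid signatures inside AllSafe) so that Lemma~\ref{sbs:safeensure} applies uniformly to every value that ever enters the proposer's $Proposed\_set$, not just to those inserted from the initial $Safety\_set$.
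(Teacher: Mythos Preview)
Your proposal is correct and follows essentially the same approach as the paper: invoke Lemma~\ref{lemmaDelays} for the lower bound $|X\cup B|-2f$ on the initial $Proposed\_set$, observe that each pass of the guard at line~\ref{sbs:refguard} strictly enlarges the set with values that are safe (because of the \textsc{AllSafe} check), and cap the total number of safe values by $|X\cup B|$ via Lemma~\ref{sbs:safeensure}. Your added care in verifying that the \textsc{AllSafe} predicate indeed certifies the safety definition is a welcome elaboration of a step the paper leaves implicit, but the overall structure is the same.
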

\begin{proof}
Each time a correct proposer executes line~\ref{sbs:refine}, it passes the if at line~\ref{sbs:refguard}, thus increasing its $Proposed\_set$. However, its first proposal, in line~\ref{firstproposal}, contains at least $|X \cup B|-2f$ values. Since
there are at most $|X \cup B|$ safe values (from Lemma~\ref{sbs:safeensure}) and the if at line~\ref{sbs:refguard} can be passed only when all values inside $Rcvd\_set$ are safe, the claim follows.
\end{proof}

\begin{Lemma}\label{LemmaTF}
When a correct acceptor receives a message $m=<ack\_req,Rcvd\_set, x>$ from a correct proposer it answers with an ack or a nack. 

\end{Lemma}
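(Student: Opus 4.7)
The plan is to show that whenever a correct proposer $p$ sends $\langle ack\_req, Rcvd\_set, x\rangle$, every entry $\langle v, Acks\rangle \in Rcvd\_set$ carries a valid safety proof, so that the guard {\sc AllSafe}$(Rcvd\_set)$ at line~\ref{allsafe:accp} of Algorithm~\ref{sbs:two} holds, forcing the correct acceptor to take one of the two explicit branches (ack or nack).

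First I would inspect the only two places where a correct proposer inserts an element into $Proposed\_set$: line~\ref{sbs:insertsafe}, executed inside the procedure starting at line~\ref{procedure:safe}, and line~\ref{sbs:refine1}, executed inside the nack-handler just before line~\ref{sbs:refine}. In the first case, the guard of the procedure ensures $|Safe\_acks|\geq \lfloor(n+f)/2\rfloor+1$ and the for-loop explicitly filters out values that appear in any $Conflicts$ component of a message in $Safe\_acks$; moreover, each message in $Safe\_acks$ was accepted by the check at line~\ref{safeack:check}, which verified signatures, that the $Rcvd\_set$ inside equals the proposer's $Safety\_set$ (so $v$ belongs to that set), and that every pair in $Conflicts$ is a true conflict. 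Together with $v\in E$ (guaranteed at init-phase reception) and the fact that $Safe\_acks$ contains messages from distinct senders of which none is known-Byzantine, all conjuncts inside {\sc AllSafe} are satisfied for the pair $\langle v, Safe\_acks\rangle$. In the second case, line~\ref{sbs:refine1} is reached only after the guard at line~\ref{sbs:refguard}, which itself calls {\sc AllSafe}$(Rcvd\_set)$; thus every newly added entry already carries a proof that {\sc AllSafe} accepts.

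Next I would combine these two observations with the trivial fact that entries are never removed from $Proposed\_set$ and never mutated. Hence at every moment the set $Proposed\_set$ broadcast at line~\ref{sbs:proposalbroadcast} or at line~\ref{sbs:refine} consists entirely of pairs for which {\sc AllSafe} returns true; since {\sc AllSafe} applied to a set is a conjunction over its elements, {\sc AllSafe}$(Proposed\_set)=\mathit{True}$. When a correct acceptor receives this message it therefore enters the inner \textbf{if} block of Algorithm~\ref{sbs:two} and, depending on the comparison between $Accepted\_set$ and $Rcvd\_set$, replies with either an $ack$ or a $nack$.

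The only delicate step is the book-keeping around safety proofs built from nacks in the second case: one needs to argue that the $Safe\_acks$ component inside each $\langle v,Safe\_acks\rangle \in Rcvd\_set$ coming from a nack indeed certifies $v$ under {\sc AllSafe} at the acceptor, not only at the sender. Since {\sc AllSafe} only inspects signatures and set-membership of $v$ inside the attached messages, and signatures are verifiable by anybody, this property is preserved across processes, which closes the argument.
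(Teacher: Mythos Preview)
Your proof is correct and follows the same approach as the paper: both argue that every element a correct proposer places in $Proposed\_set$ carries a valid safety proof, so {\sc AllSafe} at line~\ref{allsafe:accp} succeeds and the acceptor must reply. The paper's proof is a one-line assertion of this fact, whereas you spell out the two insertion points (line~\ref{sbs:insertsafe} and line~\ref{sbs:refine1}) and verify each conjunct of {\sc AllSafe} explicitly, including the useful observation that {\sc AllSafe} depends only on signatures and membership data and is therefore evaluated identically at the acceptor as at the proposer.
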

\begin{proof}
If a correct acceptor sends a message $m$ then each value in $Rcvd\_set$ is safe, thus, when $m$ is delivered by a correct acceptor, the if at line~\ref{allsafe:accp} will be passed and the acceptor will answer with an ack or a nack. 
\end{proof}

{\it Mutatis mutandis}, an analogous of Lemma~\ref{Lemmaeve} holds also for the proposing phase of SbS:

\begin{Lemma} \label{SbS:Lemmaeve}
If there is a time $t$ after which a correct proposer $p_i$ in state $proposing$  cannot execute line~\ref{sbs:refine}, then $p_i$ eventually decides. 
\end{Lemma}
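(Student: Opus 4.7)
(\emph{Plan})
The plan is to mirror the argument used for Lemma~\ref{Lemmaeve} in WTS, adapting the bookkeeping to account for the extra safety certificates and signature checks that SbS attaches to every value. Let $m = \langle ack\_req, Proposed\_set, ts\rangle$ denote the last $ack\_req$ broadcast by $p_i$ (it exists because at least one is issued at line~\ref{sbs:proposalbroadcast}, and any execution of line~\ref{sbs:refine} would contradict the hypothesis). Since $p_i$ is correct, $m$ is eventually delivered at every correct acceptor. Moreover, by construction $Proposed\_set$ is only populated at lines~\ref{sbs:insertsafe} and~\ref{sbs:refine1} with entries that have passed the $Safe\_acks$-based safety test, so the predicate {\sc AllSafe}$(Proposed\_set)$ at line~\ref{allsafe:accp} evaluates to true for every correct acceptor that processes $m$. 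Combining this observation with Lemma~\ref{LemmaTF}, every correct acceptor replies to $m$ with either an $ack$ or a $nack$.

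Next I would rule out the $nack$ branch for correct acceptors. Assume, for contradiction, that a correct acceptor $q$ replies to $m$ with $\langle nack, Rcvd\_set, ts\rangle$. By the acceptor code, $q$'s $Accepted\_set$ (which is what is shipped as $Rcvd\_set$) is not a subset of $Proposed\_set$, so $Rcvd\_set \cup Proposed\_set \neq Proposed\_set$. Since $q$ only updates its $Accepted\_set$ after the {\sc AllSafe} check at line~\ref{allsafe:accp}, every value in $Rcvd\_set$ carries a valid safety certificate, whence {\sc AllSafe}$(Rcvd\_set)$ holds at $p_i$. Finally, $byz[q]$ remains $False$ throughout the execution, because the flag is set only upon receiving messages that fail signature verification or that blatantly contradict the acceptor protocol, neither of which a correct $q$ can produce. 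Hence all three conjuncts of the guard at line~\ref{sbs:refguard} are satisfied, so $p_i$ would execute line~\ref{sbs:refine} after time $t$, contradicting the hypothesis.

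Therefore every correct acceptor replies to $m$ with an $ack$, and $p_i$ eventually places the corresponding pairs in $Ack\_set$. Because $|C| = n-f \geq \lfloor (n+f)/2\rfloor + 1$ under $n\geq 3f+1$, the guard $|Ack\_set|\geq \lfloor (n+f)/2\rfloor+1$ is enabled and $p_i$ decides at line~\ref{sbs:decideonce}. The main obstacle—and what separates this proof from the WTS version—is certifying the two hidden conditions that make the refine-guard pass whenever a correct acceptor nacks: that $Accepted\_set$ cannot contain values without safety proofs (handled by the acceptor's own {\sc AllSafe} gate), and that $byz[q]$ is never spuriously raised against a correct $q$ (handled by the verification conditions that govern its only assignment). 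Once these are in place, the counting argument for the decision step is a direct adaptation of Lemma~\ref{Lemmaeve}.
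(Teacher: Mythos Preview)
Your proposal is correct and follows essentially the same approach as the paper's proof: take the last $ack\_req$, argue via Lemma~\ref{LemmaTF} that every correct acceptor answers, rule out the nack branch by showing it would enable line~\ref{sbs:refine}, and count the resulting acks. The paper's version is much terser and leaves implicit the two points you spell out (that {\sc AllSafe} on a correct acceptor's $Accepted\_set$ transfers to $p_i$, and that $byz[q]$ is never raised against a correct $q$), so your added detail is welcome rather than a departure.
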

\begin{proof}

Let $<ack\_req,Proposed\_set,ts>$ be the last ack request message sent by $p_i$. Since $p_i$ does not execute line~\ref{sbs:refine1} it means that either it does not receive any nack, or that any nack it receives 
does not allow him to pass the if line~\ref{sbs:refguard}. Since $p_i$ is correct its message  $<ack\_req,Proposed\_set,ts>$ will reach each correct acceptor. By hypothesis each of them will send a ack, otherwise $p_i$ should
be able to execute line~\ref{sbs:refine1} (they all handle the ack request by Lemma~\ref{LemmaTF}). Once $p_i$ receives the acks from the set of correct acceptors it will handle them, Lemma~\ref{LemmaTF}, and decides. 
\end{proof}

\begin{theorem}\label{liveness:sbs}
Let us consider a set of processes, of size at least $(3f+1)$, executing SbS algorithm. 
Every correct proposer decides in at most $5+4f$ message delays. 
\end{theorem}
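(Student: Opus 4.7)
The plan is to follow the same phase-by-phase delay accounting used in the proof of Theorem~\ref{liveness:wts}, but now applied to the three phases of SbS (Init, Safetying, Proposing), using the refinement bound $2f$ from Lemma~\ref{LemmaX} in place of the bound $f$ used for WTS. Since the proof relies on previously established lemmas, the goal is really to certify that the delays add up to $5+4f$.

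First, I would bound the Init phase. A correct proposer $p_i$ executes the single point-to-point broadcast at Line~\ref{sbs:firstbroadcast}; note that this is an ordinary broadcast (not the reliable broadcast used in WTS), so it costs only $1$ message delay. Since there are at least $(n-f)$ correct proposers, each correct process receives $(n-f)$ valid $init\_phase$ messages within that single delay and therefore triggers the guard at Line~\ref{start:safetying}.

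Next, the Safetying phase adds $2$ delays: by Lemma~\ref{lemmaDelays}, every correct proposer eventually sends its $safe\_req$ at Line~\ref{safereq:send} (that is one delay), and by construction every correct acceptor that receives it replies with a signed $safe\_ack$ at Line~\ref{aconf:1} (one more delay). Thus every correct proposer collects $\lfloor(n+f)/2\rfloor+1$ $safe\_acks$ and triggers the procedure at Line~\ref{procedure:safe} within $2$ additional message delays, entering state $proposing$ and sending its first $ack\_req$ at Line~\ref{sbs:proposalbroadcast} with no further delay.

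Finally, for the Proposing phase, I would argue as follows. By Lemma~\ref{LemmaX}, a correct proposer performs at most $2f$ refinements at Line~\ref{sbs:refine}. Each iteration (issuing an $ack\_req$ and receiving the matching ack/nack from correct acceptors by Lemma~\ref{LemmaTF}) costs $2$ message delays, and by Lemma~\ref{SbS:Lemmaeve} once refinements cease the proposer decides within the next $2$ delays (send the last $ack\_req$, receive acks). Hence the Proposing phase takes at most $2(2f+1)=4f+2$ message delays in total. Summing all three phases yields $1+2+(4f+2)=4f+5$ delays, which is the claimed bound. The only subtlety I foresee is making precise the claim that entering the Safetying phase happens within one delay after the Init broadcast (because one must argue that $(n-f)$ of those broadcasts are from correct proposers and hence are actually delivered); this is immediate from $|C|\ge n-f$ and the reliability of channels, but deserves a sentence in a full write-up.
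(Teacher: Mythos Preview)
Your proposal is correct and follows essentially the same phase-by-phase accounting as the paper's own proof: $1$ delay for the Init broadcast, $2$ for the Safetying round-trip, and $2(2f+1)=4f+2$ for the Proposing phase via Lemma~\ref{LemmaX} and Lemma~\ref{SbS:Lemmaeve}, summing to $5+4f$. If anything, you are slightly more explicit than the paper in separating the first $ack\_req$ round-trip from the $2f$ refinements and in flagging the $|C|\ge n-f$ subtlety for the Init phase.
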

\begin{proof}
The broadcast at line~\ref{sbs:firstbroadcast} takes at most $1$ message delay. Therefore after one message delay each correct process starts its safetying phase. This phase takes two message delays: one to reach the acceptors, and one to receive the response back. Each refinement takes at most $2$ message delays, the time needed to broadcast
and receive a response. There are at most $2f$ refinements, see Lemma~\ref{LemmaX}. Executed in at most $4f$ message delays, by Lemma~\ref{SbS:Lemmaeve} after $5+4f$ message delay a correct has to decide.
\end{proof}

\paragraph{Message Complexity}
We now bound the message complexity by counting the maximum number of messages needed by a correct proposer to decide. We do it phase by phase:
\begin{itemize}
\item Init phase: each correct proposer executes a single broadcast. Cost ${\cal O}(n)$. 
\item Safetying phase: each correct proposer executes a single broadcast, such broadcast triggers at most one point-to-point response from each correct acceptor. Cost ${\cal O}(n)$.
\item Proposing phase: each correct proposer executes a single broadcast for each proposal refinement and one broadcast for its initial proposal. Each of these broadcasts  triggers at most one point-to-point response from each correct acceptors. Cost ${\cal O}(f \cdot n)$.
\end{itemize}
Summing the above costs we have a bound of ${\cal O}(n)$ messages. 

\subsection{Generalising SbS}
Adapting the SbS algorithm to its generalised version, while keeping the message complexity improvement, needs a special attention to substitute the reliable broadcast used to acknowledge in the GWT
(line~\ref{rback} of Algorithm~\ref{gtws:acceptor}). We would like to replace such broadcast with a single point-to-point message. 

Conceptually, that reliable broadcast has two functions: (1) it implicitly disseminates to all proposers the fact that someone is able to decide in a certain round (sharing also a possible decision value),
(2) it forces a Byzantine process to broadcast the same ack message to all proposers, enforcing the ``publicity" of such action. 

These two functions can be replaced by two modifications. 
The second function could be replaced by forcing an acceptor to sign its, now poin-to-point, ack. Intuitively this allows a proposer to prove others that it 
received a certain ack for its proposal request. The first function is replaced by letting any correct proposer broadcast a special decided message before deciding, such message has attached 
all the acks used to decide. This would allow proposers and acceptors that receive a decided message to know that the sender of such message was allowed to decide by
the algorithm rules (recall that acks are now signed). Similarly to what happen in GWTS, a round $r$ ends when someone broadcasts a well-formed decided message for that round.  Where with well-formed we mean that 
such decided message has attached $\lfloor (n+f)/2 \rfloor+1$ signed acks. 
Therefore, a correct acceptor will trust a round $r$, only if it trusted round $(r-1)$  and it knows that rounds $(r-1)$  terminated (this knowledge derives from seeing a decided message for round $(r-1)$ ).
Finally, to avoid the reliable broadcast of decided messages, a correct acceptor piggybacks a decided message, that it has seen for a round $r$, in the replies to ack\_requests for round $r$. 
\paragraph{Message Complexity}
The message complexity follows the same lines of the analysis in Section \ref{sec:gwtmessagecomplexity}. However, since we are not using the Byzantine reliable broadcast, messages per decision on each proposer are at most ${\cal O}(f \cdot n)$.

\section{Conclusions} % (fold)
\label{sec:conclusions}
We investigated Byzantine Lattice Agreement and we used it to build a byzantine tolerant RSM with commutative updated. There are several directions meriting investigation. 
Our main interest is to understand whether a message delay of ${\cal O}{(f)}$ is necessary or not. In the crash-stop model a recent paper \cite{Garg:2018b} has shown that ${\cal O}{(\log f)}$ delays are sufficient.
Therefore, a first step would be to investigate if the technique used in \cite{Garg:2018b} could be ``Byzantined" while preserving the desirable delay. A final target is to understand the necessary number of message delays.
This would be extremely interesting, even in the crash-stop model such knowledge is still missing. 
% section conclusions (end)

\bibliographystyle{IEEEtran}
\bibliography{main} 

\end{document}